\documentclass[pdflatex,sn-mathphys-num,iicol]{sn-jnl}

\usepackage[english]{babel}
\usepackage{mathrsfs}
\usepackage{physics}
\usepackage{amsmath,amssymb}
\usepackage{amsthm}

\newcounter{common}[section]

\theoremstyle{thmstyleone}
\newtheorem{theorem}[common]{Theorem}

\newtheorem{corollary}[common]{Corollary}

\theoremstyle{thmstyletwo}
\newtheorem{remark}[common]{Remark}

\theoremstyle{thmstylethree}

\begin{document}

\title[A quantum theory of gravity]{A quantum theory of gravity}

\author*[1]{\fnm{Antonio} \sur{Pe{\~n}a Pe{\~n}a}}
\email{antoniopenapena52@gmail.com}

\affil*[1]{\orgdiv{Higher Polytechnic School},
  \orgname{Autonomous University of Madrid},
  \orgaddress{\postcode{28049}, \city{Cantoblanco}, \country{Spain}}}

\abstract{We develop a covariant Hamiltonian formalism in which evolution is
measured along the proper-time flow of an observer congruence, replacing the
external coordinate-time derivative with the Lie derivative. Our formalism
keeps the Hamiltonian split between evolution and configuration variables and
expresses it through spacetime tensors projected relative to the chosen
congruence. After quantization, the resulting observer-congruent brackets lead
to relativistic Heisenberg and Schr{\"o}dinger equations whose evolution
parameter is proper time. We then apply this framework to gravity by taking the
projected spatial metric and its conjugate momentum as canonical variables,
obtaining a complete model of canonical quantum gravity. Among the predictions
of this theory we find a resolution to the singularity problems in cosmology,
black holes and black hole thermodynamics.}

\keywords{quantum gravity, black holes, black hole thermodynamics, cosmology,
canonical quantum gravity}

\maketitle

\section{Introduction}

As much as textbooks try to convince us otherwise, we still do not have a
completely satisfactory relativistic theory of quantum physics. For example,
Dirac's equation is capable of accurately describing
electrons moving at relativistic speeds and successfully models the spin
of fermions \cite{Dirac1928Electron} but it still measures its evolution
along a coordinate time labelled by $t$. In no part of the standard
formalism of quantum mechanics in the Schrödinger or Heisenberg pictures
do we see the appearance of the proper time
$\tau$ of the observer who performs the measurement.

Covariant formulations of quantum field theory such as the
Tomonaga--Schwinger equation already soften this by replacing equal-time
slices with arbitrary spacelike hypersurfaces
\cite{Tomonaga1946Invariant,Schwinger1948QED}, and
path-integral methods also make spacetime covariance more transparent
\cite{Feynman1948PathIntegral}; but they do not fully solve the time problem
in quantum mechanics. Even parametrized and proper-time formulations
of relativistic dynamics, which promote an invariant evolution parameter to a
primitive role \cite{Stueckelberg1942,HorwitzPiron1973}, still do not make the
proper-time flow of a congruence of observers the primitive notion of quantum
evolution.

This ``problem of time'', as it is often called, is one of the
greatest obstacles to devising a quantum theory of gravity
\cite{Kuchar2011Time,Isham1993Canonical,Anderson2012Problem}. The object
that governs quantum physics is the Hamiltonian and, due to its very
nature, the Hamiltonian formalism asks us to separate evolution from
configuration. In the standard canonical language this means foliating
spacetime into $1+3$ dimensions, where time is separated from space, fields are
placed on a spatial hypersurface, and the Hamiltonian is interpreted as
the generator of evolution from one hypersurface to another. This is
natural in non-relativistic mechanics but it is conceptually uncomfortable
in relativity where coordinates do not have invariant meaning
and where no preferred global time exists.

The problem is aggravated in gravity. In Special Relativity (SR) one may still
choose an inertial frame and use its coordinate time as a practical external
parameter. In General Relativity (GR) this is no longer a choice because the metric itself is
dynamical and the causal structure of spacetime is part of the system. Thus,
the canonical formalism seems to require two ingredients foreign to the
geometric spirit of relativity; a breaking of manifest covariance through the
choice of a time direction, and an external time parameter with respect to
which quantum states evolve. The usual Hamiltonian formalism
therefore works extremely well as a calculational device, but it does not
allow us to model evolution relative to an observer.

The canonical approach to gravity is motivated by this tension. The ADM
formalism rewrites GR as a Hamiltonian system by decomposing the spacetime
metric into spatial geometry and embedding data \cite{Arnowitt1962Dynamics}.
Its quantization leads to the Wheeler--DeWitt equation, in which the quantum
state is a functional of the spatial metric and matter fields
\cite{DeWitt1967Quantum}. However, because the Hamiltonian appears as a
constraint rather than as an ordinary generator of external-time evolution
one obtains a timeless kernel equation $\hat{H}\Psi=0$ rather
than an ``eigenvalue equation'' given by
$i\hbar\partial_t\Psi=\hat{H}\Psi$.

The canonical program did not stop at these beginnings. Its roots lie in
Dirac's generalized Hamiltonian dynamics for constrained systems
\cite{Dirac1950GenHamCJM,Dirac1958GenHam} and his Hamiltonian form of
gravitation \cite{Dirac1958HamiltonianGR}, together with the covariant
constraint analysis of Anderson and Bergmann \cite{AndersonBergmann1951}.
ADM then recast this structure as canonical variables and a definition of
gravitational energy
\cite{Arnowitt1962Dynamics,ADM1959DynamicalStructure,ADM1960Canonical}. Making
the variational problem well posed and giving the energy an invariant meaning
later required the boundary terms of York and of Gibbons--Hawking
\cite{York1972ConformalThreeGeometry,GibbonsHawking1977Action} and the
quasilocal surface charges of Brown and York \cite{BrownYork1993Quasilocal}. A
decisive subsequent step was Ashtekar's reformulation in connection variables
\cite{Ashtekar1986NewVariables,Ashtekar1987NewHamiltonian}, which rendered the
constraints polynomial and seeded the loop program discussed below. The
construction of the present paper draws on this same lineage---constraint
analysis, projected geometry, and boundary energy---while replacing the
coordinate-time foliation with an observer congruence.

Many other approaches have been developed. Perhaps the most straightforward
one is to avoid the Hamiltonian formalism at all and work with
covariant path integrals or sum-over-geometries. The Hartle--Hawking
no-boundary proposal is a famous example of this approach in quantum
cosmology \cite{HartleHawking1983WaveFunction}, and the question of how to
define its contour has remained active, as in the recent Lorentzian and
Picard--Lefschetz analyses \cite{FeldbruggeLehnersTurok2017}. These methods make
covariance more explicit, but come at the price of leaving the measure,
contour, class of geometries, and nonperturbative definition of the path
integral obscure.

Perturbative quantum gravity takes a different route; one expands the metric
around a fixed background and quantizes the fluctuation field, staying within
ordinary quantum field theory. It also allows us to compute scattering amplitudes, but
pure Einstein gravity is not perturbatively renormalizable in four dimensions.
The one-loop and two-loop analyses show that new counterterms are unavoidable
once gravity is treated as an ordinary quantum field theory
\cite{tHooftVeltman1974OneLoop,GoroffSagnotti1985TwoLoop}. Still, as an
effective field theory GR is perfectly meaningful at low energies and
predicts universal quantum corrections to long-distance gravitational
interactions \cite{Donoghue1994EFT}. This is an interesting result, but it is
not a fundamental ultraviolet completion.

Other ultraviolet proposals modify the high-energy behavior of gravity more
radically. Higher-derivative gravity improves power counting and can be made
perturbatively renormalizable, but the standard local quadratic theory
introduces massive ghost degrees of freedom \cite{Stelle1977Renormalization}.
Asymptotic safety instead proposes that quantum gravity may be
non-perturbatively renormalizable because the renormalization-group flow
approaches a non-Gaussian ultraviolet fixed point
\cite{Weinberg1979UV,Reuter1998Evolution}, an idea since developed into a
functional renormalization-group program with mounting evidence for such a
fixed point \cite{ReuterSaueressig2012}. String theory replaces point
particles by extended objects and thereby softens ultraviolet divergences
while also incorporating a massless spin-two excitation naturally interpreted
as the graviton
\cite{GreenSchwarzWitten1987Superstring,Polchinski1998StringTheory}; it has
since matured through D-branes \cite{Polchinski1995Dbranes} and the
holographic gauge/gravity duality \cite{Maldacena1998AdSCFT}. These
approaches are well-defined and highly developed, but they also introduce additional
structures that are not obviously forced by the original Hamiltonian problem.

There are also background-independent and discrete approaches. Loop quantum
gravity quantizes geometry non-perturbatively using connection variables and
spin-network states
\cite{RovelliSmolin1995SpinNetworks,AshtekarLewandowski2004Status,Rovelli1998LoopReview},
and has been applied to cosmology, where it replaces the big-bang singularity
by a bounce \cite{AshtekarSingh2011LQC}.
Spin-foam models provide a covariant transition-amplitude version of the same
general program \cite{Perez2013SpinFoam}. Causal dynamical triangulations
define a non-perturbative sum over geometries by regulating spacetime with
causally well-behaved triangulations
\cite{AmbjornJurkiewiczLoll2004Emergence,AmbjornJurkiewiczLoll2010Quest,Loll2020CDTReview}.
Causal set theory replaces the continuum by a locally finite partially
ordered set whose order relation encodes causality
\cite{Bombelli1987CausalSet,Surya2019CausalSets}. Group field theory, Ho\v{r}ava--Lifshitz
gravity, and noncommutative-geometric models provide still other ways of
modifying the microscopic description of spacetime
\cite{Oriti2006GroupField,Horava2009Lifshitz,Connes1995Noncommutative}.
However, these approaches are hindered by the difficulty of arriving
at testable predictions, as well as justifying their derivation. No universally
accepted formulation among these has yet emerged.

The present work follows a more conservative route. Instead of starting by
changing the microscopic ontology of spacetime we start by asking whether
the Hamiltonian formalism itself can be made fully relativistic in the
sense required by GR. In this formalism the ordinary time derivative
$\partial_t$ is replaced by an observer-congruence evolution derivative
$D^\parallel$ so that evolution is measured along the flow of a unit
timelike vector field $n^\mu$. The parameter of evolution is of course
the proper-time parameter of the observer congruence.
Correspondingly, spatial change is described only after projection onto the
local rest space orthogonal to $n^\mu$. In this way the split between
evolution and configuration is kept---because a Hamiltonian formalism needs
such a split---but the split is made covariant and observer-dependent rather
than absolute.

This places the present work within the long tradition of spacetime-splitting
techniques relative to a unit timelike vector field, developed independently by
the Russian, American, Italian, French, and German schools and later unified in
the gravitoelectromagnetic (GEM) splitting formalism of Jantzen, Carini, and
Bini \cite{JantzenCariniBini1992GEM}. We review this provenance at the beginning
of the next section, from which all our kinematical objects are taken.

There are differences between our formalism and that of ADM, though.
We do not begin by selecting a coordinate time and
decomposing the metric into lapse, shift, and spatial components; instead,
we choose an arbitrary, physical congruence of observers and define evolution
along its flow. Both setups share the same global-hyperbolicity assumptions and
the formalism still uses spacelike hypersurfaces when the congruence is
hypersurface orthogonal, but the fundamental objects here are \textit{spacetime}
tensors ($g_{\mu\nu}$ versus $g_{ij}$), projectors, and Lie derivatives $\pounds_n$
along $n^\mu$. So our construction
keeps the Hamiltonian idea of evolution while taking the evolution parameter to
be the proper time measured along the congruence.

To do so we work in a formalism with elements of the GEM literature and add some original
contributions. The first one is the Hamiltonian
layer built on the splitting kinematics; i.e., (1) a Legendre transformation
taken with respect to the Lie derivative along the congruence, (2) the
observer-congruent Poisson brackets and (3) Hamilton equations that follow from
it. Here we make use of the Lie derivative out of necessity from the selection
argument that we shall see in the next section. Note that the splitting literature proposes many admissible temporal
derivatives and we have to single out only the one that allows a coherent theory
of gravity.

Our second and most important contribution is the canonical formulation
of gravity this layer produces, under which the projected metric and its
momentum are subject only to second-class projection constraints
\textit{so that no Hamiltonian or momentum constraint arises},
the theory evolves unitarily in proper time with a conserved
positive inner product, and the kinetic ordering is fixed uniquely by the
theorem of Appendix \ref{app:ordering}.

Lastly, our third contribution consists of the consequences extracted
from the applications; namely, (1) the resolution of the
cosmological and Schwarzschild singularities through conserved uncertainty
invariants, (2) the quantized lower bounds obtained from the conformal
structure of the boundary theory, (3) the entropy operator conjugate to
the horizon boost angle with its temperature and mass bounds, and the
(4) ultraviolet cutoff these bounds induce in the graviton sector. Direct
comparisons with the corresponding literature are given where each of these
results is derived.

In the next section we construct the covariant Hamiltonian formalism associated
with $\pounds_n$ and the projected configuration derivative $D^\perp_\mu$, derive
the corresponding Hamilton equations, Poisson brackets, and observer-dependent
variational derivatives, and quantize the result. This leads to a relativistic
wavefunction equation in which the generator evolves states with respect to the
proper time of the observer congruence; in the one-particle limit it gives the
square-root relativistic Hamiltonian and recovers the Klein--Gordon, Dirac and
non-relativistic Schrödinger equations as limiting cases. In the following
section we apply the same logic to gravity by
treating the spatially projected metric $\gamma_{\mu\nu}$ and its conjugate
momentum as the canonical variables. This yields an observer-congruent
canonical formulation of Einstein gravity in which all objects remain spacetime
tensors projected relative to the congruence.

The resulting theory should be understood as an attempt to repair the point at
which canonical quantum gravity inherits a non-relativistic structure. If quantum
evolution is to be Hamiltonian, and if Hamiltonian evolution requires a notion
of time, then in relativity that time should be the proper time measured along
a congruence of physical observers. The formalism developed below implements
this idea and studies its consequences for relativistic wave functions and for
the quantization of the gravitational field. We find that it produces many of
the predictions expected of a quantum-gravity theory by developing the subsequent
equations.

Throughout this work, we use the $(+ ---)$ signature for the metric and work in
four spacetime dimensions. Additionally, here we use Greek letters to represent
four spacetime tensors---e.g., $g_{\mu\nu}$---and Latin
letters for spatial, three-dimensional tensors---e.g., $g_{ij}$---.
We shall mostly use Greek letters, which serve as a covariance check; any
equation written here with Greek indices is covariant.
Furthermore, for partial derivatives we might interchangeably use $\partial_n$
with $n$ an integer like $\partial_0 \equiv \partial/\partial x^0$ or
$\partial_\kappa$ with $\kappa$ any variable, like $\kappa = E$ for energy
$\partial_E \equiv \partial/\partial E$. Also, we do not work in natural units
in order to get more explicit and exact calculations.
 \section{Covariant Hamiltonian formalism and the relativistic wave equation}
\label{sec:covariant-hamiltonian-formalism}

Hamiltonian mechanics requires the notion of an ``evolution'' along something. In classical, non-relativistic field theory we use $\partial_{t}$ to measure the evolution of the system. But as we stated before in relativity we have the problem that there is no absolute time we can default to, and especially in GR the time coordinate does not hold any prominent role compared to the other coordinates. We therefore need a generalisation of the procedure from classical mechanics.

The geometric ingredient of this generalisation is the splitting of spacetime relative to a unit timelike vector field, an already known material that we shall not rederive here. It was developed independently by several schools from the 1940s onward. Landau and Lifshitz introduced the threading formalism \cite{LandauLifshitz1975ClassicalFields} and Zelmanov his chronometric invariants \cite{Zelmanov1956Chronometric}; Arnowitt, Deser and Misner (ADM) built the canonical decomposition \cite{Arnowitt1962Dynamics,ADM1959DynamicalStructure}; Cattaneo and Ferrarese developed the projection formalism \cite{Cattaneo1958Projections,FerrareseBini2008Continuum}; Lichnerowicz and Choquet-Bruhat established the initial-value formulations \cite{Lichnerowicz1944,FouresBruhat1952}; and Ehlers worked out the relativistic kinematics of continuous media \cite{Ehlers1993Translation} that was later extended into the $1+3$ covariant approach to cosmology \cite{EllisvanElst1999Cosmological}. These traditions were unified and standardized in the gravitoelectromagnetic (GEM) splitting formalism of Jantzen, Carini and Bini \cite{JantzenCariniBini1992GEM} and a modern textbook account can be found in \cite{Gourgoulhon2012ThreePlusOne}. Every kinematical object appearing below is standard in that literature. This covers the orthogonal projector and the spatial metric together with the projected spatial connection and the acceleration, expansion, vorticity and extrinsic curvature of the congruence.

The first version of this work arrived at these objects \textit{ab initio} without knowledge of these developments, but of course we now adopt the terminology and attributions of that literature throughout and then add what we need for the quantization of gravity. Namely, as we said, the Hamiltonian layer; which consists of (1) a Legendre transformation together with (2) Poisson brackets, and (3) Hamilton equations in which evolution is measured by (4) a Lie derivative along the congruence. As a result the proper time of the observers is taken to be the evolution parameter and all objects keep their four spacetime indices throughout.

From first principles we thus see that the requirements of this generalisation are that it should (1) preserve manifest covariance, (2) have a notion of ``evolution'', (3) be relative to any set of moving observers according to the rules of SR, and (4) reduce to the classical Hamiltonian in the low-velocity limit.
In departure from the GEM literature here we also add one last requirement, namely that (5) the metric should have a non-trivial Legendre transformation. This requirement comes from trying to treat the metric as a field like any other and thus to define a notion of evolution that allows us to treat gravity in a
coherent manner within our generalised Hamiltonian formalism.

Requirement (5) has the effect of selecting the evolution operator, which deserves some care because the splitting literature offers several candidates. Along a congruence of observers with four-velocity $u^\mu$ one can transport tensors with the covariant directional derivative $c^{-1}u^\mu\nabla_\mu$, with the Fermi--Walker derivative or with the Lie derivative. At the purely kinematical level these are equally admissible temporal derivatives and there is no compelling reason to privilege one over the others; the GEM formalism accordingly employs spatially projected versions of these operators, which take spatial fields to spatial fields, the projected Lie derivative being one example \cite{JantzenCariniBini1992GEM}.

Here we shall use the Hamiltonian criterion to break this tie. Note that the directional and Fermi--Walker derivatives \textit{are metric compatible}, so under either of them the velocity of the metric vanishes identically and $g_{\mu\nu}$ admits no non-trivial Legendre transformation. The Lie derivative instead gives

\begin{equation}
\begin{aligned}
    \pounds_n g_{\mu\nu}
    &=
    \nabla_\mu n_\nu + \nabla_\nu n_\mu,
    \\
    n^\mu &\equiv u^\mu / c,
    \qquad
    u^\mu u_\mu = c^2,
\end{aligned}
\end{equation}

\noindent which vanishes if and only if $n^\mu$ is Killing; i.e., the metric stops evolving iff spacetime possesses a timelike symmetry, as any well-defined notion of evolution of $g_{\mu\nu}$ should demand. The Lie derivative moreover keeps velocities transforming as tensors instead of pseudotensors. From these properties we thus find it almost compulsory to define the evolution derivative as

\begin{equation}
    D^\parallel \equiv \pounds_n, \qquad n^\mu n_\mu = 1,
\end{equation}

\noindent where $n^\mu \equiv u^\mu / c$ is a unit vector for the $(+---)$ signature and also dimensionless, which allows us to use it anywhere in the equations and makes $D^\parallel$ satisfy the $\partial_{ct}$ limit for time-independent spacetimes. Unlike the projected temporal derivatives of the GEM formalism $D^\parallel$ acts unprojected on arbitrary spacetime tensors, and this unprojected action has the benefit of leaving the metric with the non-vanishing velocity demanded by requirement (5). The proper time of the observers is encoded in the vector itself through

\begin{equation}
    \frac{d x^\mu}{d\tau} = u^\mu\left(x(\tau)\right).
\end{equation}

\noindent Acting on a general tensor the evolution derivative reads

\begin{equation}\label{def:evolDeriv}
\begin{aligned}
D^{\parallel} T^{\mu_1\cdots\mu_p}{}_{\nu_1\cdots\nu_q}
&=
n^\rho \nabla_\rho
T^{\mu_1\cdots\mu_p}{}_{\nu_1\cdots\nu_q}
\\[1mm]
&\quad
-
\sum_{i=1}^{p}
T^{\mu_1\cdots\rho\cdots\mu_p}{}_{\nu_1\cdots\nu_q}
\nabla_\rho n^{\mu_i}
\\[1mm]
&\quad
+
\sum_{j=1}^{q}
T^{\mu_1\cdots\mu_p}{}_{\nu_1\cdots\rho\cdots\nu_q}
\nabla_{\nu_j} n^\rho,
\end{aligned}
\end{equation}

\noindent so once we make the calculations in our formalism from the perspective of the $n^\mu$
congruence of observers we can obtain the corrected predictions for another $m^\mu$
congruence through the transformation rule

\begin{equation}
    m^\mu = \frac{1}{\sqrt{1 - v_\perp^{\nu}v^\perp_{\nu}}}\left(n^\mu + v_\perp^{\mu}\right),
\end{equation}

\noindent where $n_\mu v_\perp^{\mu} = 0$ and $m^\mu m_\mu = 1$. We thus see that our
formalism does not depend on the congruence and can be expressed in terms of any
set of observers, although of course we generally pick $n^\mu$ to be the simplest and neatest one
in which to express our calculations. Also note that we shall work with
spacelike hypersurface leaves $\Sigma_\tau$ ordered by $\tau$ as in the ADM formalism; and yet, even so we find no need to use
spatial indices, to break manifest covariance or to introduce a $1+3$ dimensional breakdown.
We work instead with spatial projections and Lie derivatives along timelike evolution vectors
and never introduce lapse and shift functions in the metric itself.

For the configuration side we use the standard orthogonal projector and positive spatial metric

\begin{equation}
    {\mathcal{P}^\mu}_\nu \equiv {\delta^\mu}_\nu - n^\mu n_\nu, \qquad \gamma_{\mu\nu} \equiv - \mathcal{P}_{\mu\nu},
\end{equation}

\noindent where because of working with the $(+---)$ convention we require the sign reversal above, thus making $\gamma_{\mu\nu}$ the congruence-covariant form of the threading three-metric $-g_{ik} + g_{0i}g_{0k}/g_{00}$ of Landau--Lifshitz and Zelmanov \cite{LandauLifshitz1975ClassicalFields,Zelmanov1956Chronometric}. We also use the fully projected spatial derivative

\begin{equation}\label{def:configDeriv}
\begin{aligned}
D_{\lambda}^{\perp}T^{\mu_1\cdots\mu_p}{}_{\nu_1\cdots\nu_q}
\equiv\;&
\mathcal{P}_{\lambda}{}^{\rho}
\mathcal{P}^{\mu_1}{}_{\alpha_1}\cdots
\mathcal{P}^{\mu_p}{}_{\alpha_p}
\\
&\times
\mathcal{P}_{\nu_1}{}^{\beta_1}\cdots
\mathcal{P}_{\nu_q}{}^{\beta_q}
\\
&\times
\nabla_{\rho}
T^{\alpha_1\cdots\alpha_p}{}_{\beta_1\cdots\beta_q},
\end{aligned}
\end{equation}

\noindent which is the familiar spatial covariant derivative of the splitting literature
\cite{JantzenCariniBini1992GEM,Cattaneo1958Projections,Gourgoulhon2012ThreePlusOne}.
Only when every index is projected does this operator become the Levi--Civita connection of
$\gamma_{\mu\nu}$ with $D^\perp_\lambda \gamma_{\mu\nu} = 0$; projecting the derivative index
alone leaves $\mathcal{P}^\rho{}_\lambda\nabla_\rho\gamma_{\mu\nu} \neq 0$ in general. We are
going to call $D^\parallel$ the ``evolution'' derivative and $D^\perp_\mu$ the
``configuration'' derivative because each corresponds to the evolution and configuration
variables in the phase space of the Hamiltonian.

The full projection also matters physically for our purposes.
$D^\perp_\lambda$ is supposed to be a derivative within the observer's
configuration space and should take spatial tensors to spatial tensors. If it were allowed to
develop normal pieces the Hamiltonian would no longer live purely on the reduced
observer-spatial phase space defined by the constraints $n^\mu\gamma_{\mu\nu} = 0$ and
$n^\mu\pi_{\mu\nu} = 0$ that we shall encounter for gravity.

Additionally, for this formalism to work well we shall also require the congruence to be hypersurface
orthogonal. By Frobenius's theorem this is equivalent to the vanishing of its vorticity,
that is to pick a congruence such that
$n_{[\mu}\nabla_\nu n_{\rho]} = 0 \Leftrightarrow \omega_{\mu\nu}=0$
\cite{EllisvanElst1999Cosmological}. We need it because only then do the local rest spaces
orthogonal to $n^\mu$ integrate to the spacelike leaves $\Sigma_\tau$ used
throughout, and only then are the ``equal-$\tau$'' surfaces and the configuration
derivative $D^\perp_\mu$ globally well defined. Without this condition the physics does not
change (the congruence remains a mere choice of observers) but there are then no surfaces
on which to define quantum states, hence no Heisenberg or Schrödinger equations in their
standard form relative to these observers; one would have to rebuild the quantum theory on
a separate hypersurface-orthogonal congruence and transform back. This is the same
global-hyperbolicity setting assumed by the ADM decomposition but different from the threading
and $1+3$ formalisms, because these allow rotating congruences with
$\omega_{\mu\nu}\neq 0$ \cite{JantzenCariniBini1992GEM,EllisvanElst1999Cosmological}.
Here for the aforementioned reasons we shall work throughout with $\omega_{\mu\nu}=0$.

If we are going to work with these two derivatives we shall need the decomposition of $\nabla_\alpha$ in terms of $D^\parallel$ and $D^\perp_\alpha$ because most Lagrangians in curved spacetime are written using $\nabla_\alpha$ on the field configuration. Making the appropriate replacements one finds

\begin{equation}
\begin{aligned}
\nabla_{\lambda} S^{\mu_1 \cdots \mu_p}{}_{\nu_1 \cdots \nu_q}
={}&
n_{\lambda}\Bigg[
D^{\parallel}
S^{\mu_1 \cdots \mu_p}{}_{\nu_1 \cdots \nu_q}
\\
&\quad
+ \sum_{i=1}^{p}
\left(D^{\perp}_{\rho} n^{\mu_i}\right)
\\
&\qquad\quad\times
S^{\mu_1 \cdots \rho \cdots \mu_p}
{}_{\nu_1 \cdots \nu_q}
\\
&\quad
- \sum_{j=1}^{q}
\left(
D^{\perp}_{\nu_j} n^{\rho}
+ n_{\nu_j}D^{\parallel} n^{\rho}
\right)
\\
&\qquad\quad\times
S^{\mu_1 \cdots \mu_p}
{}_{\nu_1 \cdots \rho \cdots \nu_q}
\Bigg] \\
&+
D^{\perp}_{\lambda}
S^{\mu_1 \cdots \mu_p}{}_{\nu_1 \cdots \nu_q}
\\
&- \sum_{i=1}^{p}
n^{\mu_i}
\left(D^{\perp}_{\lambda} n_{\rho}\right)
\\
&\qquad\quad\times
S^{\mu_1 \cdots \rho \cdots \mu_p}
{}_{\nu_1 \cdots \nu_q}
\\
&- \sum_{j=1}^{q}
n_{\nu_j}
\left(D^{\perp}_{\lambda} n^{\rho}\right)
\\
&\qquad\quad\times
S^{\mu_1 \cdots \mu_p}
{}_{\nu_1 \cdots \rho \cdots \nu_q}.
\end{aligned}
\end{equation}

\noindent And given a Lagrangian of the form

\begin{equation}
    L\left(T^A, D^\parallel T^A, D^\perp_\lambda T^A\right), \qquad T^A \equiv T^{\mu_1\dots\mu_p}{}_{\nu_1\dots\nu_q},
\end{equation}

\noindent then holding $n^\mu$-related variables fixed we obtain

\begin{equation}
\begin{aligned}
    \frac{\partial L}{\partial\left(\nabla_\mu T^B\right)}
    ={}&
    n^\mu\Pi^A{}_B
    \frac{\partial L}{\partial\left(D^\parallel T^A\right)}
    \\
    &+
    \mathcal{P}_\lambda{}^\mu\Pi^A{}_B
    \frac{\partial L}
    {\partial\left(D^\perp_\lambda T^A\right)},
\end{aligned}
\end{equation}

\noindent with

\begin{equation}
    \Pi^A{}_B \equiv \mathcal{P}^{\mu_1}{}_{\alpha_1}\cdots\mathcal{P}^{\mu_p}{}_{\alpha_p}\mathcal{P}_{\nu_1}{}^{\beta_1}\cdots\mathcal{P}_{\nu_q}{}^{\beta_q}.
\end{equation}

\noindent Now if we define the observer-congruent Hamiltonian as

\begin{equation}
    H_\parallel[\Sigma_\tau]
    \equiv
    \int_{\Sigma_\tau} d\Sigma \, \mathcal{H}_\parallel ,
\end{equation}

\noindent with $\mathcal{H}_\parallel$ the Hamiltonian density coming from $\mathcal{L}$, then we can derive $H_\parallel$ for $\mathfrak{n}$ fields in the usual way. Using the momentum definition $\pi_A{}^\mathfrak{n} \equiv \partial L / \partial \left(D^\parallel \Phi^A{}_\mathfrak{n}\right)$, the differential of the Lagrangian can be written as

\begin{equation}
    \begin{split}
        dL
        &=
        \frac{\partial L}{\partial \Phi^A{}_\mathfrak{n}}
        d\Phi^A{}_\mathfrak{n}
        \\
        &\quad+
        d\left(
        \pi_A{}^\mathfrak{n}
        D^\parallel \Phi^A{}_\mathfrak{n}
        \right)
        \\
        &\quad-
        \left(D^\parallel \Phi^A{}_\mathfrak{n}\right)
        d\pi_A{}^\mathfrak{n}
        \\
        &\quad+
        \frac{\partial L}
        {\partial \left(D^\perp_\lambda
        \Phi^A{}_\mathfrak{n}\right)}
        d\left(D^\perp_\lambda
        \Phi^A{}_\mathfrak{n}\right)
        \\
        &\quad+
        \frac{\partial L}{\partial x^\mu}
        \frac{dx^\mu}{dc\tau},
    \end{split}
\end{equation}

\noindent and rearranging terms we obtain

\begin{equation}
    \begin{split}
        dH_\parallel &\equiv d\left(\pi_A{}^\mathfrak{n}D^\parallel \Phi^A{}_\mathfrak{n}\right) - dL \\
        &= \left(D^\parallel \Phi^A{}_\mathfrak{n}\right)d\pi_A{}^\mathfrak{n} - \frac{\partial L}{\partial \Phi^A{}_\mathfrak{n}}d\Phi^A{}_\mathfrak{n} \\
        &- \frac{\partial L}{\partial \left(D^\perp_\lambda \Phi^A{}_\mathfrak{n}\right)}d\left(D^\perp_\lambda \Phi^A{}_\mathfrak{n}\right) - \frac{\partial L}{\partial x^\mu}\frac{dx^\mu}{dc\tau}.
    \end{split}
\end{equation}

\noindent Expanding $dH_\parallel$ in its arguments $\left(\pi_A{}^\mathfrak{n}, \Phi^A{}_\mathfrak{n}, D^\perp_\lambda\Phi^A{}_\mathfrak{n}\right)$ and equating coefficients we obtain

\begin{equation}
    \frac{\partial H_\parallel}{\partial \left(D^\perp_\lambda \Phi^A{}_\mathfrak{n}\right)} = -\frac{\partial L}{\partial \left(D^\perp_\lambda \Phi^A{}_\mathfrak{n}\right)}, \qquad \frac{dH_\parallel}{dc\tau} = -\frac{dL}{dc\tau},
\end{equation}

\noindent and

\begin{equation}\label{eq:eomHamil}
\boxed{
\begin{aligned}
    D^\parallel \Phi^A{}_\mathfrak{n}
    &=
    \frac{\delta^\perp H_\parallel}
    {\delta \pi_A{}^\mathfrak{n}},
    \\
    D^\parallel \pi_A{}^\mathfrak{n}
    + \theta_n \pi_A{}^\mathfrak{n}
    &=
    -\frac{\delta^\perp H_\parallel}
    {\delta \Phi^A{}_\mathfrak{n}}
\end{aligned}
},
\end{equation}

\noindent where $\theta_n \equiv \nabla_\alpha n^\alpha$, and the observer-dependent variational derivative is defined as

\begin{equation}
    \frac{\delta^\perp H_\parallel}
    {\delta \Phi^A{}_\mathfrak{n}}
    \equiv
    \frac{\partial H_\parallel}
    {\partial \Phi^A{}_\mathfrak{n}}
    -
    \left(D^\perp_\lambda - a_\lambda\right)
    \frac{\partial H_\parallel}
    {\partial\left(D^\perp_\lambda \Phi^A{}_\mathfrak{n}\right)},
\end{equation}

\noindent with the acceleration $a_\lambda \equiv n^\alpha\nabla_\alpha n_\lambda$. Note that we can use $\delta^\perp$ for the $D^\parallel \Phi^A{}_\mathfrak{n}$ equation because $H_\parallel$ does not depend on $D^\perp_\lambda \pi_A{}^\mathfrak{n}$, so we are free to add that term. This shall prove valuable for the Poisson brackets. The momentum density expression at Eq. (\ref{eq:eomHamil}) was obtained by equating coefficients and expressing the Euler-Lagrange equation in evolution-configuration form

\begin{equation}
\begin{aligned}
    \frac{\partial L}
    {\partial \Phi^A{}_\mathfrak{n}}
    ={}&
    D^\parallel \pi_A{}^\mathfrak{n}
    +
    \theta_n \pi_A{}^\mathfrak{n}
    \\
    &+
    \left(D^\perp_\lambda
    -
    a_\lambda \right)
    \frac{\partial L}
    {\partial\left(D^\perp_\lambda
    \Phi^A{}_\mathfrak{n}\right)} .
\end{aligned}
\end{equation}
We now rewrite Eq. \eqref{eq:eomHamil} in Poisson-bracket form. The important point is that, since the Hamiltonian density depends on the observer-spatial derivative $D^\perp_\lambda\Phi^A{}_{\mathfrak n}$, the functional derivatives entering the bracket must be the observer-orthogonal variational derivatives, obtained by integrating by parts only along the hypersurface directions through the identity
\begin{equation}
\begin{aligned}
    \int_{\Sigma_\tau} d\Sigma \,
    V^\lambda D^\perp_\lambda X
    ={}&
    \int_{\partial\Sigma_\tau} dS \,
    s_\lambda V^\lambda X
    \\
    &-
    \int_{\Sigma_\tau} d\Sigma \,
    \left[
    \left(D^\perp_\lambda-a_\lambda\right)V^\lambda
    \right]X.
\end{aligned}
\end{equation}

\noindent Hence, when the boundary term is fixed or vanishes, the variation of any phase-space functional
$F[\Phi,\pi;\tau] = \int_{\Sigma_\tau} d\Sigma\,\mathcal F\left(\Phi^A{}_{\mathfrak n}, \pi_A{}^{\mathfrak n}, D^\perp_\lambda\Phi^A{}_{\mathfrak n}, D^\perp_\lambda\pi_A{}^{\mathfrak n}; \tau\right)$
is expressed through $\delta^\perp$ derivatives alone, and the equal-$\tau$ Poisson bracket is defined by

\begin{equation}\label{eq:observerPoissonBracket}
\left\{F,G\right\}_{\Sigma_\tau}
=
\int_{\Sigma_\tau} d\Sigma
\left[
\frac{\delta^\perp F}
{\delta \Phi^A{}_{\mathfrak n}}
\frac{\delta^\perp G}
{\delta \pi_A{}^{\mathfrak n}}
-
\frac{\delta^\perp F}
{\delta \pi_A{}^{\mathfrak n}}
\frac{\delta^\perp G}
{\delta \Phi^A{}_{\mathfrak n}}
\right].
\end{equation}

\noindent Equivalently, in local form,

\begin{equation}
    \left\{
    \Phi^A{}_{\mathfrak n}(x),
    \pi_B{}^{\mathfrak m}(y)
    \right\}_{\Sigma_\tau}
    =
    \delta^A{}_B
    \delta_{\mathfrak n}{}^{\mathfrak m}
    \delta_{\Sigma_\tau}(x,y),
\end{equation}

\noindent with all other fundamental brackets vanishing and $\delta_{\Sigma_\tau}(x,y)$ the hypersurface delta distribution. Applying Eq. \eqref{eq:observerPoissonBracket} to the Hamiltonian gives

\begin{equation}
\begin{aligned}
    D^\parallel \Phi^A{}_{\mathfrak n}
    &=
    \left\{\Phi^A{}_{\mathfrak n},
    H_\parallel\right\},
    \\
    D^\parallel \pi_A{}^{\mathfrak n}
    + \theta_n\pi_A{}^{\mathfrak n}
    &=
    \left\{\pi_A{}^{\mathfrak n},
    H_\parallel\right\},
\end{aligned}
\end{equation}

\noindent and therefore

\begin{equation}\label{eq:FHundensitized}
\begin{aligned}
    \left\{F,H_\parallel\right\}_{\Sigma_\tau}
    ={}&
    \\
    &
    \int_{\Sigma_\tau} d\Sigma
    \left[
    \frac{\delta^\perp F}
    {\delta \Phi^A{}_{\mathfrak n}}
    \right.
    \\
    &\qquad\left.
    {}\times
    D^\parallel \Phi^A{}_{\mathfrak n}
    +
    \frac{\delta^\perp F}
    {\delta \pi_A{}^{\mathfrak n}}
    \right.
    \\
    &\qquad\left.
    {}\times
    \left(
    D^\parallel \pi_A{}^{\mathfrak n}
    +
    \theta_n\pi_A{}^{\mathfrak n}
    \right)
    \right].
\end{aligned}
\end{equation}

\noindent The term proportional to $\theta_n$ appears because $\pi_A{}^{\mathfrak n}$ is written as an undensitized momentum while the symplectic pairing is integrated with the hypersurface volume element. Indeed, $D^\parallel d\Sigma = \theta_n d\Sigma$, so the bracket naturally evolves the momentum density $d\Sigma\,\pi_A{}^{\mathfrak n}$.

Now let $F^I = F^{\alpha_1\cdots\alpha_p}{}_{\beta_1\cdots\beta_q}$ be a tensor-valued phase-space functional. Since $D^\parallel=\pounds_n$, the free indices of $F^I$ are Lie-dragged. We write

\begin{equation}
    D^\parallel F^I
    =
    n^\rho\nabla_\rho F^I
    +
    \mathfrak L_n F^I,
\end{equation}

\noindent where $\mathfrak L_n F^I$ denotes the free-index part of the Lie derivative already displayed in Eq. \eqref{def:evolDeriv}. The same notation is used for the canonical variables themselves. The Lie-chain rule for $F^I[\Phi,\pi;\tau]$ is then

\begin{equation}\label{eq:LieChainRule}
\begin{aligned}
    D^\parallel F^I
    =
    &
    \int_{\Sigma_\tau} d\Sigma
    \left[
    \frac{\delta^\perp F^I}
    {\delta \Phi^A{}_{\mathfrak n}}
    \right.
    \\
    &\qquad\left.
    {}\times
    D^\parallel \Phi^A{}_{\mathfrak n}
    +
    \frac{\delta^\perp F^I}
    {\delta \pi_A{}^{\mathfrak n}}
    \right.
    \\
    &\qquad\left.
    {}\times
    D^\parallel \pi_A{}^{\mathfrak n}
    \right]
    +
    \partial^\parallel_{\mathrm L}F^I ,
\end{aligned}
\end{equation}

\noindent where the explicit Lie-time derivative is

\begin{equation}\label{eq:partialLieDefinition}
\begin{aligned}
    \partial^\parallel_{\mathrm L}F^I
    \equiv{}&
    \left(
    \frac{\partial F^I}{\partial c\tau}
    \right)_{\Phi,\pi}
    +
    \mathfrak L_n F^I
    \\
    &-
    \int_{\Sigma_\tau} d\Sigma
    \left[
    \frac{\delta^\perp F^I}
    {\delta \Phi^A{}_{\mathfrak n}}
    \right.
    \\
    &\qquad\left.
    {}\times
    \left(\mathfrak L_n\Phi\right)^A{}_{\mathfrak n}
    +
    \frac{\delta^\perp F^I}
    {\delta \pi_A{}^{\mathfrak n}}
    \right.
    \\
    &\qquad\left.
    {}\times
    \mathfrak L_n\pi_A{}^{\mathfrak n}
    \right].
\end{aligned}
\end{equation}

\noindent This is the Lie-derivative replacement of the old connection correction; one subtracts the index-dragging already generated through the canonical variables and keeps only the explicit free-index dragging of $F^I$ itself. Comparing Eq. \eqref{eq:FHundensitized} with Eq. \eqref{eq:LieChainRule} gives the observer-covariant Poisson evolution law

\begin{equation}\label{eq:observerPoissonEvolutionUndensitized}
\begin{aligned}
    D^\parallel F^I
    ={}&
    \left\{F^I,H_\parallel\right\}_{\Sigma_\tau}
    +
    \partial^\parallel_{\mathrm L}F^I
    \\
    &-
    \int_{\Sigma_\tau} d\Sigma\,
    \frac{\delta^\perp F^I}
    {\delta \pi_A{}^{\mathfrak n}}
    \theta_n\pi_A{}^{\mathfrak n}.
\end{aligned}
\end{equation}

\noindent This is the correct bracket formula when the canonical momentum is written as the undensitized field $\pi_A{}^{\mathfrak n}$. If we define instead the densitized canonical momentum $\varpi_A{}^{\mathfrak n} \equiv d\Sigma\,\pi_A{}^{\mathfrak n}$, for which $D^\parallel \varpi_A{}^{\mathfrak n}=d\Sigma\left(D^\parallel\pi_A{}^{\mathfrak n}+\theta_n\pi_A{}^{\mathfrak n}\right)$, the expansion term is absorbed into the canonical pair $\left(\Phi^A{}_{\mathfrak n},\varpi_A{}^{\mathfrak n}\right)$ and the evolution law takes the clean canonical form

\begin{equation}\label{eq:observerPoissonEvolutionDensitized}
    D^\parallel F^I
    =
    \left\{F^I,H_\parallel\right\}_{\Sigma_\tau}^{(\varpi)}
    +
    \partial^\parallel_{\mathrm L}F^I .
\end{equation}

\noindent For scalar canonical variables and scalar functionals the Lie-index correction vanishes and $\partial^\parallel_{\mathrm L}F$ reduces to $\left(\partial F/\partial c\tau\right)_{\Phi,\pi}$. Likewise, whenever the observer congruence preserves the hypersurface volume, $\theta_n=0$, the undensitized and densitized descriptions coincide.
 We now pass from the classical brackets to the wave
equation they produce. After quantizing the formalism for a scalar
field $F$ we shall see that it describes proper-time evolution of
$F$ along the congruence. To this end, let us use the quantization rule
$\left\{F, H_\parallel\right\} \longrightarrow [\hat{F}, \hat{H}_\parallel] / i\hbar$
to get the relativistic Heisenberg equation
\begin{equation}
    D^\parallel \hat{F}
    =
    \frac{1}{i\hbar}[\hat{F}, \hat{H}_\parallel]
    +
    \partial^\parallel \hat{F}.
\end{equation}

\noindent Now assume that $\hat{F}$ has no explicit dependence on the proper
time $\tau$ of the congruence. Given that it is a scalar field no
Christoffel symbols appear, thus the explicit evolution term vanishes
$\partial^\parallel \hat{F}=0$. Then the evolution derivative also
reduces to differentiation with respect to the length-time parameter
$c\tau$ as $D^\parallel \longrightarrow d /d(c\tau)$, so the
Heisenberg equation becomes

\begin{equation}
    i\hbar\frac{d\hat{F}}{dc\tau}
    =
    [\hat{F}, \hat{H}_\parallel].
\end{equation}

\noindent For a $\tau$-independent Hamiltonian, the solution is
\begin{equation}
    \hat{F}_H(\tau)
    =
    e^{ic\tau\hat{H}_\parallel/\hbar}
    \hat{F}_H(0)
    e^{-ic\tau\hat{H}_\parallel/\hbar},
\end{equation}
\noindent where the subscript $H$ denotes the Heisenberg picture. The
corresponding evolution operator is
\begin{equation}
    \hat{U}(\tau)
    \equiv
    e^{-ic\tau\hat{H}_\parallel/\hbar}.
\end{equation}
\noindent With this definition the Heisenberg and Schrödinger operators
are of course related by $\hat{F}_H(\tau) = \hat{U}^\dagger(\tau)\hat{F}_S\hat{U}(\tau)$
with $\hat{F}_S=\hat{F}_H(0)$; thus states carry the
proper-time dependence in the Schrödinger picture via
\begin{equation}
    \ket{\Psi_S(\tau)}
    \equiv
    \hat{U}(\tau)\ket{\Psi_H},
\end{equation}
\noindent where differentiating this definition gives the relativistic wave
equation
\begin{equation}\label{relativisticSchrodinger}
    \hat{H}_\parallel\ket{\Psi_S(\tau)}
    =
    i\hbar\frac{\partial}{\partial c\tau}\ket{\Psi_S(\tau)}.
\end{equation}
The factor $c$ is only the conversion between proper time and the length-time parameter used by $H_\parallel$; it cancels against the corresponding factor in the hypersurface integral defining the congruent Hamiltonian. Equation~\eqref{relativisticSchrodinger} realizes, for an observer congruence, the long-standing idea of an invariant proper-time evolution parameter for relativistic quantum dynamics \cite{Stueckelberg1942,HorwitzPiron1973}.

The physical content of Eq.~\eqref{relativisticSchrodinger} is easiest to see for a single particle. Decomposing the worldline displacement as $dx^\mu = n^\mu c\,d\tau + dx^\mu_\perp$ with projected velocity $v^\mu_\perp \equiv dx^\mu_\perp/(c\,d\tau)$ and $v^2_\perp \equiv -g_{\mu\nu}v^\mu_\perp v^\nu_\perp$, the line element becomes $ds = c\,d\tau\sqrt{1-v^2_\perp}$, so the congruence-adapted free-particle Lagrangian is $L_\parallel = -mc\sqrt{1-v^2_\perp}$ and its Legendre transformation yields the square-root Hamiltonian. Equation~\eqref{relativisticSchrodinger} then reads

\begin{equation}\label{schPenyaEq}
\begin{aligned}
    i\hbar\frac{d}{d\tau}\ket{\Psi(\tau)}
    &=
    \sqrt{m^2c^4+c^2\hat{p}^2_\perp}\,\ket{\Psi(\tau)},
    \\
    \hat{p}^\perp_\mu
    &=
    -i\hbar\,D^\perp_\mu,
\end{aligned}
\end{equation}

\noindent with $\hat{p}^2_\perp \equiv -g^{\mu\nu}\hat{p}^\perp_\mu\hat{p}^\perp_\nu$. The projected momentum operator is automatically spatial, $n^\mu\hat{p}^\perp_\mu = 0$, generates translations in the observer's local rest space, and obeys the canonical commutator $[\hat{x}^\mu_\perp,\hat{p}^\perp_\nu] = i\hbar\,{\mathcal{P}^\mu}_\nu$ together with the corresponding uncertainty relation. The square root is a pseudodifferential operator of the spinless-Salpeter type \cite{LuchaSchoberl2004Facets}, non-local over roughly a Compton wavelength, and it defines a positive-energy one-particle theory in the sense of Newton and Wigner \cite{NewtonWigner1949}, here referred to the proper time of the congruence. Since it is self-adjoint on that Hilbert space the norm $\int_{\Sigma_\tau}d\Sigma\,|\psi|^2$ is conserved along $\tau$ and the expansion of the square root generates the associated probability current order by order, whose leading term is the usual Schrödinger current written in the observer's rest space.

The known wave equations are recovered from Eq.~\eqref{schPenyaEq} in the appropriate limits. Applying $i\hbar\,d/d\tau$ once more squares the energy and gives

\begin{equation}
    \frac{1}{c^2}\frac{d^2\psi}{d\tau^2}
    -
    \Delta_\Sigma\psi
    +
    \frac{m^2c^2}{\hbar^2}\psi
    =
    0,
    \qquad
    \Delta_\Sigma \equiv D^\perp_\mu D^{\perp\mu},
\end{equation}

\noindent which is the Klein--Gordon equation \cite{Klein1926,Gordon1926} referred to the proper time of the congruence. In an adapted inertial congruence, where $d/d\tau\to\partial/\partial t$ and $\Delta_\Sigma\to\nabla^2$, it reduces to the standard flat-spacetime form. Linearizing the square root instead with the split Clifford algebra $\gamma^\parallel \equiv n_\mu\gamma^\mu$ and $\gamma^\mu_\perp \equiv {\mathcal{P}^\mu}_\nu\gamma^\nu$ gives the first-order generator $\hat{E} = c\,\gamma^\parallel\gamma^\mu_\perp\hat{p}^\perp_\mu + mc^2\gamma^\parallel$, which satisfies $\hat{E}^2 = m^2c^4 + c^2\hat{p}^2_\perp$, and hence the Dirac equation written along the congruence,

\begin{equation}
    \gamma^\parallel\frac{d\psi}{d\tau}
    +
    c\,\gamma^\mu_\perp D^\perp_\mu\psi
    +
    i\frac{mc^2}{\hbar}\,\psi
    =
    0,
\end{equation}

\noindent which becomes the usual Dirac equation \cite{Dirac1928Electron} in the same adapted limit. Finally, expanding the square root as $mc^2 + \hat{p}^2_\perp/2m + \cdots$ in the spirit of the Foldy--Wouthuysen reduction \cite{FoldyWouthuysen1950} and removing the rest-energy phase $\psi = e^{-imc^2\tau/\hbar}\varphi$ leaves the non-relativistic Schrödinger equation for $\varphi$ in the low-velocity limit.

Having shown the limits under which we recover the usual quantum mechanics in the one-particle/fields regime, we now apply the formalism to gravity.
 \section{Canonical gravity}
\label{sec:lie-dragged-gravity-projection-constraints}

We now have the formalism and all the tools needed to devise a canonical
formulation of Einstein's gravity. The kinematics of the metric along a
hypersurface-orthogonal congruence is classical material
\cite{JantzenCariniBini1992GEM,Ehlers1993Translation,EllisvanElst1999Cosmological,Gourgoulhon2012ThreePlusOne},
so we only recall the decomposition we need. Projecting
$\pounds_n g_{\mu\nu} = \nabla_\mu n_\nu + \nabla_\nu n_\mu$ along and orthogonal
to the congruence the normal-normal component vanishes identically because
$n^\mu$ has unit norm and the mixed normal-spatial component reproduces the
acceleration $a_\mu = \pounds_n n_\mu$ of the observers, which is congruence
data rather than metric data. The propagating velocity of the metric therefore
resides entirely in the spatial-spatial projection, and using
$\gamma_{\mu\nu} = n_\mu n_\nu - g_{\mu\nu}$ we can write the decomposition as

\begin{equation}
  \pounds_n g_{\mu\nu} = a_\mu n_\nu + n_\mu a_\nu - 2K_{\mu\nu},
\end{equation}

\noindent where

\begin{equation}
  K_{\mu\nu} = \frac{1}{2}\mathcal{P}_\mu{}^\alpha\mathcal{P}_\nu{}^\beta\pounds_n \gamma_{\alpha\beta} = -\mathcal{P}_\mu{}^\alpha\mathcal{P}_\nu{}^\beta\nabla_{(\alpha}n_{\beta)}
\end{equation}

\noindent is the extrinsic curvature of the leaves, here playing the role of the
actual (spatial) velocity of the metric. With our conventions the trace
$\gamma^{\mu\nu}K_{\mu\nu} = \theta_n$ is positive for an expanding congruence,
so $K_{\mu\nu}$ coincides with the expansion tensor of the GEM and $1+3$
literature \cite{JantzenCariniBini1992GEM,EllisvanElst1999Cosmological} and
carries the opposite sign to the extrinsic curvature convention of
\cite{Gourgoulhon2012ThreePlusOne}. We shall accordingly work with
$\left(\gamma_{\mu\nu}, K_{\mu\nu}\right)$ as the propagating degrees of freedom rather than
$\left(g_{\mu\nu}, 0\right)$, as $g$ itself does not give us a satisfactory notion of velocity.
It is worth noting that the metric dynamics are observer-dependent, as the velocity
can also be expressed in terms of $n^\mu$ alone.

Inspired by the methodology originally used in the ADM papers
\cite{Arnowitt1962Dynamics,Dirac1958HamiltonianGR,ADM1959DynamicalStructure,ADM1960Canonical},
we shall now put the Einstein-Hilbert action

\begin{equation}
  S_\mathrm{EH} = \frac{1}{2\kappa} \int{d^4x\sqrt{-g}R},
\end{equation}

\noindent into a more explicit form in terms of $\gamma_{\mu\nu}$ and $K_{\mu\nu}$. Let
$R^\perp$ be the Ricci scalar of $\gamma_{\mu\nu}$ on the leaves $\Sigma_\lambda$. The
contracted Gauss--Codazzi relations, together with the Ricci identity applied to $n^\mu$
give the standard decomposition of the scalar curvature
\cite{ADM1960Canonical,York1972ConformalThreeGeometry,Gourgoulhon2012ThreePlusOne}

\begin{equation}
  R
  =
  R^\perp
  +
  K_{\mu\nu}K^{\mu\nu}
  -
  K^2
  +
  2\nabla_\mu
  \left(
    K n^\mu-a^\mu
  \right),
  \label{eq:scalar-curvature-gcodazzi}
\end{equation}

\noindent with $K \equiv \gamma^{\mu\nu}K_{\mu\nu}$, whose last term is a total divergence.
This divergence amounts to the observer-congruent Gibbons--Hawking--York term
\cite{York1972ConformalThreeGeometry,GibbonsHawking1977Action} that
shall be added to the action as a boundary term to also obtain the right total energy
of the system. In this regard, we find

\begin{equation}
\begin{aligned}
  S_\mathrm{CG}
  ={}&
  S_{\partial\Sigma}
  +
  \alpha\int d\lambda
  \int_{\Sigma_\lambda}d\Sigma
  \\
  &\times
  \left[
  R^\perp
  +
  K_{\mu\nu}K^{\mu\nu}
  -
  K^2
  \right],
\end{aligned}
\end{equation}

\noindent with $\lambda \equiv c\tau$ and $\alpha \equiv c^3/16\pi G$. Thus, for the metric momentum

\begin{equation}
  \pi^{\mu\nu} \equiv \frac{\partial \mathcal{L}_\mathrm{CG}}{\partial \left(\pounds_n \gamma_{\mu\nu}\right)} = \alpha \left(K^{\mu\nu} - K\gamma^{\mu\nu}\right),
\end{equation}

\noindent we get

\begin{equation}
\begin{aligned}
  \frac{1}{\alpha}\mathcal{H}_\mathrm{CG}
  \equiv{}&
  \left(K^{\mu\nu}
  -
  K\gamma^{\mu\nu}\right)
  \pounds_n\gamma_{\mu\nu}
  -
  R^\perp
  \\
  &-
  K_{\mu\nu}K^{\mu\nu}
  +
  K^2.
\end{aligned}
\end{equation}

\noindent Up to this point the construction parallels the ADM reduction; the difference is
that no coordinate system adapted to the foliation is introduced, so both members of the
canonical pair remain spacetime tensors and the reduction to spatial variables is carried
by projection constraints instead of by lapse and shift components.
From the fact that
$\gamma_{\mu\nu}$ is spatial we immediately see that the Hessian

\begin{equation}
\begin{aligned}
  W^{\mu\nu\sigma\rho}
  &\equiv
  \frac{\alpha}{2}
  \left(
  \mathbb{I}^{\mu\nu\sigma\rho}_\perp
  -
  \gamma^{\mu\nu}\gamma^{\sigma\rho}
  \right),
  \\
  \mathbb{I}^{\mu\nu\sigma\rho}_\perp
  &\equiv
  \frac{1}{2}
  \left(
  \gamma^{\mu\sigma}\gamma^{\nu\rho}
  +
  \gamma^{\mu\rho}\gamma^{\nu\sigma}
  \right)
\end{aligned}
\end{equation}

\noindent must be singular. We can prove this by considering any velocity variation of the form

\begin{equation}\label{eq:poundsGammaVarFrame}
  \delta\left(\pounds_n \gamma_{\mu\nu}\right) = 2n_{(\mu}\xi_{\nu)} + \chi n_\mu n_\nu, \qquad n^\mu\xi_\mu = 0,
\end{equation}

\noindent which has zero spatial projection. It can then be checked that

\begin{equation}
  W^{\mu\nu\sigma\rho}\delta\left(\pounds_n \gamma_{\sigma\rho}\right) = 0,
\end{equation}

\noindent so the spacetime-index Hessian has a four-dimensional kernel. In an adapted frame this
kernel corresponds to the velocity components $\pounds_n \gamma_{00}$ and $\pounds_n \gamma_{0i}$.
Thus, by Dirac's criterion \cite{Dirac1950GenHamCJM,Dirac1958GenHam} the degeneracy of the Legendre map produces two primary constraints that
can be expressed as

\begin{equation}
  \varphi^\nu \equiv n_\mu \pi^{\mu\nu} \approx 0, \qquad \chi^\nu \equiv n^\mu\gamma_{\mu\nu} \approx 0,
\end{equation}

\noindent where $\approx$ here means ``weak equality'' in the usual sense. These constraints just require
us to keep $\gamma_{\mu\nu}$ and $\pi^{\mu\nu}$ spatial, which come naturally from using a covariant formalism
with a positive-spatial metric. The unreduced Poisson bracket is the symmetric tensor bracket
\begin{equation}
  \{\gamma_{\mu\nu}(x),\pi^{\rho\sigma}(y)\}_{\Sigma_\lambda}
  =
  \delta^\rho{}_{(\mu}\delta^\sigma{}_{\nu)}
  \delta_{\Sigma_\lambda}(x,y),
  \label{eq:unreduced-gamma-pi-bracket}
\end{equation}
with the remaining fundamental brackets vanishing. Treating $n^\mu$ as the
observer structure defining the Hamiltonian split, the bracket between
$\chi_\mu$ and $\varphi^\nu$ is
\begin{equation}
  \{\chi_\mu(x),\varphi^\nu(y)\}_{\Sigma_\lambda}
  =
  M_\mu{}^\nu\,\delta_{\Sigma_\lambda}(x,y),
  \label{eq:constraint-bracket-M}
\end{equation}
where
\begin{equation}
  M_\mu{}^\nu
  =
  \frac12
  \left(
    \delta_\mu{}^\nu+n_\mu n^\nu
  \right).
  \label{eq:M-matrix}
\end{equation}
The matrix $M_\mu{}^\nu$ is invertible as $(M^{-1})_\mu{}^\nu = 2\delta_\mu{}^\nu-n_\mu n^\nu$,
which also satisfies $M_\mu{}^\rho(M^{-1})_\rho{}^\nu = \delta_\mu{}^\nu$. The remaining brackets are
\begin{equation}
  \{\chi_\mu(x),\chi_\nu(y)\}_{\Sigma_\lambda}=0,
  \qquad
  \{\varphi^\mu(x),\varphi^\nu(y)\}_{\Sigma_\lambda}=0,
  \label{eq:constraint-brackets-zero}
\end{equation}
so the constraint matrix for the set $\Xi_A=(\chi_\mu,\varphi^\mu)$ is
\begin{equation}
  C_{AB}(x,y)
  =
  \begin{pmatrix}
    0 & M_\mu{}^\nu \\
    -M_\nu{}^\mu & 0
  \end{pmatrix}
  \delta_{\Sigma_\lambda}(x,y),
  \label{eq:second-class-matrix}
\end{equation}
and since $M_\mu{}^\nu$ is invertible, $C_{AB}$ is too. Therefore
$\chi_\mu$ and $\varphi^\mu$ are second-class projection constraints in the
sense of the Dirac--Bergmann analysis \cite{Dirac1958GenHam,AndersonBergmann1951}.
The total Hamiltonian before reduction is
\begin{equation}
  H_{\rm T}
  =
  H_{\rm c}
  +
  \int_{\Sigma_\lambda}d\Sigma\,
  \left(
    \lambda_\mu\varphi^\mu
    +
    \eta^\mu\chi_\mu
  \right),
  \label{eq:total-H-projection-constraints}
\end{equation}
where $\lambda_\mu$ and $\eta^\mu$ are Lagrange multipliers. Constraint
preservation gives
\begin{equation}
  \begin{split}
    0 &\approx \pounds_n\chi_\mu =
    \{\chi_\mu,H_{\rm c}\}_{\Sigma_\lambda}
    + M_\mu{}^\nu\lambda_\nu, \\
    0 &\approx \pounds_n\varphi^\mu =
    \{\varphi^\mu,H_{\rm c}\}_{\Sigma_\lambda}
    - \eta^\nu M_\nu{}^\mu .
  \end{split}
\end{equation}
Because $M_\mu{}^\nu$ is invertible these equations determine the multipliers
$\lambda_\mu$ and $\eta^\mu$ and thus they do not generate secondary constraints.
The Dirac bracket associated with the second-class set $(\chi_\mu,\varphi^\mu)$ is
\begin{align}
  \{F,G\}_{\rm D}
  &=
  \{F,G\}_{\Sigma_\lambda}
  +
  \int_{\Sigma_\lambda}d\Sigma\,
  \{F,\chi_\mu\}_{\Sigma_\lambda}
  \nonumber\\
  &\quad\times
  (M^{-1})^\mu{}_\nu
  \{\varphi^\nu,G\}_{\Sigma_\lambda}
  \nonumber\\
  &\quad
  -
  \int_{\Sigma_\lambda}d\Sigma\,
  \{F,\varphi^\mu\}_{\Sigma_\lambda}
  \nonumber\\
  &\quad\times
  (M^{-1})^\nu{}_\mu
  \{\chi_\nu,G\}_{\Sigma_\lambda},
  \label{eq:dirac-bracket-projection}
\end{align}
where with this bracket the projection constraints may be imposed strongly
as $\chi_\mu=0$ and $\varphi^\mu=0$;
and the fundamental Dirac bracket becomes the projected symmetric identity
\begin{equation}
  \{\gamma_{\mu\nu}(x),\pi^{\rho\sigma}(y)\}_{\rm D}
  =
  \mathbb P_{\mu\nu}{}^{\rho\sigma}
  \delta_{\Sigma_\lambda}(x,y),
  \label{eq:projected-fundamental-dirac-bracket}
\end{equation}
where
\begin{equation}
  \mathbb P_{\mu\nu}{}^{\rho\sigma}
  \equiv
  \frac12
  \left(
    \mathcal P_\mu{}^\rho\mathcal P_\nu{}^\sigma
    +
    \mathcal P_\mu{}^\sigma\mathcal P_\nu{}^\rho
  \right).
  \label{eq:projected-identity-mixed}
\end{equation}
The remaining fundamental Dirac brackets vanish
\begin{equation}
\begin{aligned}
  \{\gamma_{\mu\nu}(x),
  \gamma_{\rho\sigma}(y)\}_{\rm D}
  &=0,
  \\
  \{\pi^{\mu\nu}(x),
  \pi^{\rho\sigma}(y)\}_{\rm D}
  &=0,
\end{aligned}
  \label{eq:remaining-dirac-brackets}
\end{equation}
so the Dirac bracket automatically projects all canonical variations onto the
observer-spatial subspace. If we enforce these constraints on Eq. (\ref{eq:poundsGammaVarFrame})
we get the determinant of the reduced spatial Hessian, which is
\begin{equation}
  \det W_{\rm red}
  =
  \left(\frac{\alpha}{2}\right)^5(-\alpha)
  =
  -\frac{\alpha^6}{32},
  \label{eq:reduced-hessian-determinant}
\end{equation}
and thus non-degenerate. The Hamilton equations then become

\begin{equation}
  \begin{split}
    \pounds_n\gamma_{\mu\nu}
    &=
    \{\gamma_{\mu\nu},H_{\rm GR}[n]\}_{\rm D}, \\
    \pounds_n\pi^{\mu\nu}
    &=
    \{\pi^{\mu\nu},H_{\rm GR}[n]\}_{\rm D}
    -
    \theta_n\pi^{\mu\nu}.
  \end{split}
\end{equation}

\noindent Imposing the constraints strongly, this can be expressed as

\begin{equation}
  \pounds_n\gamma_{\mu\nu}
  =
  \frac{2}{\alpha}
  \left(
    \pi_{\mu\nu}
    -
    \frac12\pi\gamma_{\mu\nu}
  \right),
\end{equation}

\begin{equation}
  \begin{split}
    \pounds_n\pi^{\mu\nu}
    &=
    -\alpha
    \left(
      R_\perp^{\mu\nu}
      -
      \frac12\gamma^{\mu\nu}R^\perp
    \right)
    \\
    &\quad
    -
    \frac{1}{\alpha}
    \left(
      2\pi^\mu{}_\rho
      \pi^{\nu\rho}
      -
      \pi\pi^{\mu\nu}
    \right)
    \\
    &\quad
    -
    \frac{1}{2\alpha}\gamma^{\mu\nu}
    \left(
      \pi_{\rho\sigma}\pi^{\rho\sigma}
      -
      \frac12\pi^2
    \right)
    \\
    &\quad
    -
    \theta_n\pi^{\mu\nu},
  \end{split}
\end{equation}

\noindent with $\pi \equiv \gamma_{\mu\nu}\pi^{\mu\nu}$. Alternatively,
we can also express them in a more compact form in terms of $K_{\mu\nu}$ as

\begin{equation}
  \pounds_n\gamma_{\mu\nu} = 2K_{\mu\nu},
\end{equation}

\begin{equation}
  \begin{split}
    \pounds_nK_{\mu\nu}
    &=
    2K_{\mu\rho}K^\rho{}_\nu
    -R^\perp_{\mu\nu}
    -
    KK_{\mu\nu}
    \\
    &+
    \frac14\gamma_{\mu\nu}
    \left(
      R^\perp
      +
      K_{\rho\sigma}K^{\rho\sigma}
      -
      K^2
    \right).
  \end{split}
\end{equation}
 \section{Quantization of gravity}
\label{sec:quantization-of-gravity}

The Hamiltonian from the previous section can be written as
\begin{equation}
  \begin{split}
    \mathcal{H}^\mathrm{CG}_\parallel
    &=
    \alpha
    \left(
      K_{\mu\nu}K^{\mu\nu}
      -
      K^2
      -
      R^\perp
    \right)
    \\
    &=
    \frac{1}{\alpha}
    \left(
      \pi_{\mu\nu}\pi^{\mu\nu}
      -
      \frac{1}{2}\pi^2
    \right)
    -
    \alpha R^\perp .
  \end{split}
  \label{eq:cg-hamiltonian-density-before-quantization}
\end{equation}
Equivalently, if we choose three independent vectors $e^\mu{}_i$ tangent to the leaf $\Sigma_\lambda$ then they satisfy $n_\mu e^\mu{}_i = 0$ and we can define
\begin{equation}\label{def:detSigma}
\begin{aligned}
  \gamma_\Sigma(x)
  &\equiv
  \det_\Sigma\gamma_{\mu\nu}(x)
  =
  \det\gamma_{ij}(x),
  \\
  \gamma_{ij}
  &=
  \gamma_{\mu\nu}e^\mu{}_ie^\mu{}_j
  =
  -g_{\mu\nu}e^\mu{}_ie^\mu{}_j,
\end{aligned}
\end{equation}
and defining $d\Sigma=d^3x\sqrt{\gamma_\Sigma}$ as the spatial volume element, then the full Hamiltonian is
\begin{equation}
  \begin{aligned}
    H^\mathrm{CG}_\parallel[\gamma,\pi]
    ={}&
    \int_{\Sigma_\tau}d\Sigma
    \left[
      \frac{1}{\alpha}
      \mathbb G_{\mu\nu\rho\sigma}[\gamma]\,
    \right.\\
    &\left.\qquad\qquad
      {}\times\pi^{\mu\nu}\pi^{\rho\sigma}
      -
      \alpha R^\perp[\gamma]
    \right],
  \end{aligned}
  \label{eq:cg-hamiltonian-coordinate-form}
\end{equation}
where
\begin{equation}
  \mathbb G_{\mu\nu\rho\sigma}[\gamma]
  \equiv
  \frac{1}{2}
  \left(
    \gamma_{\mu\rho}\gamma_{\nu\sigma}
    +
    \gamma_{\mu\sigma}\gamma_{\nu\rho}
    -
    \gamma_{\mu\nu}\gamma_{\rho\sigma}
  \right)
  \label{eq:dewitt-supermetric-lowered}
\end{equation}
is the observer-projected analogue of the Wheeler--DeWitt supermetric metric used in canonical geometrodynamics \cite{DeWitt1967Quantum,Misner1969QuantumCosmology,Kiefer2012QuantumGravity}.
All indices in Eq. \eqref{eq:dewitt-supermetric-lowered} are understood
as projected with respect to the observer congruence $n^\mu$.

We now quantize the reduced observer-spatial phase space following the
Dirac/canonical quantization of constrained gravitational systems
\cite{DeWitt1967Quantum,Dirac1950GenHamCJM,Dirac1958GenHam,Dirac1958HamiltonianGR}.
The equal-$\tau$ Dirac bracket is promoted to the projected commutator
\begin{equation}
  [\hat\gamma_{\mu\nu}(x),\hat\pi^{\rho\sigma}(y)]
  =
  i\hbar\,
  \mathbb P_{\mu\nu}{}^{\rho\sigma}
  \delta_{\Sigma_\tau}(x,y),
  \label{eq:commutatorRelations}
\end{equation}
where
\begin{equation}
  \mathbb P_{\mu\nu}{}^{\rho\sigma}
  =
  \frac{1}{2}
  \left(
    \mathcal P_\mu{}^\rho\mathcal P_\nu{}^\sigma
    +
    \mathcal P_\mu{}^\sigma\mathcal P_\nu{}^\rho
  \right).
  \label{eq:projected-symmetric-identity-qg}
\end{equation}
The projection constraints are imposed strongly,
\begin{equation}
  n_\mu\hat\pi^{\mu\nu}\ket{\Psi}=0,
  \qquad
  n^\mu\hat\gamma_{\mu\nu}\ket{\Psi}=0.
  \label{eq:qgConstraints}
\end{equation}
Thus the wavefunctional depends only on projected symmetric metric
configurations,
\begin{equation}
  \Psi=\Psi[\gamma_{\mu\nu};\tau],
  \qquad
  n^\mu\gamma_{\mu\nu}=0.
  \label{eq:spatial-wavefunctional}
\end{equation}
In the metric representation familiar from canonical quantum
gravity \cite{DeWitt1967Quantum,Jackiw1990Analysis,Symanzik1981Schrodinger},
\begin{equation}
  \hat\gamma_{\mu\nu}(x)\Psi[\gamma;\tau]
  =
  \gamma_{\mu\nu}(x)\Psi[\gamma;\tau],
  \label{eq:metric-representation}
\end{equation}
and
\begin{equation}
  \hat\pi^{\mu\nu}(x)\Psi[\gamma;\tau]
  =
  -i\hbar
  \frac{\delta_\perp}{\delta\gamma_{\mu\nu}(x)}
  \Psi[\gamma;\tau],
  \label{eq:projected-momentum-representation}
\end{equation}
where the projected functional derivative is
\begin{equation}
  \frac{\delta_\perp}{\delta\gamma_{\mu\nu}(x)}
  \equiv
  \mathbb P^{\mu\nu}{}_{\rho\sigma}
  \frac{\delta}{\delta\gamma_{\rho\sigma}(x)}.
  \label{eq:projected-functional-derivative}
\end{equation}

\noindent At this point one must choose an ordering for the kinetic term.
Algebraic symmetrization of the products $\pi_{\mu\nu}\pi^{\mu\nu}$ and
$\pi^2$ depends on the coordinate used on configuration space, so we use
the minimal field-redefinition-covariant ordering proved in Theorem
\ref{thm:ordering}, i.e., the kinetic term is ordered as the
Laplace--Beltrami operator associated with the kinetic metric on the reduced
configuration space of projected spatial metrics, as is common in
minisuperspace treatments
\cite{DeWitt1967Quantum,Misner1969QuantumCosmology,Kiefer2012QuantumGravity,HawkingPage1986Ordering,Halliwell1990QuantumCosmology}.

To derive it, define the gravitational inverse field-space
metric by absorbing the volume density into the kinetic tensor
\begin{equation}
  \mathscr K_{\mu\nu\rho\sigma}(x;\gamma)
  \equiv
  \frac{2}{\alpha}
  \sqrt{\gamma_\Sigma(x)}\,
  \mathbb G_{\mu\nu\rho\sigma}(x;\gamma),
  \label{eq:gravity-inverse-field-space-metric}
\end{equation}
then the kinetic Hamiltonian can be written in the natural form
\begin{equation}
  H_{\mathrm{kin}}^\mathrm{CG}
  =
  \frac{1}{2}
  \int_{\Sigma_\lambda}d^3x\,
  \mathscr K_{\mu\nu\rho\sigma}(x;\gamma)\,
  \pi^{\mu\nu}(x)\pi^{\rho\sigma}(x),
  \label{eq:gravity-kinetic-natural-form}
\end{equation}
and the corresponding field-space metric is the inverse of
$\mathscr K_{\mu\nu\rho\sigma}$. In three observer-spatial dimensions
it is
\begin{equation}
\begin{aligned}
  \mathscr K^{\mu\nu\rho\sigma}
  & (x;\gamma)
  =
  \frac{\alpha}{2}
  \frac{1}{\sqrt{\gamma_\Sigma(x)}}
  \\
  &\times
  \left[
    \frac{1}{2}
    \left(
      \gamma^{\mu\rho}\gamma^{\nu\sigma}
      +
      \gamma^{\mu\sigma}\gamma^{\nu\rho}
    \right)
    -
    \gamma^{\mu\nu}\gamma^{\rho\sigma}
  \right],
\end{aligned}
  \label{eq:gravity-field-space-metric}
\end{equation}
with all inverses understood on the observer-spatial subspace. Now using the $\det\nolimits_\Sigma \gamma_{\mu\nu}(x)$ defined in Eq.~\eqref{def:detSigma} the determinant of the purely DeWitt part at each spatial point is
\begin{equation}
  \det_\Sigma
  \left[
    \frac{1}{2}
    \left(
      \gamma^{\mu\rho}\gamma^{\nu\sigma}
      +
      \gamma^{\mu\sigma}\gamma^{\nu\rho}
    \right)
    -
    \gamma^{\mu\nu}\gamma^{\rho\sigma}
  \right]
  =
  -16\,\gamma_\Sigma^{-4},
  \label{eq:dewitt-inverse-determinant}
\end{equation}
where the determinant is taken on the six-dimensional space of projected
symmetric components. To compute it we use the reduced index notation
$A\equiv (\mu\nu)$ and $B=(\sigma\rho)$; this way, we can say
$G^{\mu\nu\sigma\rho} \leftrightarrow G^{AB}$ and take the determinant
of $G^{AB}$ given $A = 1 \leftrightarrow (11)$,
$A = 2 \leftrightarrow (22)$, $A = 3 \leftrightarrow (33)$,
$A = 4 \leftrightarrow (23)$, $A = 5 \leftrightarrow (12)$,
$A = 6 \leftrightarrow (13)$ and the same for $B$.
Therefore
\begin{equation}
  \begin{split}
  \det_\Sigma \mathscr K^{\mu\nu\rho\sigma}(x;\gamma)
  &=
  \left(
    \frac{\alpha}{2\sqrt{\gamma_\Sigma(x)}}
  \right)^6
  \left(
    -16\,\gamma_\Sigma^{-4}(x)
  \right),
  \end{split}
  \label{eq:gravity-field-space-metric-determinant}
\end{equation}
and hence the local Laplace--Beltrami density is
\begin{equation}
\begin{aligned}
  \sqrt{\mathscr M_{\mathrm{grav},x}[\gamma]}
  &=
  \sqrt{\left|
  \det_\Sigma
  \mathscr K^{\mu\nu\rho\sigma}(x;\gamma)
  \right|}
  \\
  &=
  \frac{\alpha^3}{2}
  \left[\gamma_\Sigma(x)\right]^{-7/2},
\end{aligned}
  \label{eq:gravity-local-lb-density}
\end{equation}
so the full functional density becomes
\begin{equation}
  \sqrt{\mathscr M_{\mathrm{grav}}[\gamma]}
  =
  \prod_{x\in\Sigma_\lambda}^{\prime}
  \frac{\alpha^3}{2}
  \left[\gamma_\Sigma(x)\right]^{-7/2},
  \label{eq:gravity-functional-lb-density}
\end{equation}
where the prime denotes the product over the projected symmetric
configuration variables only, an ultralocal product whose action on
cylinder wavefunctionals is finite block by block in the sense of
Remark \ref{rem:field-space-extension}.
This is the same as to say
\begin{equation}
  \ln\sqrt{\mathscr M_{\mathrm{grav}}[\gamma]}
  =
  \mathrm{Tr}^{\perp}
  \left[
    \ln\left(\frac{\alpha^3}{2}\right)
    -
    \frac{7}{2}\ln\gamma_\Sigma
  \right],
  \label{eq:gravity-functional-lb-density-log}
\end{equation}
where the trace is over the reduced projected field-space modes.
Our ordered kinetic operator is therefore
\begin{equation}
  \hat H_{\mathrm{kin}}^\mathrm{CG}
  =
  -\frac{\hbar^2}{2}\Delta_{\mathrm{grav}},
  \label{eq:pena-ordered-kinetic-hamiltonian}
\end{equation}
where
\begin{equation}
\begin{aligned}
  \Delta_{\mathrm{grav}}
  ={}&
  \frac{2}{\alpha}
  \frac{1}{\sqrt{\mathscr M_{\mathrm{grav}}[\gamma]}}
  \\
  &\times
  \int_{\Sigma_\lambda}d^3x\,
  \frac{\delta_\perp}{\delta\gamma_{\mu\nu}(x)}
  \\
  &\times
  \left[
    \sqrt{\mathscr M_{\mathrm{grav}}[\gamma]}\,
    \sqrt{\gamma_\Sigma(x)}
  \right.\\
  &\left.\qquad\qquad
    {}\times\mathbb G_{\mu\nu\rho\sigma}(x;\gamma)
    \frac{\delta_\perp}{\delta\gamma_{\rho\sigma}(x)}
  \right].
\end{aligned}
  \label{eq:gravity-laplace-beltrami-expanded}
\end{equation}
This ordering satisfies the minimality condition
$\hat H_{\mathrm{kin}}^\mathrm{CG}\,1=0$
and therefore introduces no additional scalar quantum potential on
configuration space. Furthermore, contrary to Tsamis and
Woodard assumptions \cite{TsamisWoodard1987Regulated} here we do not
have Hamiltonian or momentum constraints that would make diverging
$\delta^3(0)$ or $[\delta^3(0)]^2$ terms appear.

Now note that $\hat{R}^\perp$ satisfies
$\hat R^\perp[\gamma]\Psi[\gamma;\tau] = R^\perp[\gamma]\Psi[\gamma;\tau]$,
so our equation for quantum gravity without matter is then given by
\begin{equation}
  \begin{split}
    \frac{\partial}{\partial\tau}\Psi[\gamma;\tau]
    &=
    i
    \int_{\Sigma_\tau}d^3x
    \\
    &\quad\times
    \Bigg[
      \frac{16\pi G\hbar}{c^2}
      \frac{1}{\sqrt{\mathscr M_{\mathrm{grav}}[\gamma]}}
      \\
    &\qquad
      {}\times
      \frac{\delta_\perp}{\delta\gamma_{\mu\nu}}
      \\
    &\qquad
      {}\times
      \Bigg(
        \sqrt{\mathscr M_{\mathrm{grav}}[\gamma]}\,
        \sqrt{\gamma_\Sigma}
      \\
    &\qquad\qquad
        {}\times
        \mathbb G_{\mu\nu\rho\sigma}(x;\gamma)
        \frac{\delta_\perp}{\delta\gamma_{\rho\sigma}}
      \Bigg)
      \\
    &\qquad
      +
      \frac{c^4}{16\pi G\hbar}
      \sqrt{\gamma_\Sigma}\,R^\perp[\gamma]
    \Bigg]
    \Psi[\gamma;\tau],
  \end{split}
  \label{eq:QG}
\end{equation}
which is free of ordering ambiguities in the sense of Theorem
\ref{thm:ordering}.

Since factor ordering is a long-standing difficulty of ADM-based canonical
quantum gravity \cite{DeWitt1967Quantum,Kiefer2012QuantumGravity}, two
clarifications are in order. The first concerns the residual ambiguity of
quantization on curved configuration spaces, where covariance and
Hermiticity alone admit the one-parameter family
\begin{equation}
  \hat T_\xi
  =
  -\frac{\hbar^2}{2}
  \left(
    \Delta_{\mathrm{grav}}
    -
    \xi R[\mathscr K]
  \right),
  \label{eq:xi-curvature-family}
\end{equation}
with $R[\mathscr K]$ the scalar curvature of the kinetic metric on the
reduced configuration space \cite{DeWitt1957DynamicalTheory}. Each member
is covariant and Hermitian so these requirements cannot by themselves
select a $\xi$. However, in Corollary \ref{cor:xi-ambiguity} we see that
the $\xi$ term is equivalent to a non-minimal scalar potential
(an order-$\hbar^2$ interaction with no classical counterpart)
and it is shown that the minimality requirement
$\hat T\,1=0$ excludes it because quantization must order the kinetic term of
Eq.~\eqref{eq:cg-hamiltonian-coordinate-form} without introducing new
interactions.

The second clarification concerns why the same theorem does not settle
the ordering of the Wheeler--DeWitt operator in the ADM quantization,
where Laplace--Beltrami orderings on superspace have been employed
since DeWitt's original paper
\cite{DeWitt1967Quantum,HawkingPage1986Ordering}. The operator is
available there but the hypotheses that single it out are not. The ADM
Hamiltonian density is a first-class constraint imposed as
$\hat{\mathcal H}\,\Psi=0$ so no external evolution parameter survives
and the physical inner product on solutions remains unknown until the
problem of time is resolved
\cite{Kuchar2011Time,Isham1993Canonical,Anderson2012Problem}. The
Hermiticity hypothesis of Theorem \ref{thm:ordering} therefore has no
invariant formulation on superspace. The ordering of
$\hat{\mathcal H}$ is moreover entangled with the closure of the
quantum constraint algebra, and this entanglement makes a
regulator indispensable in that formulation. As Tsamis and Woodard
pointed out \cite{TsamisWoodard1987Regulated} the factor-ordering
problem of ADM quantum gravity is by definition the search for an
ordering of the constraint operators under which the physical
conditions $\hat{\mathcal H}\,\Psi=0$ and
$\hat{\mathcal H}_i\,\Psi=0$ admit nonzero solutions, which requires
the commutator of two constraints to reproduce a combination of
constraints with the structure operators standing to their left.
Checking this closure forces one to commute nonlinear products of
$\hat\gamma$ and $\hat\pi$ at coincident points, rearranging distributions
with identities that are valid for smooth test functions and undefined for
operator-valued ones. Applied in different orders those identities yield
contradictory answers, which is how Komar's proposed ordering was shown to
fail and why the constraints carry undetermined $\delta^3(0)$ and
$[\delta^3(0)]^2$ terms; the closure requirement cannot even be checked at
the formal, unregulated level
\cite{Komar1979Ordering,TsamisWoodard1987Regulated,FriedmanJack1988Constraints}.
Remark \ref{rem:tsamis-woodard} traces the argument in detail.

Thankfully, neither obstruction arises in our theory thanks to the lack of
any Hamiltonian or momentum constraints, and with them goes the need for a
regulator. The congruence supplies the proper time $\tau$ and the unitarity
of Eq. \eqref{relativisticSchrodinger} forces its generator to be symmetric
with respect to the invariant configuration-space measure. The only
constraints are the second-class projection pair of Eq.
\eqref{eq:qgConstraints} whose preservation
fixes the multipliers without secondary constraints. No product of
canonical operators enters them and their strong imposition merely
restricts the argument of the wavefunctional in Eq.
\eqref{eq:spatial-wavefunctional}. The premise of the Tsamis--Woodard
necessity argument (a first-class algebra whose quantum closure must be
verified) is therefore absent by construction, and the one remaining
problem (the ordering of the kinetic term) is answered by Theorem
\ref{thm:ordering} at the level of finite-dimensional geometry without any
ultraviolet regulator.

With the ordering settled, we can now ask how many of the structural problems
of ADM-based canonical quantum gravity the present construction actually
resolves. Both formalisms live on a globally hyperbolic spacetime carrying a
unit timelike vector field so the question deserves some attention. In the
ADM scheme the metric is decomposed in a chart adapted to a foliation and
the lapse and shift become undetermined multipliers of first-class constraints,
\begin{equation}
\begin{aligned}
  H_{\rm ADM}
  &=
  \int_{\Sigma_t}d^3x
  \left(
    N\mathcal H
    +
    N^i\mathcal H_i
  \right),
  \\
  \hat{\mathcal H}\,\Psi&=0,
  \\
  \hat{\mathcal H}_i\,\Psi&=0,
\end{aligned}
  \label{eq:adm-constraint-structure}
\end{equation}
so that, up to boundary charges of Regge--Teitelboim type
\cite{ReggeTeitelboim1974SurfaceIntegrals}, the canonical Hamiltonian
vanishes on the constraint surface and the quantum state satisfies a
kernel equation carrying no evolution parameter. This is the frozen
dynamics at the origin of the problem of time
\cite{Kuchar2011Time,Isham1993Canonical,Anderson2012Problem,DeWitt1967Quantum}.

However, note how in our theory the Hamiltonian and momentum constraints
never arise because there is
no embedding data to vary. The congruence $n^\mu$ is observer structure
external to the configuration space so deformations of the leaves are not
transformations the theory must declare gauge. The normal components of the
metric velocity fall in the kernel of the Hessian (Section
\ref{sec:lie-dragged-gravity-projection-constraints}) and the degeneracy they
produce is exhausted by the projection pair of Eq. \eqref{eq:qgConstraints}.
Displacement along the congruence is therefore genuine evolution generated
by the nonvanishing $H^{\rm CG}_\parallel$, while invariance under
observer-spatial diffeomorphisms is carried by the tensorial character of the
variables and the invariant densities $\sqrt{\gamma_\Sigma}$ and
$\sqrt{\mathscr M_{\mathrm{grav}}}$.

The frozen dynamics is also what impedes us to build a sensible Hilbert space for
Wheeler--DeWitt's theory. The kernel equation is second order in the
superspace variables so its Klein--Gordon-type current on the indefinite
DeWitt supermetric fails to be positive and no conserved inner product
exists before a time variable is isolated among the degrees of freedom
\cite{Kuchar2011Time,Isham1993Canonical,DeWitt1967Quantum}. Equation
\eqref{eq:QG} is instead first order in $\tau$ and the
Laplace--Beltrami form of the generator makes it symmetric with
respect to the invariant configuration-space measure, so the inner
product
\begin{equation}
  \langle\Psi_1,\Psi_2\rangle_\tau
  =
  \int D\gamma\,
  \sqrt{\mathscr M_{\mathrm{grav}}[\gamma]}\,
  \Psi_1^*[\gamma;\tau]\,
  \Psi_2[\gamma;\tau]
  \label{eq:qg-inner-product}
\end{equation}
is conserved along the proper-time flow, where $D\gamma$ denotes the
ultralocal product of the single-point volume elements over the
projected symmetric modes evaluated on cylinder wavefunctionals in
the blockwise sense of Remark \ref{rem:field-space-extension}. Its
positivity is a property of the measure alone and holds regardless of
the signature of the projected DeWitt metric which only fixes the
hyperbolic type of $\Delta_{\mathrm{grav}}$, as noted in Remark
\ref{rem:indefinite-kinetic-metric}. For the free wavepackets
constructed below the kinetic tensor is frozen at the background, so the
measure factor is constant and Eq.
\eqref{eq:qg-inner-product} reduces to the norm conservation of Eq.
\eqref{eq:free-norm-conservation}. The state space is thus the $L^2$
completion of the cylinder wavefunctionals in this inner product
carrying a unitary proper-time evolution, and the inner-product
problem of the ADM quantization does not reappear. The only freedom
left in Eq. \eqref{eq:qg-inner-product} is the state-independent
normalization of the ultralocal product measure, which cancels in
every normalized matrix element as discussed in Remark
\ref{rem:field-space-extension}.

Let us now briefly discuss covariance and foliation dependence.
The difference between ADM and our formalism lies in where the split resides. In ADM
it resides in an adapted chart whose slicing freedom is gauge, so the quantum
theory must undo the arbitrariness of the foliation through the closure of the
constraint algebra entailing the difficulties recalled above. But in our formalism it resides in
the projector $\mathcal P^\mu{}_\nu$. Both members of the canonical pair
remain spacetime tensors and no lapse or shift is ever introduced, so there
is no adapted chart to undo. The only thing shared by the two constructions is global
hyperbolicity; i.e., the timelike congruence itself. Yet notoriously the extra ADM layer of an
arbitrary slicing plus the gauge machinery guaranteeing independence from it
is removed in the present formalism. The leaves $\Sigma_\tau$ are
the rest spaces of the chosen observers and the freedom that remains, the
choice of congruence, is physical. Different congruences are different
families of observers, related by the transformation theory of the splitting
formalism \cite{JantzenCariniBini1992GEM}. Relating their quantum descriptions
is a transformation question of the same kind as a change of inertial frame
in special-relativistic quantum theory, which is answered by operators implementing
the change of observer rather than by constraints on the states.

In summary, the frozen dynamics, the inner product, and the factor
ordering are solved by construction in our theory and the foliation
gauge problem never appears because nothing chart-dependent is
introduced.

We now study three solutions of Eq. \eqref{eq:QG}; namely, general
graviton wavepackets, quantum cosmology in a reduced minisuperspace, and
Schwarzschild black holes. In the last two the classical singularities are
resolved.

\subsection{General wavepacket solutions}
\label{subsec:general-wavepacket-solutions}

To solve Eq. (\ref{eq:QG}) let us linearize around a Ricci-flat background

\begin{equation}
  \gamma_{\mu\nu} = \bar{\gamma}_{\mu\nu} + h_{\mu\nu}, \qquad R^\perp[\bar{\gamma}] = 0,
\end{equation}

\noindent In the free approximation we can freeze the kinetic tensor at the background

\begin{equation}
\begin{aligned}
  \sqrt{\mathscr M_{\mathrm{grav}}[\gamma]}
  \mathbb{G}_{\mu\nu\sigma\rho}[\gamma]
  &\longrightarrow
  \bar{\mathbb{G}}_{\mu\nu\sigma\rho}
  \\
  &\equiv
  \mathbb{G}_{\mu\nu\sigma\rho}[\bar{\gamma}]
  \sqrt{\mathscr M_{\mathrm{grav}}[\bar{\gamma}]},
\end{aligned}
\end{equation}
\noindent and in this section $d\Sigma = d^3x(\bar{\gamma}_\Sigma)^{1/2}$. To study the free
wavepacket we for now ignore the curvature potential $R^\perp[\gamma]\longrightarrow 0$, which shall be incorporated later as an interaction term. Then Eq. (\ref{eq:QG}) becomes
\begin{equation}\label{eq:free-QG-equation}
  \partial_\tau\Psi_0[h,\tau]
  =
  i\frac{16\pi G\hbar}{c^2}
  \int_{\Sigma_\tau}d\Sigma\,
  \bar{\mathbb{G}}_{\mu\nu\sigma\rho}
  \frac{\delta^2_\perp\Psi_0}
  {\delta h_{\mu\nu}\delta h_{\rho\sigma}}.
\end{equation}

\noindent
The momentum eigenfunctionals are
\begin{equation}\label{eq:momentum-eigenfunctional}
  \Phi_\Pi[h]
  =
  \exp\!\left[
    \frac{i}{\hbar}
    \int_{\Sigma_\tau}d\Sigma\,
    \Pi^{\mu\nu}h_{\mu\nu}
  \right],
\end{equation}
so
\begin{equation}\label{eq:functional-laplacian-eigenvalue}
\begin{aligned}
  &\int_{\Sigma_\tau}d\Sigma\,
  \bar{\mathbb{G}}_{\mu\nu\sigma\rho}
  \\
  &\qquad\times
  \frac{\delta^2_\perp\Phi_\Pi}
  {\delta h_{\mu\nu}\delta h_{\rho\sigma}}
  \\
  &=
  -
  \frac{1}{\hbar^2}
  \left(
    \int_{\Sigma_\tau}d\Sigma\,
    \bar{\mathbb{G}}_{\mu\nu\sigma\rho}
    \Pi^{\mu\nu}\Pi^{\rho\sigma}
  \right)
  \Phi_\Pi,
\end{aligned}
\end{equation}
and thus the most general free wavepacket is the functional Fourier superposition \cite{Jackiw1990Analysis,Symanzik1981Schrodinger,GuthPi1985,LongShore1998}
\begin{equation}\label{eq:general-free-wavepacket}
\begin{split}
  \Psi_0[h,\tau]
  &=
  \frac{1}{(2\pi\hbar)^{N/2}}
  \int D\Pi\,
  \widetilde{\Psi}_0[\Pi]
  \\
  &\quad\times
  \exp\!\Bigg[
    \frac{i}{\hbar}
    \int_{\Sigma_\tau}d\Sigma\,
    \Pi^{\mu\nu}h_{\mu\nu}
    \\
  &\qquad
    -
    \frac{i}{\hbar}
    \frac{16\pi G}{c^2}
    (\tau-\tau_0)
    \\
  &\qquad\quad\times
    \int_{\Sigma_\tau}d\Sigma\,
    \bar{\mathbb{G}}_{\mu\nu\sigma\rho}
    \Pi^{\mu\nu}\Pi^{\rho\sigma}
  \Bigg].
\end{split}
\end{equation}
Here $D\Pi$ is the momentum-space measure on the projected symmetric modes (see Remark \ref{rem:field-space-extension}) and $N$ is the formal count of those modes.
The profile $\widetilde{\Psi}_0[\Pi]$ is arbitrary, subject only to square integrability
\begin{equation}\label{eq:momentum-profile-normalisation}
  \int D\Pi\,
  |\widetilde{\Psi}_0[\Pi]|^2
  =
  1.
\end{equation}
A useful closed subfamily is obtained by choosing
a Gaussian in momentum space
\cite{JackiwKerman1979Variational,CornwallJackiwTomboulis1974,Stevenson1985Gaussian,Yee1997Variational}
\begin{equation}\label{eq:gaussian-momentum-profile}
\begin{split}
  \widetilde{\Psi}_0[\Pi]
  &=
  \mathcal{N}_\Pi
  \exp\!\Bigg[
    -
    \frac{1}{4\hbar}
    \int_{\Sigma_\tau}d\Sigma_x
    \\
  &\qquad\quad\times
    \int_{\Sigma_\tau}d\Sigma_y\,
    \\
  &\qquad\quad\times
    \left[
      \Pi^{\mu\nu}(x)
      -
      \Pi_{(0)}^{\mu\nu}(x)
    \right]
    \\
  &\qquad\quad\times
    B^{-1}_{\mu\nu\rho\sigma}(x,y)
    \\
  &\qquad\quad\times
    \left[
      \Pi^{\rho\sigma}(y)
      -
      \Pi_{(0)}^{\rho\sigma}(y)
    \right]
    \\
  &\qquad
    -
    \frac{i}{\hbar}
    \int_{\Sigma_\tau}d\Sigma\,
    \Pi^{\mu\nu}h^{(0)}_{\mu\nu}
  \Bigg],
\end{split}
\end{equation}
where $h^{(0)}_{\mu\nu}$ and $\Pi^{\mu\nu}_{(0)}$ are the initial centre of the packet, and
$B^{-1}_{\mu\nu\rho\sigma}(x,y)$ is the inverse covariance kernel.
For a normalisable Gaussian $B > 0$ must be symmetric and strictly positive on the reduced phase space, and trace class for the mode sums below to converge.
On the other side, the normalisation we need is
\begin{equation}\label{eq:NPi-normalisation}
  \mathcal{N}_\Pi
  =
  \left[\det(2\pi\hbar B)\right]^{-1/4},
\end{equation}
and if we introduce the notation
\begin{equation}\label{eq:Ttau-definition}
  T_\tau
  \equiv
  \frac{16\pi G}{c^2}(\tau-\tau_0),
\end{equation}
to be able to write the Fourier integral more compactly, we can define
\begin{equation}\label{eq:Atau-definition}
  A(\tau)
  \equiv
  \left(
    B^{-1}
    +
    \frac{64\pi iG}{c^2}
    (\tau-\tau_0)
    \bar{\mathbb{G}}
  \right)^{-1},
\end{equation}
To then have a Gaussian integral of the form
\begin{equation}\label{eq:gaussian-configuration-wavepacket}
\begin{split}
  \Psi_0[h,\tau]
  &=
  \mathcal{N}(\tau)
  \exp\!\left[
    -
    \frac{1}{\hbar}
    \int_{\Sigma_\tau}d\Sigma_x
    \int_{\Sigma_\tau}d\Sigma_y
    \right.\\
  &\left.
    Z_{\mu\nu}(\tau,x)
    A^{\mu\nu\rho\sigma}(\tau;x,y)
    Z_{\rho\sigma}(\tau,y)
    \right.\\
  &\left.
    +
    \frac{i}{\hbar}
    \int_{\Sigma_\tau}d\Sigma\,
    \Pi_{(0)}^{\mu\nu}
    \left(h_{\mu\nu}-h^{(0)}_{\mu\nu}\right)
    \right.\\
  &\left.
    -
    \frac{i}{\hbar}T_\tau
    \int_{\Sigma_\tau}d\Sigma\,
    \bar{\mathbb{G}}_{\mu\nu\sigma\rho}
    \Pi_{(0)}^{\mu\nu}\Pi_{(0)}^{\rho\sigma}
  \right],
\end{split}
\end{equation}
where
\begin{equation}\label{eq:Ztau-definition}
\begin{aligned}
  Z_{\mu\nu}(\tau,x)
  ={}&
  h_{\mu\nu}(x)
  -
  h^{(0)}_{\mu\nu}(x)
  -
  2T_\tau
  \bar{\mathbb{G}}_{\mu\nu\sigma\rho}(x)
  \\
  &\times
  \Pi_{(0)}^{\rho\sigma}(x).
\end{aligned}
\end{equation}
Therefore the centre of the packet follows
\begin{equation}\label{eq:packet-centre-evolution}
\begin{aligned}
  (h_{\rm cl})_{\mu\nu}(\tau,x)
  ={}&
  h^{(0)}_{\mu\nu}(x)
  +
  2T_\tau
  \bar{\mathbb{G}}_{\mu\nu\sigma\rho}(x)
  \Pi_{(0)}^{\rho\sigma}(x)
  \\
  ={}&
  h^{(0)}_{\mu\nu}(x)
  +
  \frac{32\pi G}{c^2}
  (\tau-\tau_0)
  \bar{\mathbb{G}}_{\mu\nu\sigma\rho}(x)
  \\
  &\times
  \Pi_{(0)}^{\rho\sigma}(x).
\end{aligned}
\end{equation}
With this eigenstate the configuration-space normalisation factor is
\begin{equation}\label{eq:configuration-normalisation}
  \mathcal{N}(\tau)
  =
  \left[\det\!\left(\frac{2B}{\pi\hbar}\right)\right]^{1/4}
  \left[
    \det
    \left(
      \mathbf{1}
      +
      4iT_\tau B\bar{\mathbb{G}}
    \right)
  \right]^{-1/2},
\end{equation}
where the determinant branch is chosen continuously from $\tau=\tau_0$. Because the free evolution only multiplies each momentum mode by a phase, the norm is conserved
\begin{equation}\label{eq:free-norm-conservation}
  \int Dh\,|\Psi_0[h,\tau]|^2
  =
  \int D\Pi\,|\widetilde{\Psi}_0[\Pi]|^2
  =
  1.
\end{equation}
The measures $Dh$ and $D\Pi$ and all determinants above are understood with the same gauge-fixed physical measure that defines the functional Laplacian, block by block on the projected modes as in Remark \ref{rem:field-space-extension}. In every normalized quantity the determinants combine into expressions of the Fredholm type $\det(\mathbf 1 + 4iT_\tau B\bar{\mathbb{G}})$, which converge when $B$ is trace class while the remaining formal factors are state independent and cancel.

We can now recover the curvature term if we split the full generator as
\begin{equation}
  \hat{H}_\parallel = \hat{H}_0 + \hat{H}_R,
\end{equation}
where
\begin{equation}
  \hat{H}_0
  \equiv
  \frac{16\pi G\hbar}{c^2}
  \int_{\Sigma_\tau}d\Sigma\,
  \bar{\mathbb{G}}_{\mu\nu\sigma\rho}
  \frac{\delta^2_\perp}{\delta h_{\mu\nu}\delta h_{\rho\sigma}},
\end{equation}
and
\begin{equation}
  \hat{H}_R
  =
  \int_{\Sigma_\tau}d\Sigma\,
  \frac{c^4}{16\pi G\hbar}
  R^\perp[\bar{\gamma}+h].
\end{equation}
The exact interacting wavepacket may be written in the interaction picture using the Dyson expansion \cite{Jackiw1990Analysis,Dyson1949SMatrix} as
\begin{equation}\label{eq:interaction-picture-wavepacket}
\begin{aligned}
  \Psi[h,\tau]
  ={}&
  e^{i(\tau-\tau_0)\hat{H}_0}
  \mathcal{T}
  \\
  &\times
  \exp\!\left[
    i\int_{\tau_0}^{\tau}d\tau'\,
    \hat{H}_{R,I}(\tau')
  \right]
  \Psi[h,\tau_0],
\end{aligned}
\end{equation}
with $\mathcal{T}$ the proper-time-ordering and
\begin{equation}
  \hat{H}_{R,I}(\tau)
  =
  e^{-i(\tau-\tau_0)\hat{H}_0}
  \hat{H}_R
  e^{i(\tau-\tau_0)\hat{H}_0},
\end{equation}
which can be expanded into a Dyson series. Thus Eq. (\ref{eq:general-free-wavepacket}) gives the exact free wavepacket basis, while Eq. (\ref{eq:interaction-picture-wavepacket}) gives the exact curvature-corrected wavepacket as an interaction-picture evolution.

\subsection{Exact solution for cosmological quantum evolution}
\label{sec:flat-flrw-sector}

We now solve Eq. (\ref{eq:QG}) in the spatially flat FLRW minisuperspace
sector, following the symmetry-reduction logic of quantum cosmology
\cite{HartleHawking1983WaveFunction,Misner1969QuantumCosmology,Kiefer2012QuantumGravity,Halliwell1990QuantumCosmology}.
Quantum fluctuations would produce local deviations from homogeneity, but
at cosmological scales these are small and we neglect them. Choosing the
comoving observer congruence adapted to the homogeneous slices,

\begin{equation}
  \gamma_{\mu\nu}(\lambda)
  =
  a^2(\lambda)\,\bar{\gamma}_{\mu\nu},
  \qquad
  \bar{\gamma}_{\mu\nu}
  \equiv
  n_\mu n_\nu-\eta_{\mu\nu}.
  \label{eq:flrw-projected-metric}
\end{equation}
For this adapted congruence one also has
$\gamma_\Sigma = a^6\bar{\gamma}_\Sigma$ and
$\sqrt{\gamma_\Sigma} = a^3\sqrt{\bar{\gamma}_\Sigma}$.
Now since the observer-spatial leaves are flat we can find
$R^\perp[\gamma]=0$, which greatly simplifies our calculations.
Now, if a cosmological constant is included in the
Einstein--Hilbert action the reduced Hamiltonian receives the
multiplicative term
\begin{equation}
  H_\Lambda
  =
  2\alpha\Lambda V_0 a^3,
  \qquad
  V_0\equiv\int_{\Sigma_\lambda}d^3x\sqrt{\bar{\gamma}_\Sigma},
  \label{eq:flrw-lambda-hamiltonian}
\end{equation}
and the gravitational kinetic term appears as
\begin{equation}
  K_{\mu\nu}K^{\mu\nu}-K^2
  =
  -6\left(\frac{a'}{a}\right)^2,
\end{equation}
where a prime denotes $d/d\lambda=d/d(c\tau)$. Thus the Lagrangian
becomes
\begin{equation}
  L_{\rm FLRW}
  =
  -6\alpha V_0aa'^2
  -
  2\alpha\Lambda V_0a^3,
  \label{eq:flrw-reduced-lagrangian}
\end{equation}
up to the sign convention for the cosmological term. With the convention
used in Eq. (\ref{eq:flrw-lambda-hamiltonian}), this gives the reduced
Hamiltonian
\begin{equation}
  H_{\rm FLRW}
  =
  -\frac{p_a^2}{24\alpha V_0a}
  +
  2\alpha\Lambda V_0a^3,
  \label{eq:flrw-reduced-hamiltonian-a}
\end{equation}
and if we define the canonical variable
\begin{equation}
  u
  =
  a^{3/2}
  \sqrt{\frac{c^3V_0}{3\pi G}}
  =
  \frac{4}{3}\sqrt{3\alpha V_0}\,a^{3/2},
  \label{eq:flrw-u-variable}
\end{equation}
then the corresponding momentum can be checked to satisfy
\begin{equation}
  p_u
  =
  \frac{p_a}{2\sqrt{3\alpha V_0a}},
  \qquad
  p_a
  =
  2\sqrt{3\alpha V_0a}\,p_u,
\end{equation}
so Eq. (\ref{eq:flrw-reduced-hamiltonian-a}) becomes
\begin{equation}
  H_{\rm FLRW}
  =
  -\frac{1}{2}p_u^2
  +
  \frac{3\Lambda}{8}u^2.
  \label{eq:flrw-reduced-hamiltonian-u}
\end{equation}

\noindent Notoriously, this is the Hamiltonian of an inverted harmonic oscillator.
This becomes clearer if we define $\Omega^2 \equiv 3\Lambda/4$ and
perform the canonical transformation $Q = -p_u / \Omega$ and
$P = \Omega u$. Then

\begin{equation}
  H = \frac{1}{2}P^2 - \frac{1}{2}\Omega^2 Q^2,
\end{equation}

\noindent which has a flipped sign in the potential term. This is known
\cite{kumar2026saddlepoint,Barton1986InvertedOscillator} to produce a continuous energy spectrum
which is unbounded from below and from above; i.e., its spectrum
covers all real values. We interpret this result as the
ability of the cosmological Hamiltonian to cancel the matter Hamiltonian,
(or that total energy is null $E = 0$) for reasons that shall become clear
later. For now, we can proceed to quantization thanks to
Theorem \ref{thm:ordering}; here the reduced
configuration-space metric in the variable $u$ is constant
with inverse metric $G^{uu}=-1$, so the Laplace--Beltrami operator
reduces to the flat second derivative. Covariance guarantees that
quantizing in the original variable $a$ and passing to $u$ afterwards
yields the same operator, and by Corollary \ref{cor:xi-ambiguity}
minimality leaves no residual scalar potential, so the quantum
kinetic Hamiltonian becomes

\begin{equation}
  \hat H_{\rm kin}
  =
  -\frac{\hbar^2}{2}\Delta_{\rm FLRW}
  =
  \frac{\hbar^2}{2}\frac{\partial^2}{\partial u^2},
  \label{eq:flrw-lb-operator-u}
\end{equation}
and the full equation is given by
\begin{equation}
  \frac{\partial}{\partial\tau}\Psi(u,\tau)
  =
  -\frac{i\hbar c}{2}
  \frac{\partial^2\Psi}{\partial u^2}
  -
  \frac{3i\Lambda c}{8\hbar}u^2\Psi.
  \label{eq:flrw-quantum-wave-eq}
\end{equation}
For $\Lambda>0$ a general solution of this equation
on the full line is given by the propagator
\begin{equation}
\begin{aligned}
  K_\Lambda(u,u_0;\Delta\tau)
  ={}&
  \sqrt{
    \frac{i\Omega}{2\pi\hbar\sinh(\Omega c\Delta\tau)}
  }
  \\
  &\times
  \exp\left[
    -\frac{i\Omega}{2\hbar}
    \frac{1}{\sinh(\Omega c\Delta\tau)}
    \right.\\
  &\qquad\left.
    {}\times
    \left(
      u^2+u_0^2
    \right)
    \cosh(\Omega c\Delta\tau)
    \right.\\
  &\qquad\left.
    {}
    -
    2uu_0
  \right],
\end{aligned}
  \label{eq:flrw-propagator-full-line}
\end{equation}
where $\Delta\tau=\tau-\tau_0$ and the
square-root branch is chosen continuously from $\Delta\tau=0$.
But since the cosmological scale factor satisfies $a\geq0$, the physical
configuration space is the half-line $u\geq0$. A self-adjoint boundary condition at $u=0$ must therefore be imposed (for more details one can also see \cite{ReedSimon1975Vol2,BonneauFarautValent2001SelfAdjoint,KunstatterLouko2012HalfLine}). For the DeWitt condition
$\Psi(0,\tau)=0$ \cite{DeWitt1967Quantum,Kiefer2012QuantumGravity,Halliwell1990QuantumCosmology}
the half-line propagator is obtained by the image method,

\begin{equation}
\begin{aligned}
  K_\Lambda^{\rm D}(u,u_0;\Delta\tau)
  ={}&
  K_\Lambda(u,u_0;\Delta\tau)
  \\
  &
  -
  K_\Lambda(u,-u_0;\Delta\tau),
  \\
  u,u_0>0,
\end{aligned}
  \label{eq:flrw-dirichlet-propagator}
\end{equation}
and hence the solution is
\begin{equation}
  \Psi(u,\lambda)
  =
  \int_0^\infty du_0\,
  K_\Lambda^{\rm D}(u,u_0;\lambda-\lambda_0)\Psi(u_0,\lambda_0).
  \label{eq:flrw-exact-solution-half-line}
\end{equation}
Other self-adjoint extensions correspond to other boundary conditions at
$u=0$ and define different quantum cosmologies.

Anyway, it might be more interesting to study the Heisenberg equations
generated by Eq. (\ref{eq:flrw-quantum-wave-eq}). These are
\begin{equation}
  \hat u'
  =
  -\hat p_u,
  \qquad
  \hat p_u'
  =
  -\frac{3\Lambda}{4}\hat u,
  \qquad
  \hat u''
  =
  \frac{3\Lambda}{4}\hat u,
  \label{eq:flrw-heisenberg-u}
\end{equation}
with exact solutions given by
\begin{equation}
  \begin{split}
    \hat u(\tau)
    &=
    \hat u_0\cosh(\Omega c\Delta\tau)
    -
    \frac{\hat p_{u,0}}{\Omega}\sinh(\Omega c\Delta\tau), \\
    \hat p_u(\tau)
    &=
    \hat p_{u,0}\cosh(\Omega c\Delta\tau)
    -
    \Omega\hat u_0\sinh(\Omega c\Delta\tau).
  \end{split}
\end{equation}
Now, to derive the quantum-corrected Friedmann equations, define
the moments
\begin{equation}
  S\equiv\langle\hat u^2\rangle,
  \qquad
  R\equiv\langle\hat p_u^{\,2}\rangle,
  \qquad
  T\equiv
  \frac{1}{2}
  \langle\hat u\hat p_u+\hat p_u\hat u\rangle,
  \label{eq:flrw-moments}
\end{equation}
such that the conserved energy is
\begin{equation}
  E
  \equiv
  \langle\hat H_{\rm FLRW}\rangle
  =
  -\frac{1}{2}R
  +
  \frac{3\Lambda}{8}S,
  \label{eq:flrw-energy}
\end{equation}
and the moment equations are
\begin{equation}
  S'=-2T,
  \qquad
  T'=-R-\frac{3\Lambda}{4}S,
  \qquad
  R'=-\frac{3\Lambda}{2}T.
  \label{eq:flrw-moment-equations}
\end{equation}
Therefore
\begin{equation}
  I_q
  \equiv
  SR-T^2
  \label{eq:flrw-quantum-invariant}
\end{equation}
is conserved; i.e., $I_q'=0$ identically. Moreover, the Robertson--Schrödinger inequality \cite{Robertson1929Uncertainty,Schrodinger1930Uncertainty} gives
\begin{equation}
  I_q=SR-T^2\geq\frac{\hbar^2}{4},
  \label{eq:flrw-robertson-bound}
\end{equation}
which shall become useful later. We can now use
Eq. (\ref{eq:flrw-energy}) and Eq. (\ref{eq:flrw-quantum-invariant})
to obtain
\begin{equation}
  (S')^2
  =
  3\Lambda S^2
  -
  8ES
  -
  4I_q,
  \qquad
  S''
  =
  3\Lambda S
  -
  4E,
  \label{eq:flrw-S-equations}
\end{equation}
and then if we define the effective scale factor by
\begin{equation}
  a_{\rm eff}^3
  \equiv
  \frac{3S}{16\alpha V_0},
  \label{eq:flrw-effective-scale-factor}
\end{equation}
we can divide Eq. (\ref{eq:flrw-S-equations}) by the appropriate powers
of $S$ to get
\begin{equation}
  H_{\rm eff}^2
  \equiv
  \left(
    \frac{a_{\rm eff}'}{a_{\rm eff}}
  \right)^2
  =
  \frac{\Lambda}{3}
  -
  \frac{E}{6\alpha V_0a_{\rm eff}^3}
  -
  \frac{I_q}{64\alpha^2V_0^2a_{\rm eff}^6},
  \label{eq:flrw-corrected-friedmann}
\end{equation}
and
\begin{equation}
  \frac{a_{\rm eff}''}{a_{\rm eff}}
  =
  \frac{\Lambda}{3}
  +
  \frac{E}{12\alpha V_0a_{\rm eff}^3}
  +
  \frac{I_q}{32\alpha^2V_0^2a_{\rm eff}^6},
  \label{eq:flrw-corrected-acceleration}
\end{equation}
so the quantum invariant $I_q$ produces a repulsive
$a_{\rm eff}^{-6}$ correction. For $\Lambda>0$, the bounce occurs when
$H_{\rm eff}=a'_{\rm eff}=0$, giving
\begin{equation}
  V_{\rm eff,min}
  \equiv
  V_0a_{\rm eff,min}^3
  =
  \frac{
    E+
    \sqrt{E^2+\frac{3\Lambda}{4}I_q}
  }{
    4\alpha\Lambda
  },
  \label{eq:flrw-min-volume-general}
\end{equation}
where by using Eq. (\ref{eq:flrw-robertson-bound}), a zero-energy packet
satisfies
\begin{equation}
  V_0a_{\rm eff,min}^3
  \geq
  \frac{\sqrt{3}\,\pi\hbar G}{c^3\sqrt{\Lambda}}.
  \label{eq:flrw-min-volume-bound}
\end{equation}
Therefore the effective FLRW volume cannot reach zero unless the
quantum invariant is removed by the classical limit, which
resolves the FLRW singularity. Note that the value of the curvature
parameter $k\in\left\{-1, 0, 1\right\}$ does not matter here, as we
can check to always get $V_\text{eff,min} > 0$ except for
$\hbar \longrightarrow 0$.

Consider now the sign of $E$ in Eq.
(\ref{eq:flrw-corrected-friedmann}). The energy enters as an effective
density $\rho_E \propto -E$, so a positive $\rho_E$ requires $E \leq 0$,
while $E > 0$ makes the equation behave as if the total energy were negative.
Avoiding both exotic situations demands $E = 0$. The case becomes stronger
once matter is added, since the quantum-corrected equation then differs from
the classical one by the arguably small $I_q$ and by $E$, which nothing
prevents from being large; keeping the dynamics close to the observed
Friedmann equations \cite{Planck2018Parameters,EfstathiouGratton2020Flatness}
therefore forces $E$ to be very small, and for our purposes we set $E = 0$,
in agreement with the usual approach to canonical cosmology
\cite{DeWitt1967Quantum,Halliwell1990QuantumCosmology}. This justifies the
choice made in Eq. (\ref{eq:flrw-min-volume-bound}).

Let us now place this resolution against the two main treatments of the same
minisuperspace in the literature. In the Wheeler--DeWitt quantization the
reduced model obeys the timeless kernel equation, so singularity avoidance
must be encoded in the choice of wave function or of boundary condition on
the lines of the DeWitt condition $\Psi(0)=0$ or of the no-boundary and
tunneling proposals
\cite{DeWitt1967Quantum,HartleHawking1983WaveFunction,Kiefer2012QuantumGravity,Halliwell1990QuantumCosmology},
and the interpretation of the avoidance depends on an inner product that the
kernel equation does not itself supply. Indeed, when the model is
deparametrized with a massless scalar clock the Wheeler--DeWitt wavepackets
are known to remain peaked on the classical trajectory all the way into the
big bang \cite{AshtekarPawlowskiSingh2006Improved}. In loop quantum cosmology
the singularity is resolved by a bounce with the effective Friedmann
equation acquiring the correction factor $(1-\rho/\rho_c)$, but the mechanism
relies on the polymer representation of the holonomy algebra and imports the
critical density $\rho_c$ from the area gap of the full theory
\cite{AshtekarSingh2011LQC,AshtekarPawlowskiSingh2006Improved,Bojowald2001Absence}.

The bounce of Eq. (\ref{eq:flrw-corrected-friedmann}) has a different origin.
The quantization is an ordinary Schrödinger one on the half-line, unitary in
the proper time of the comoving congruence, and the repulsive
$a_{\rm eff}^{-6}$ term is carried by the conserved invariant $I_q$ which
Eq. (\ref{eq:flrw-robertson-bound}) bounds away from zero on every state. The
mechanism needs no polymer representation and no special boundary condition
at $a=0$. The moment system of Eq. (\ref{eq:flrw-moment-equations}) closes
exactly for this Hamiltonian, so Eqs.
(\ref{eq:flrw-corrected-friedmann}) and
(\ref{eq:flrw-corrected-acceleration}) are exact statements about expectation
values rather than a semiclassical truncation. What makes the resolution
available here is the ingredient the timeless quantization lacks; namely,
a unitary proper-time flow along which the uncertainty-bounded
invariant is conserved.

\subsection{Boundary quantum evolution for the Schwarzschild--Lemaître sector}
\label{sec:schwarzschild-lemaitre-boundary-sector}

We now study the Schwarzschild geometry in the same observer-congruent
formalism, again assuming exact spherical symmetry and reducing the degrees
of freedom to the radial function $r$. The Schwarzschild metric can be put in
Painlevé--Gullstrand (PG) coordinates or Lemaître form
\cite{Lemaitre1933Expansion,MisnerThorneWheeler1973} as

\begin{equation}
  ds^2
  =
  d\lambda^2
  -
  \frac{r_s}{r(\lambda,\rho)}\,d\rho^2
  -
  r^2(\lambda,\rho)d\Omega^2,
  \label{eq:schw-lemaitre-metric}
\end{equation}

\noindent where $r_s$ is the Schwarzschild radius and

\begin{equation}\label{def:eff-r}
  r^{3}(\lambda,\rho)
  =
  \frac{9}{4}r_s\,(\rho-\lambda)^2,
\end{equation}

\noindent is the effective radius. This adapted
time coordinate follows freely falling observers and the spatial slices are regular at the horizon; and we have chosen these PG coordinates to free the observer-congruence from $r$ dependence, as in this coordinate system freely falling observers can be taken to be $n^\mu = (1, 0, 0, 0) = n_\mu$.

Here we take the black hole mass $\hat{M}$ as the quantum operator. Matter
inside the black hole can sit in a superposition of energy states, so with
$Mc^2 = E$ the Schwarzschild radius $\hat{r}_s$ becomes an operator; this is
also common practice in reduced spherically symmetric geometrodynamics,
where the mass is the surviving physical observable after solving the
constraints \cite{Kuchar1994Schwarzschild,LoukoWhiting1995HamiltonianThermo}.
The effective radius $\hat{r}^3(\lambda,\rho) \propto \hat{r}_s$ then becomes
an operator too, and with it the volume and area, as it should be if the size
of the black hole is to reflect the superposition of masses.

Furthermore, since quantum particles can be off-shell and thus incorporate energies that do not satisfy the on-shell conditions we should also allow for this $r(\lambda, \rho)$ to be off-shell and thus treat it as our degree of freedom. However, if this is the case then $R^\perp$ no longer identically vanishes as in the Schwarzschild--Lemaître case, which hardens the computation of the Hamiltonian. To work around this we can take advantage of the fact that Eq. (\ref{eq:schw-lemaitre-metric}) can also be expressed as

\begin{equation}\label{eq:SchLemOffShell}
  ds^2 = d\lambda^2 - r^2_\rho(\lambda, \rho) d\rho^2 - r^2(\lambda, \rho)d\Omega^2,
\end{equation}

\noindent with $r_\rho \equiv \partial_\rho r$. Now with this form of the metric we see that at every constant $\lambda$ slice the metric is spatially flat. This becomes clear if we make the coordinate change $dR = r_\rho d\rho$ to get an explicitly flat metric. Then $R^\perp = 0$ identically at all slices, which helps us simplify the calculation of the Lagrangian.

We still have a problem, though. On-shell the bulk term vanishes because $R=0$. That would look like it has no dynamics, because from

\begin{equation}
  K_{\mu\nu} = \frac{1}{2}\pounds_n \gamma_{\mu\nu} = \frac{1}{2}\partial_\lambda\gamma_{\mu\nu},
\end{equation}

\noindent one can easily compute the bulk term using the definition of $r$ given by Eq. (\ref{def:eff-r}) to get

\begin{equation}
  K_{\mu\nu}K^{\mu\nu} - K^2 = 0,
\end{equation}

\noindent which consequently gives $H^\text{CG}_{\parallel} = 0$. This is also standard in the ADM formalism \cite{Arnowitt1962Dynamics,ReggeTeitelboim1974SurfaceIntegrals,Kuchar1994Schwarzschild}. But of course the total energy in spacetime should be proportional to $M$, and the way the literature has to include that energy is through a boundary term \cite{Arnowitt1962Dynamics,GibbonsHawking1977Action,BrownYork1993Quasilocal,ReggeTeitelboim1974SurfaceIntegrals,Szabados2009Quasilocal}.

We shall find this boundary term as one that vanishes on-shell but produces the dynamics we desire off-shell. For that let us define the boundary $S_\lambda \equiv \partial\Sigma_\lambda$ to be a round two-sphere at fixed radius $\rho = \rho_B$ in the spherically symmetric vacuum region at some time $\lambda$. The location $\rho_B$ of the round boundary sphere may be chosen
arbitrarily inside the vacuum Schwarzschild region, which means that the boundary can be an arbitrary round sphere compatible with the spherical reduction. In the asymptotically
flat case one may take this sphere to spatial infinity, where the same
boundary charge becomes the ADM energy \cite{Arnowitt1962Dynamics,BrownYork1993Quasilocal,ReggeTeitelboim1974SurfaceIntegrals}. If we then use this definition plus the off-shell Eq. (\ref{eq:SchLemOffShell}) we find
\begin{equation}\label{eq:LSchbulk}
  \begin{split}
    L^\text{CG}_{\parallel} &= \alpha\int_{\Sigma_\lambda}{d\Sigma\left(K_{\mu\nu}K^{\mu\nu} - K^2\right)} \\
    &= 4\pi\alpha\int_{\Sigma_\lambda}{d\rho r_\rho r^2\left(-4\frac{r_{\lambda\rho}}{r_\rho}\frac{r_\lambda}{r} - 2\frac{r^2_\lambda}{r^2}\right)} \\
    &= 4\pi\alpha\int_{\Sigma_\lambda}{d\rho \left(-4rr_\lambda r_{\lambda\rho} - 2r_\rho r^2_\lambda \right)},
  \end{split}
\end{equation}
with $r_\lambda \equiv \partial_\lambda r$, $r_\rho \equiv \partial_\rho r$ and $r_{\lambda\rho} \equiv \partial_\lambda\partial_\rho r$. Now note that
\begin{equation}
  -4rr_\lambda r_{\lambda\rho} = -2r\partial_\rho(r^2_\lambda) = -2\partial_\rho(rr^2_\lambda) + 2r_\rho r^2_\lambda,
\end{equation}

\noindent where the last term cancels the last one in Eq. (\ref{eq:LSchbulk}) thus leaving

\begin{equation}\label{eq:bulkL}
    L^\text{CG}_{\parallel} = -8\pi\alpha\int_{\Sigma_\lambda}{d\rho \partial_\rho(r r^2_\lambda)},
\end{equation}

\noindent which is a total derivative that vanishes on-shell. This is our boundary term

\begin{equation}
  -8\pi\alpha\left[r r^2_\lambda\right]_{S},
\end{equation}

\noindent and the Hamiltonian is consequently

\begin{equation}
  H^{\text{CG}}_{\parallel} = \frac{c^3}{2G}r r^2_\lambda \underbrace{=}_{rr^2_\lambda = r_s} Mc,
\end{equation}

\noindent so the physical energy is $E = cH^{\text{CG}}_{\parallel} = Mc^2$ as expected. Note how the total derivative has no local radial dynamics but it does define a boundary charge which ends up giving us the $Mc$ we require to find the total energy.
Now if we express Eq. (\ref{eq:bulkL}) in observer-congruent form it gives

\begin{equation}
  H^\text{CG}_{\parallel}[S_\lambda] = 2\alpha \int_{S_\lambda}{r^{-1}(\pounds_n r)^2dS},
\end{equation}

\noindent which after quantization with $r' \equiv dr / dc\tau$ we get

\begin{equation}
  H_r = \frac{p_r^2}{32\pi\alpha r}, \qquad p_r = \frac{\partial L^\text{CG}_\parallel}{\partial r'} = 16\pi\alpha rr',
\end{equation}

\noindent also, after the radial total-derivative reduction the boundary minisuperspace is given by the areal radius of the chosen boundary two-sphere rather than all $r(\lambda, \rho)$; i.e., $R(\lambda)\equiv r(\lambda,\rho_B)>0$. Similar treatments have been done in the literature for spherically symmetric black holes \cite{Kuchar1994Schwarzschild,LoukoWhiting1995HamiltonianThermo}.
Thus the boundary configuration space is $\mathcal C_{\partial}=\{R>0\}$ and the observable $R$ determines the intrinsic geometry of the boundary sphere
\begin{equation}
  d\sigma^2=R^2d\Omega^2,
  \qquad
  A[S_\lambda]=4\pi R^2.
\end{equation}
By itself, this $R$ does not determine a full four-dimensional geometry. The full Schwarzschild--Lemaître geometry is recovered only on shell or semiclassically when the boundary state is peaked on a mass $M$ so that $RR'^2=r_s$. Thus the boundary minisuperspace represents the areal radius degree of freedom at the boundary for a spacetime with line element given by Eq. (\ref{eq:SchLemOffShell}) and containing Schwarzschild for a peaked mass around $M$. Also, if we take $\rho_B$ at the expectation value of $\hat{r}_s$ then we could interpret this $R$ as the effective areal radius of the black hole and, in the quantum picture, as the degree of freedom that allows us to study deviations around the classical areal radius.

Now if we relabel $R \rightarrow r$ again then we can put the Hamiltonian in better shape if we perform a canonical transformation

\begin{equation}
  v \equiv \frac{2}{3}\sqrt{16\pi\alpha} r^{3/2} = \kappa r^{3/2},
  \qquad
  \kappa \equiv \frac{2}{3}\sqrt{\frac{c^3}{G}},
\end{equation}

\noindent which gives the much cleaner form for the Hamiltonian

\begin{equation}
  H_r = \frac{1}{2}p^2_v,
\end{equation}
and then Eq. (\ref{eq:QG}) is the free half-line equation
\begin{equation}
  i\hbar\frac{\partial\psi}{\partial(c\tau)}
  =
  -\frac{\hbar^2}{2}
  \frac{\partial^2\psi}{\partial v^2},
  \qquad
  v>0.
  \label{eq:schw-boundary-schrodinger-v}
\end{equation}

\noindent The half-line condition $v>0$ is essential because $v=0$ is the
$r=0$ case. For the Hamiltonian $-\hbar^2\partial_v^2/2$ to be
self-adjoint we have to choose a boundary condition at $v=0$ \cite{ReedSimon1975Vol2,BonneauFarautValent2001SelfAdjoint,KunstatterLouko2012HalfLine},
which for regularization \cite{DeWitt1967Quantum,BrahmaChenYeom2022DeWittBoundary} must be taken to be
\begin{equation}
  \psi(0, \tau)=0,
  \label{eq:schw-dewitt-boundary-condition}
\end{equation}

\noindent whose solutions have the form

\begin{equation}
  \psi(v, \tau) =
  \int^\infty_0{dpA(p)\sin\left(\frac{pv}{\hbar}\right)
  \exp\left[\frac{-i}{2\hbar}p^2c\tau\right]},
\end{equation}

\noindent with

\begin{equation}
  A(p) \equiv \frac{2}{\pi\hbar}\int^\infty_0{
    dv\psi(v, 0)\sin\left(\frac{pv}{\hbar}\right)
  }.
\end{equation}

\noindent However, the more interesting physics appear if we study the
Heisenberg equations in the $v$ variable. These are $\hat v' = \hat p_v$, $\hat p_v' = 0$ and $\hat v'' = 0$; and since $\hat r = (\hat{v}/\kappa)^{2/3}$, the corresponding Heisenberg equation for $\hat r$ is most cleanly
written in symmetrised form
\begin{equation}
  \begin{aligned}
    \hat r'
    &=
    \frac{1}{3\kappa}
    \left\{
      \hat r^{-1/2},
      \hat p_v
    \right\},
    \\
    \hat r''
    &=
    -\frac{1}{18\kappa^2}
    \left\{
      \left\{
        \hat r^{-2},
        \hat p_v
      \right\},
      \hat p_v
    \right\},
  \end{aligned}
  \label{eq:r-heisenberg}
\end{equation}
and we can check the classical limit of Eq. (\ref{eq:r-heisenberg})
to be
\begin{equation}
  r''
  =
  -\frac{1}{2}\frac{r'^2}{r}
  =
  -\frac{H_r}{16\pi\alpha r^2},
  \label{eq:r-classical-acceleration}
\end{equation}
which for $H_r=Mc$ this becomes $r'' = -GM/c^2r^2$, with derivatives taken with respect to $c\tau$. Now define the moments
\begin{equation}
  S
  \equiv
  \langle \hat v^2\rangle,
  \qquad
  R
  \equiv
  \langle \hat p_v^2\rangle,
  \qquad
  T
  \equiv
  \frac{1}{2}
  \left\langle
    \hat v\hat p_v+\hat p_v\hat v
  \right\rangle.
  \label{eq:schw-moments}
\end{equation}
Because the Hamiltonian is free, then $E_r \equiv \langle\hat H_r\rangle = R/2$ is conserved. The moment equations are
\begin{equation}\label{eq:BHSTREqs}
  S'=2T,
  \qquad
  T'=R=2E_r,
  \qquad
  R'=0,
\end{equation}
thus $I_r \equiv SR-T^2$ is exactly conserved. Thus we get
\begin{equation}
  (S')^2
  =
  8E_rS-4I_r,
  \qquad
  S''
  =
  4E_r,
  \label{eq:schw-S-equations}
\end{equation}
and if we now define the effective areal radius by
\begin{equation}
  r_{\rm eff}^3
  \equiv
  \frac{S}{\kappa^2}
  =
  \frac{9G}{4c^3}S,
  \label{eq:reff-definition}
\end{equation}
then dividing Eq. (\ref{eq:schw-S-equations}) by the appropriate powers of
$S$ gives the quantum-corrected Schwarzschild--Lemaître radial equations
\begin{equation}
  \left(
    \frac{r_{\rm eff}'}{r_{\rm eff}}
  \right)^2
  =
  \frac{2GE_r}{c^3r_{\rm eff}^3}
  -
  \frac{9G^2I_r}{4c^6r_{\rm eff}^6},
  \label{eq:schw-corrected-r-friedmann}
\end{equation}
and
\begin{equation}
  \frac{r_{\rm eff}''}{r_{\rm eff}}
  =
  -
  \frac{GE_r}{c^3r_{\rm eff}^3}
  +
  \frac{9G^2I_r}{2c^6r_{\rm eff}^6}.
  \label{eq:schw-corrected-r-acceleration}
\end{equation}
For a Schwarzschild boundary state peaked on mass $M$ we have
$E_r = Mc$, so Eq. (\ref{eq:schw-corrected-r-friedmann}) becomes
\begin{equation}
  \left(
    \frac{r_{\rm eff}'}{r_{\rm eff}}
  \right)^2
  =
  \frac{r_s}{r_{\rm eff}^3}
  -
  \frac{9G^2I_r}{4c^6r_{\rm eff}^6}.
  \label{eq:schw-corrected-r-friedmann-M}
\end{equation}
The first term is the classical Lemaître infall relation. The
second is the quantum term. It has the opposite sign and therefore
prevents the effective radius from reaching zero.
Indeed, the minimum radius occurs when $r_{\rm eff}'=0$,
which from Eq. (\ref{eq:schw-S-equations}) can be found to be
\begin{equation}
  r_{\rm eff,min}^3
  =
  \frac{I_r}{2E_r\kappa^2}
  =
  \frac{9GI_r}{8c^3E_r}.
  \label{eq:schw-reff,min}
\end{equation}
For a mass eigenpacket with $E_r=Mc$ this becomes, making
use of the Robertson--Schrödinger inequality for $I_r$
\begin{equation}
  r_{\rm eff,min}^3
  \geq
  \frac{9G\hbar^2}{32Mc^4},
  \label{eq:schw-reff,min-bound}
\end{equation}
which says that the effective Schwarzschild--Lemaître radius cannot
reach zero unless the quantum invariant $I_r$ is sent to zero. This
is the same as to demand the classical limit where $\hbar \longrightarrow 0$,
thus recovering the classical singularity.

Now, interestingly, the DeWitt boundary condition we demanded earlier strengthens this bound to be a bit larger than that of Eq. (\ref{eq:schw-reff,min-bound}). To see why we must work within the framework of conformal algebra. Using de Alfaro, Fubini and Furlan (DFF) \cite{deAlfaroFubiniFurlan1976Conformal} technique as it is used in near-horizon black hole physics \cite{ClausEtAl1998SuperconformalBH,BrittoPacumioEtAl1999SCQMLectures,CamblongOrdonez2005NearHorizonCQM} we can build an eigenbases set $\{\ket{n}\}$ that defines the discrete spectrum of the DFF compact generator $L_0$. To see how, let us define the dilation and special conformal generators
\begin{equation}
  H = \frac{1}{2}\hat{p}_v^2,
  \qquad
  D = \frac{1}{4}\left(\hat{v}\hat{p}_v + \hat{p}_v\hat{v}\right),
  \qquad
  K = \frac{1}{2}\hat{v}^2,
  \label{eq:schw-hdk-generators}
\end{equation}

\noindent which can be checked to close the algebra

\begin{equation}
  \begin{aligned}
    [D,H]&=i\hbar H,
    \\
    [D,K]&=-i\hbar K,
    \\
    [H,K]&=-2i\hbar D,
  \end{aligned}
  \label{eq:schw-so21-algebra}
\end{equation}

\noindent which is $\mathfrak{so}(2,1) \cong \mathfrak{su}(1,1)$ \cite{deAlfaroFubiniFurlan1976Conformal}. Note that the three moments
of Eq. (\ref{eq:schw-moments}) are the expectation values of
these three generators, $S = 2\langle K\rangle$, $R = 2\langle H\rangle$
and $T = 2\langle D\rangle$, so the invariant $I_r$ is an object that
belongs to this algebra, whose quadratic Casimir given by

\begin{equation}
  \mathcal{C}
  =
  \frac{1}{2}\left(HK + KH\right) - D^2
  =
  -\frac{3}{16}\hbar^2
  \equiv
  \hbar^2 k(k - 1),
  \label{eq:schw-so21-casimir}
\end{equation}

\noindent where the numerical value $-3\hbar^2/16$ follows from direct
computation in the realization of Eq. (\ref{eq:schw-hdk-generators}).
The Bargmann index $k$ comes from the fact that the unitary irreducible representations
of $\mathfrak{su}(1,1)$ which are bounded from below (the discrete series
$D^+(k)$) are labelled by a single number $k > 0$, which is the lowest
eigenvalue in units of $\hbar$ of the compact generator of the algebra.
Since Eq. (\ref{eq:schw-so21-casimir}) is quadratic in $k$ a fixed
Casimir is compatible with exactly two indices, in our case
$k \in \{1/4,\, 3/4\}$.

The algebra alone cannot decide between them,
because both roots correspond to legitimate unitary representations on
the half-line. Instead what selects the representation is the domain of the
operators; i.e., the self-adjoint boundary condition at $v = 0$
\cite{BonneauFarautValent2001SelfAdjoint,KunstatterLouko2012HalfLine}.
It is thus satisfying to see that $k$ is fixed by the same DeWitt condition
\cite{DeWitt1967Quantum} that we already imposed in
Eq. (\ref{eq:schw-dewitt-boundary-condition}) to define the quantum
theory in the first place.

To make use of this selection consider for an arbitrary frequency $\omega > 0$ the combination

\begin{equation}
  \begin{aligned}
    L_0
    &=
    \frac{1}{2}
    \left(\frac{H}{\omega}+\omega K\right),
    \\
    L_1
    &=
    \frac{1}{2}
    \left(\frac{H}{\omega}-\omega K\right),
    \\
    L_2&=D,
  \end{aligned}
  \label{eq:schw-compact-generator}
\end{equation}

\noindent where $L_0$ is the compact generator of the algebra; thus, up to a factor $2\omega$ it is the auxiliary oscillator Hamiltonian we lacked in the beginning (since $H_r$ describes a free particle), but now it enters as a self-adjoint
element of the algebra for \emph{every} $\omega$, while the free
Hamiltonian is simply the non-compact element $H = \omega(L_0 + L_1)$
of the same algebra. Since our observables are $\omega$-independent, choosing $\omega$ amounts to choosing a basis (a Bogoliubov frame) on the same Hilbert space but the dynamics stay the same.

Now, since the even oscillators would produce non-vanishing states for $v=0$ these are excluded by the DeWitt condition $\psi(0) = 0$, then only the odd states survive. And because the odd tower carries the oscillator energies $\hbar\omega(2n + 3/2)$ with $n \in \mathbb{Z}_{\geq 0}$ and $L_0$ is the oscillator Hamiltonian divided by $2\omega$, we get

\begin{equation}
  L_0\ket{n}
  =
  \hbar\left(n + \frac{3}{4}\right)\ket{n},
  \qquad
  n = 0, 1, 2, \dots,
  \label{eq:schw-L0-spectrum}
\end{equation}

\noindent so the DeWitt condition selects the representation
$D^+(k = 3/4)$. Had we chosen the Neumann condition $\psi'(0) = 0$
instead, the even states would survive and we would land on $k = 1/4$. Choosing one or the other amounts to choosing which of the two roots of
Eq. (\ref{eq:schw-so21-casimir}) and thus which of the two natural quantum
theories on the half-line we are going to use. Note also that, as promised, the eigenvalues
in Eq. (\ref{eq:schw-L0-spectrum}) carry no $\omega$ factor.

We can now evaluate the moments on the special discrete basis
$\ket{n}$. Using Eq. (\ref{eq:schw-compact-generator}) to write
$H = \omega(L_0 + L_1)$ and $K = (L_0 - L_1)/\omega$, and the fact that
$\langle L_1\rangle = \langle L_2\rangle = 0$ on $L_0$ eigenstates
(these operators are linear combinations of the ladder operators of the
representation), we find

\begin{equation}
  S = \frac{\hbar}{2\omega}(4n + 3),
  \qquad
  R = \frac{\hbar\omega}{2}(4n + 3),
  \qquad
  T = 0,
  \label{eq:schw-moments-eigenbasis}
\end{equation}

\noindent so the quantum invariant is quantized as

\begin{equation}
  I_r
  =
  SR - T^2
  =
  \frac{\hbar^2}{4}\left(4n + 3\right)^2
  =
  4\hbar^2\left(n + \frac{3}{4}\right)^2,
  \label{eq:schw-Ir-quantized}
\end{equation}

\noindent with all $\omega$-dependence cancelling as anticipated.
Evaluating Eq. (\ref{eq:schw-reff,min}) on these states with $E_r = Mc$
then gives

\begin{equation}
  r^3_\text{eff,min}
  =
  \frac{9G\hbar^2}{32Mc^4}\left(4n + 3\right)^2,
  \label{eq:schw-reff-quantized}
\end{equation}
thus we prove that when the DeWitt boundary applies, $r_\text{eff,min}$ strengthens to nine times the bound we previously had.

One could be tempted to interpret this result as a quantization of the effective radius, but that would be wrong. Note that $r^3_\text{eff,min}$ was derived under the condition $r'_\text{eff}=0$, so this quantized $r^3_\text{eff,min}$ \textit{only quantizes lower bounds}. This means that the spectrum of $\hat{r}$, of $\hat{v}$ and of the masses remain continuous but they are nonetheless bounded below by a conformal algebra, state-dependent number.

What does this bound mean? It gives a \textit{bounce} for any classical particle trajectory. For example, let us conceive the movement of a classical freely falling comoving particle. Let the bounce occur at some time $\lambda_b$ and let us define $r_b = r_\text{eff,min}$. Then from Eq.~\eqref{eq:BHSTREqs} since $R$ is constant we can integrate $T' = R$ and $S'=2T$ to obtain
\begin{equation}
  S(\lambda) = \frac{I_r}{R} + R(\lambda - \lambda_b)^2,
\end{equation}
where the condition $T(\lambda_b) = 0$ removes the constant of integration for $T$ at some bouncing instant $\lambda_b$. Now because of the definitions $r^3_\text{eff} = S/\kappa^2$ and $r^3_b = I_r/R\kappa^2$, we find
\begin{equation}
  r^3_\text{eff}(\lambda) = r^3_b + \frac{R}{\kappa^2}(\lambda - \lambda_b)^2,
\end{equation}
and thanks to having $R=2E_r=2Mc$ semiclassically, we see that the quantum corrected trajectory is
\begin{equation}\label{eq:schw-bounce-trajectory}
  r^3_\text{eff}(\lambda) = r^3_b + \frac{9}{4}r_s(\lambda - \lambda_b)^2,
\end{equation}
which recovers the classical one for $\hbar \longrightarrow 0$. So the quantization of $r^3_\text{eff,min}$ in $\{\ket{n}\}$ only applies to $r^3_b$ and not to $r^3_\text{eff}$, so the spectrum of $\hat{r}$ continues to be continuous even though the spectrum of its lower bound $r^3_b$ is not.

There is another interesting consequence. Let us rearrange the definition of the effective volume to put it into the more useful form $r^3_\text{eff}[\psi;\lambda] = \langle\hat{v}^2\rangle_\lambda/\kappa^2 = \langle\hat{r}^3\rangle_\lambda$. Now if we insert $\hat{r}$ into the formula for spherical volumes $\hat{V} = 4\pi\hat{r}^3/3$ we can also define $r^3_\text{eff}$ in terms of the volume operator as $r^3_\text{eff}[\psi;\lambda] = 3\langle\hat{V}\rangle_\lambda/4\pi$. Then $r^3_b$ is also the minimum volume that the black hole can have. To see this, let us demand compatibility with the semiclassical picture where the radius state is sharply peaked at some $\bar{r} \equiv \langle\hat{r}\rangle$; consequently $(\Delta r)^2 \equiv \langle(\hat{r} -\bar{r})^2\rangle$ is small and thus if we expand $r_\text{eff}$

\begin{equation}
  r_\text{eff} = \langle\hat{r}^3\rangle^{1/3} \simeq \bar{r} + \frac{(\Delta r)^2}{\bar{r}} + \cdots,
\end{equation}

\noindent then for the peaked state $(\Delta r)^2/\bar{r} \ll 1$ and thus $r_\text{eff} \simeq \bar{r}$. Where in the volume picture for $\bar{r} = r_s$ we would get $r_\text{eff}\simeq r_s$; and for the minimum value $r_\text{eff} \geq r_b$, then

\begin{equation}\label{eq:rsrb}
  r_s \geq r_b,
\end{equation}

\noindent or, in terms of $M$,

\begin{equation}
  M \geq \frac{3}{4}M_\text{P} = \frac{3}{4}\sqrt{\frac{\hbar c}{G}},
\end{equation}

\noindent thus a lower bound for the black hole mass. This might explain why we do not see black holes form out of every elementary particle (dots with no dimension). However, this does not imply that the mass spectrum is discrete; only the lower bound is discrete on the eigenbasis $\{\ket{n}\}$. There is in principle no requirement that further values of $M$ go in discrete jumps, and indeed from the continuous spectrum of $\hat{r}$ we might argue that the spectrum of $\hat{r}_s$ should be continuous too. In fact, Eq.~\eqref{eq:rsrb} can also be seen as the requirement that there is no ``naked bounce''; i.e., that particles can always get arbitrarily close to the event horizon from outside with no ``invisible wall'' that would make them bounce before touching the black hole and thus never being able to get inside.

These results can now be measured against the canonical literature on the
same sector. In Kucha\v{r}'s reduced quantization of the full Schwarzschild
theory the constraints are solved classically, the surviving phase space
consists of the mass together with its conjugate Killing-time separation,
the reduced Hamiltonian vanishes, and the quantum theory is a superselected
family of mass eigenstates carrying no interior dynamics
\cite{Kuchar1994Schwarzschild}. The Hamiltonian thermodynamics of Louko and
Whiting obtains genuine dynamics but only by enclosing the exterior in a box
whose wall data drive the evolution \cite{LoukoWhiting1995HamiltonianThermo}.
Contrary to these approaches our boundary theory presented here acquired
dynamics without any box, because $H_r$
generates proper-time evolution of the areal radius and the mass enters
through the boundary charge as the operator $\hat M$. In loop quantum
gravity the interior singularity is likewise replaced by a bounce obtained
through effective holonomy corrections in the Kantowski--Sachs sector
\cite{AshtekarOlmedoSingh2018QuantumExtension}; the analogous statement here
is the bounded trajectory of Eq. (\ref{eq:schw-bounce-trajectory}) which
follows from the conserved invariant $I_r$ and the conformal structure of
the half-line theory with no polymer input. Finally, the discrete tower of
Eq. (\ref{eq:schw-Ir-quantized}) should be contrasted with the quantized
horizon area proposed by Bekenstein and Mukhanov
\cite{Bekenstein1973Entropy,BekensteinMukhanov1995Spectroscopy}. Yet,
in the present theory the spectra of $\hat r$ and of the mass remain
continuous and no discrete line spectrum is implied for the emitted
radiation.

\subsection{Black hole thermodynamics}

One could ask what happens if evolution is studied along an angle rather
than along proper time; time plays no special role here, so we may pass from
a time-radial picture to an angle-area one. Black hole thermodynamics
motivates the move. The bound of Eq. (\ref{eq:schw-reff,min-bound}) depends
on the mass, so a fully evaporated black hole would have an infinite minimum
effective radius, which is difficult to interpret, and the literature
\cite{LoukoWhiting1995HamiltonianThermo,Carlip_1995,BanadosTeitelboimZanelli1994Entropy}
still lacks a complete thermodynamic description of these bodies at the
smallest scales. Carlip and Teitelboim conclude that ``we still lack a
microscopic explanation for the exponential weight in the integration
measure for the surface degrees of freedom, or equivalently for the
$\hbar^{-1}$ dependence in (22)''.

In what follows we show that the thermodynamics description of black
holes naturally arises from a couple of canonical transformations \cite{Carlip_1995,BanadosTeitelboimZanelli1994Entropy} that
allow us to study evolution along the boost angle rather than proper
time. The first of these transformations is given by

\begin{equation}
  Q \equiv \frac{v}{p_v},
  \qquad
  E \equiv H_r = \frac{1}{2}p^2_v,
\end{equation}

\noindent note that this canonical pair satisfies
$\left\{Q, E\right\} = 1$. Now let us define the second transformation

\begin{equation}
  \Theta \equiv \frac{Qc^3}{4GE},
  \qquad
  P_\Theta \equiv \frac{2G}{c^3}E^2,
\end{equation}

\noindent which also satisfies $\left\{\Theta, P_\Theta\right\} = 1$.

Now, two remarks are in order before quantizing. First, the map
$(v, p_v) \rightarrow (\Theta, P_\Theta)$ is a canonical chart on the
open dense region $p_v \neq 0$ of the same reduced phase space; it does
not define a new phase space nor, upon quantization, a new Hilbert
space. Canonical transformations need not be
unitarily implementable and in general relate different coordinatizations
of one and the same quantum theory rather than different quantum
theories \cite{Anderson1994CanonicalTransformations}. All operators
below therefore act on the single Hilbert space
$L^2(\mathbb{R}_+, dv)$ with the DeWitt condition of
Eq. (\ref{eq:schw-dewitt-boundary-condition}). Second, although the
compound variables $Q$ and $\Theta$ suffer from ordering ambiguities,
we shall only ever need the momentum $P_\Theta$, which upon quantization
is the unambiguous positive operator
$\hat{P}_\Theta = 2G\hat{H}_r^2/c^3$.

If we use these to go into the quantum picture we get
$[\hat{\Theta}, \hat{P}_\Theta] = i\hbar$
so, in the $\Theta$ representation, $\hat{P}_\Theta = -i\hbar \partial_\Theta$. But if we recall that
$E = H_r$ then we also have $\hat{P}_\Theta = 2G \hat{H}^2_r / c^3$.
Applying both to an eigenstate $\psi$ we get

\begin{equation}\label{eq:ThetaEq}
  -i\frac{\partial\psi}{\partial\Theta} = \frac{\hbar^3G}{2c^3}\frac{\partial^4}{\partial v^4} \psi, \qquad v>0,
\end{equation}

\noindent in addition to Eq. (\ref{eq:schw-boundary-schrodinger-v}).
Both equations share eigenstates and apply simultaneously as they model
the same dynamics with different labels. We interpret the right-hand
side of this equation as being proportional to the area operator. Note
that the area of the black hole is $A = 4\pi r^2_s$ and, given
$H_r = c^3 r_s / 2G$ on shell, we can express $A$ in terms of $H_r$
upon quantization as

\begin{equation}
  \hat{A} = \frac{4\pi G^2\hbar^4}{c^6}\frac{\partial^4}{\partial v^4},
\end{equation}

\noindent so we can rearrange Eq. (\ref{eq:ThetaEq})
into an expression using $\hat{A}$ in a way such that if we also
multiply both sides by $2\pi k_\text{B}$ with $k_\text{B}$ the Boltzmann
constant we get

\begin{equation}
  \hat{S}_\text{BH}\psi
  =
  \frac{k_\text{B}c^3}{4G\hbar}\hat{A}\psi,
  \qquad
  \hat{S}_\text{BH} \equiv -2\pi ik_\text{B}\frac{\partial}{\partial \Theta},
\end{equation}

\noindent with $\hat{S}_\text{BH}$ the entropy operator for the black
hole.

The factor $2\pi$ in the definition of $\hat{S}_\text{BH}$
comes from the normalization of the boost
generator relative to the horizon entropy where regularity of the horizon fixes the total boost
angle of a smooth geometry to $2\pi$ and identifies the entropy as the
variable conjugate to the deficit around that value
\cite{GibbonsHawking1977Action,Carlip_1995,BanadosTeitelboimZanelli1994Entropy}.
Indeed, the pair $(\Theta, P_\Theta)$ we have constructed is just
the horizon canonical pair of Carlip and Teitelboim, where the horizon area
divided by $8\pi G$ is conjugate to the boost (opening) angle at the
horizon \cite{Carlip_1995,MassarParentani2000AreaBoost}, and one can
check from $P_\Theta = 2GE^2/c^3$ together with $r_s = 2GE/c^3$ that
$P_\Theta = Ac^3/8\pi G$ identically.

To visualize $\Theta$ and its relation to the entropy, note that it admits
two equivalent horizon definitions.

Let $\mathcal{B}_\tau \equiv \Sigma_\tau \cap \mathcal H$ be the
two-surface obtained by intersecting an observer-congruent
hypersurface $\Sigma_\tau$ with the black-hole horizon. Then at each
point of $\mathcal{B}_\tau$ the normal space is two-dimensional and
Lorentzian. Now if we denote $n^\mu$ as the observer congruence vector
and $s^\mu$ as the outward unit spacelike normal to $\mathcal{B}_\tau$
inside $\Sigma_\tau$ with $s^\mu s_\mu = -1$ and $n^\mu s_\mu = 0$,
then a change of orthonormal frame in the normal two-plane is a
Lorentz boost

\begin{equation}
  \begin{split}
    n'^\mu(\Theta) &= \cosh \Theta n^\mu + \sinh \Theta s^\mu, \\
    s'^\mu(\Theta) &= \sinh \Theta n^\mu + \cosh \Theta s^\mu,
  \end{split}
\end{equation}

\noindent where the parameter $\Theta$ is the rapidity or boost
angle. Remarkably, this variable also arises if we express the
Schwarzschild metric in Rindler form near the horizon. Let us
define $T = c\tau$ and have

\begin{equation}
  ds^2 = \left(1 - \frac{r_s}{r}\right)dT^2
  - \left(1 - \frac{r_s}{r}\right)^{-1}dr^2
  - r^2d\Omega^2,
\end{equation}

\noindent where since we are near the horizon we have
$r = r_s + \epsilon$, with $|\epsilon| \ll r_s$. Then

\begin{equation}
  1 - \frac{r_s}{r} = 1 - \frac{r_s}{r_s + \epsilon}
  \simeq \frac{\epsilon}{r_s}.
\end{equation}

\noindent Now introduce $\rho \equiv 2\sqrt{r_s \epsilon}$, such that

\begin{equation}
  d\rho^2 = \frac{r_s}{\epsilon}d\epsilon^2
  \simeq \left(1 - \frac{r_s}{r}\right)^{-1}dr^2,
\end{equation}

\noindent and

\begin{equation}
  \left(1 - \frac{r_s}{r}\right)dT^2
  \simeq \frac{\epsilon}{r_s}dT^2 = \frac{\rho^2}{4r^2_s}dT^2,
\end{equation}

\noindent so the normal part of the line element becomes

\begin{equation}
  ds^2_\perp \simeq \frac{\rho^2}{4r^2_s}dT^2 - d\rho^2,
\end{equation}

\noindent where as commonly derived in the literature \cite{GibbonsHawking1977Action,Hawking1975ParticleCreation,BardeenCarterHawking1973} by identifying $\Theta \equiv T/2r_s$ we obtain that $\Theta$ is
the near-horizon Rindler boost parameter

\begin{equation}
  ds^2_\perp \simeq \rho^2d\Theta^2 - d\rho^2.
\end{equation}

\noindent One caution is needed. Unlike a rotation angle, the boost rapidity
$\Theta$ ranges over the whole real line, so no periodicity can be invoked
in the Lorentzian theory and the spectrum of $\hat{P}_\Theta$ is continuous.
Since $\hat{P}_\Theta = 2G\hat{H}_r^2/c^3$ and the free half-line Hamiltonian
$\hat{H}_r$ has purely continuous spectrum $[0, \infty)$, we can parametrize
the spectrum of $\hat{P}_\Theta$ in the continuous half-line as
\begin{equation}\label{eq:PThetaContinuous}
  \hat{P}_\Theta\ket{\psi_\nu} = \hbar\nu\ket{\psi_\nu},
  \qquad
  \nu \in [0, \infty),
\end{equation}

\noindent where $\nu$ is a dimensionless and continuous label. Through $P_\Theta = Ac^3/8\pi G$ and $E = c^3r_s/2G$ each value of $\nu$ corresponds to

\begin{equation}\label{eq:NellP}
  \begin{aligned}
    A_\nu&=8\pi\nu\,\ell^2_\text{P},
    \\
    E_\nu&=\frac{E_\text{P}}{\sqrt{2}}\sqrt{\nu},
    \\
    r_{s,\nu}&=\sqrt{2\nu}\,\ell_\text{P},
  \end{aligned}
\end{equation}

\noindent so the horizon area, mass and Schwarzschild radius are
smooth functions of $\nu$ over a continuum of eigenstates.

If we recall from the no ``naked bounce'' condition in Eq.~\eqref{eq:rsrb} we would indeed need to also make it apply here. Semiclassically, if we allowed $r_b \geq r_{s,\nu}$ then no black hole formation could ever happen for that specific $\nu$ state, which seems unreasonable for any value of $\nu$. Namely, let us define
\begin{equation}
  F(r) \equiv \frac{r_s}{r}\left[1 - \left(\frac{r_b}{r}\right)^3\right],
\end{equation}
then on the collapsing branch we find it possible to express Eq.~\eqref{eq:schw-corrected-r-friedmann} as $r'_\text{eff} = -\sqrt{F(r_\text{eff})}$. From here the future-directed radial null expansions may be chosen (up to positive normalization) as
\begin{equation}
  k_\pm = \partial_\lambda \pm \frac{1}{\partial_\rho r_\text{eff}}\partial_\rho,
\end{equation}
where for a symmetrical sphere its radius is just $A=4\pi r^2_\text{eff}$ and thus we can express its null expansions as
\begin{equation}
  \theta_\pm = \frac{1}{A}k_\pm(A) = \frac{2}{r_\text{eff}}k_\pm(r_\text{eff}) = \frac{2}{r_\text{eff}}(r'_\text{eff} \pm 1).
\end{equation}
Then on the collapsing branch ($\theta_+ < 0$) we find $F(r) > 1$ and a marginally trapped sphere satisfies $F(r) = 1$. Now let us demand that a black hole is always a trapped region at some $r$ very close to zero for any $\nu > 0$. This must be the case because otherwise we would find the nonsensical case of a black hole not fully evaporated that although is an untrapped region (i.e., it stops being a black hole before fully evaporating). Mathematically this means that while $A > 0$ then there must be a trapped region $F(r) \geq 1$ for some $r$ as we approach $r \longrightarrow 0$. Notoriously this cannot be $r_b$ because $F(r_b) = 0$; which is expected if $r_b$ is going to represent a bounce. We also have $F(\infty) = 0$ so the $F(r)\geq 1$ value we are looking for must be somewhere in between. Now if we maximize $F(r)$ by setting $F'(r_\text{max}) = 0$ we find $r_\text{max} = \sqrt[3]{4}r_b$. Thus if we demand $F(r) \geq 1$ there are indeed two values that cross the $F(r) = 1$ limit. This also comes from the fact that $F(r) = 1$ is equal to
\begin{equation}\label{eq:schw-trapped-quartic}
  r^4 -r_{s,\nu}r^3 + r_{s,\nu}r^3_b = 0,
\end{equation}
which can be found to have two solutions that we can call $r_-$ and $r_+$. Then we see $F(r) < 1$ for $r > r_+$ and for $r_b \leq r \leq r_-$, but $F(r) > 1$ for $r_- < r < r_+$. For large black holes where $r_b/r_{s,\nu} \ll 1$ then $r_+$ can be approximated to be $r_+ \simeq r_{s,\nu}$ and for some $r_-$ slightly above the bounce radius---i.e., $r_b \simeq r_b(1 + \delta)$---then $F(r_-) \simeq 3r_{s,\nu}\delta/r_b = 1$, so $r_b \simeq r_b(1 + r_b/3r_{s,\nu})$, which for a large black hole with $r_b / r_{s,\nu} \ll 1$ is barely above $r_b$, which is sufficient for particles to bounce slightly before falling back in.

Now from these definitions it becomes clear that as the parameter $\nu$ is lowered then $r_- \longrightarrow r_+$. This is because the curve traced by $F(r)$ has a lower $F_\text{max}$ as we decrease $\nu$; note that at $r_\text{max} = r_b$ then $F_\text{max} = F(r_\text{max}) = 3r_s/4^{4/3}r_b$, so if $M$ decreases then $r_{s,\nu}$ decreases and $r_b$ increases, so there shall be a crossing point $F_\text{max} = 1$. This is the limit we must have for any $\nu > 0$; so if we require $F_\text{max} \geq 1$ at some $r$ for any $\nu > 0$ (thus having a trapped region for any $\nu > 0$), then $F_\text{max} \geq 1$ is the same as to ask
\begin{equation}
  r_b \leq \frac{3}{4^{4/3}}r_{s,\nu},
\end{equation}
and for a lower bound on $r_b \propto M^{-1}$ and $r_{s,\nu} \propto M$, if we replace $M = M_\text{P}\sqrt{\nu/2}$ in turn gives
\begin{equation}
  \nu \geq \nu_\text{b,trapped} = 2\sqrt{3},
\end{equation}
which gives us a physical lower bound for trapped or marginally trapped surfaces. This of course implies a reinforced scale given by
\begin{equation}
  \begin{aligned}
    r_\text{min}&=2\sqrt[4]{3}\ell_\text{P},
    \\
    A_\text{min}&=16\sqrt{3}\pi\ell^2_\text{P},
    \\
    M_\text{min}&=\sqrt[4]{3}M_\text{P},
    \\
    E_\text{min}&=\sqrt[4]{3}E_\text{P},
  \end{aligned}
\end{equation}
all of roughly the Planck scale, as one would expect. We can call the black holes with these minimum values the ``terminal family'' for the sake of economy in the rest of this work.

These relations have the additional benefit of sparing us
some unphysical divergences. Note that
the classical definition of temperature is
$T^{-1}(E) = k_\text{B}\beta(E)$, which for
$k_\text{B}\beta(E) = \partial_E S(E)$, our definition of
$\hat{S}_\text{BH}$ and its eigenvalues would give
\begin{equation}
  \hat{T}_\text{semi} = \frac{E^2_\text{P}}{8\pi k_\text{B}}\hat{E}^{-1},
\end{equation}
which is the semiclassical Hawking expression written in terms of the boundary charge. This expression, however, samples the entropy through the semiclassical surface term alone; i.e., through the area $A = 4\pi r^2_{s,\nu}$ of the on-shell identification $r_s = 2GE/c^3$. We have just seen that in our formalism the surface bounding the trapped region does not sit at $r_{s,\nu}$ and instead the bounce displaces the marginally trapped surface to the outer root $r_+$ of Eq.~\eqref{eq:schw-trapped-quartic}, and $r_+$ approaches $r_{s,\nu}$ only for large black holes. The effective surface term is therefore built on the area $A_+ = 4\pi r^2_+$ and it is this area that the thermodynamic derivative must sample as
\begin{equation}
  S(E) = \frac{k_\text{B}c^3}{4G\hbar}A_+ = \frac{\pi k_\text{B}}{\ell^2_\text{P}}r^2_+(E).
\end{equation}
To differentiate let us first put Eq.~\eqref{eq:schw-trapped-quartic} in dimensionless form. From Eq.~\eqref{eq:schw-reff-quantized} on the ground state we have $r^3_b = 81G\hbar^2/32Mc^4$, which together with $M = M_\text{P}\sqrt{\nu/2}$ gives $(r_b/r_{s,\nu})^3 = 81/64\nu^2$, so the ratio $x_\nu \equiv r_+/r_{s,\nu}$ is the largest root of
\begin{equation}\label{eq:schw-xnu-quartic}
  x^4_\nu - x^3_\nu + \frac{81}{64\nu^2} = 0,
\end{equation}
which ranges over $3/4 \leq x_\nu < 1$ for $\nu \geq 2\sqrt{3}$. Implicit differentiation of Eq.~\eqref{eq:schw-trapped-quartic} with $dr_{s,\nu}/dE = r_{s,\nu}/E$ and $dr^3_b/dE = -r^3_b/E$ cancels the $r_b$ contributions between the two terms of the quartic and leaves
\begin{equation}
  \frac{dr_+}{dE} = \frac{r_{s,\nu}r_+}{E\left(4r_+ - 3r_{s,\nu}\right)},
\end{equation}
so the corrected temperature reads
\begin{equation}\label{eq:schw-T-corrected}
  \hat{T} = \frac{E^2_\text{P}}{8\pi k_\text{B}}\hat{E}^{-1}\frac{4x_\nu - 3}{x^2_\nu},
\end{equation}
where $x_\nu$ is understood as the operator function $x(\hat{P}_\Theta/\hbar)$ defined by the functional calculus. Note that it is legitimate because $\hat{P}_\Theta$ and $\hat{E}$ are both functions of $\hat{H}_r$ and commute. For large black holes Eq.~\eqref{eq:schw-xnu-quartic} gives $x_\nu = 1 - 81/64\nu^2 + \mathcal{O}(\nu^{-4})$, so
\begin{equation}
  \hat{T} \simeq \frac{E^2_\text{P}}{8\pi k_\text{B}}\hat{E}^{-1}\left(1 - \frac{81}{32\nu^2}\right),
\end{equation}
and the semiclassical temperature is recovered with a correction that is utterly negligible for any astrophysical black hole; a solar mass corresponds to $\nu \sim 10^{76}$. At the opposite end the two roots of Eq.~\eqref{eq:schw-xnu-quartic} merge at $x_\nu = 3/4$ when $\nu = 2\sqrt{3}$, which is the same merging of $r_-$ and $r_+$ that defined the terminal family, so the factor $4x_\nu - 3$ vanishes exactly on it. Expanding the quartic around this double root we find
\begin{equation}
  T \simeq \frac{\sqrt{2}E_\text{P}}{9\pi k_\text{B}}\sqrt{\nu - 2\sqrt{3}},
\end{equation}
so the terminal family sits at $T = 0$ and is approached with the square-root law familiar from extremal horizons. The semiclassical expression $\hat{T}_\text{semi}$ would instead assign it the finite value $E_\text{P}/8\sqrt[4]{3}\pi k_\text{B}$; so the effective surface term here replaces this frozen maximal temperature with a cold endpoint. Since the corrected temperature vanishes both on the terminal family and in the limit of large mass then it must attain a maximum in between. Solving Eq.~\eqref{eq:schw-xnu-quartic} for $\nu = (9/8)x^{-3/2}_\nu(1 - x_\nu)^{-1/2}$ and maximizing Eq.~\eqref{eq:schw-T-corrected} along this curve gives $x_\nu = 15/16$; i.e.,
\begin{equation}
  T_\text{max} = \frac{2E_\text{P}}{15^{5/4}\pi k_\text{B}},
  \qquad
  \nu_{T_\text{max}} = \frac{32\sqrt{15}}{25} \simeq 4.96,
\end{equation}
still of the order of the Planck temperature but attained slightly above the terminal family rather than on it. The evaporation picture that follows is that a large black hole heats up as it radiates, reaches $T_\text{max}$, and then cools along the last stretch until the flux shuts off on the terminal family and it behaves as a zero-temperature remnant. Note also that the effective-surface entropy differs from the eigenvalues of $\hat{S}_\text{BH}$ only at order $\nu^{-1}$ because $S = 2\pi k_\text{B}\nu x^2_\nu = 2\pi k_\text{B}\nu - 81\pi k_\text{B}/16\nu + \cdots$, so the boost-conjugate counting used below is unaffected at leading order.

However, note that the area of applicability of these results is limited to the case of a reduced, spherically symmetric sector. Whether the dynamical process of evaporation actually reaches the terminal configuration, or
whether the emission of the last quanta can disperse it entirely,
cannot be decided within the reduced formalism. That would require coupling
the boundary theory to the radiation degrees of freedom and to the
wider spacetime dynamics of formation and evaporation. This remains a
work of future study; what the present kinematics does establish is
that, if a horizon exists at all, its size and mass are bounded by the
terminal configuration and its temperature by $T_\text{max}$.

This theory also provides us with the ``microscopic'' explanation that
Carlip and Teitelboim were looking for. The degeneracy of the horizon
macrostate \cite{GibbonsHawking1977Action,Bekenstein1973Entropy,Hawking1975ParticleCreation} is given by
\begin{equation}
  \Omega \sim \exp\left(\frac{S_\text{BH}}{k_\text{B}}\right)
  = \exp(2\pi \nu) \geq \exp(4\sqrt{3}\pi),
\end{equation}
where the label $\nu$ runs over the continuous spectrum of Eq.
(\ref{eq:PThetaContinuous}) and the terminal bound $\nu \geq 2\sqrt{3}$ cuts
the counting off from below. The exponential weight $e^{2\pi\nu}$ therefore
appears directly as the density of boundary states conjugate to the boost
angle, which is the microscopic origin that Carlip and Teitelboim identified
as missing from the Euclidean treatment \cite{Carlip_1995}. The counting is
continuous rather than a degeneracy of discrete area eigenvalues, in
agreement with the continuous spectra found throughout this sector.

The terminal family also makes contact with the remnant literature.
Arguments based on the generalized uncertainty principle deform the
canonical commutator and predict a maximal temperature together with a
Planck-mass endpoint of evaporation
\cite{AdlerChenSantiago2001GUP,ChenOngYeom2015Remnants}. The bounds
$T_\text{max}$ and $M_\text{min}$ obtained above are of the same form, yet
they arise here with the canonical commutator untouched. Its inputs are the
conserved invariant $I_r$, the DeWitt condition acting through the
representation $D^+(k=3/4)$, and the requirement that every black hole
possess a trapped region. A further difference is that the endpoint here
is reached at vanishing temperature through
Eq.~\eqref{eq:schw-T-corrected} in the manner of extremal horizons,
whereas the generalized uncertainty principle remnants typically
terminate while still at their maximal temperature. Whether evaporation dynamically halts on the
terminal family, and with it the fate of the information problem, remains
outside the reduced formalism for the reasons discussed above.

\subsection{Newtonian mechanics}
There is, however, another conceivable singularity arising from the scattering
amplitudes. Let us build the $S$ matrix using the linearised version of our
canonical Hamiltonian. If we impose an approximately on-shell regime, then
$R \approx \kappa T$ and thus we obtain the
$\sim h_{\mu\nu}T^{\mu\nu}$ interaction term to obtain the Newtonian potential.
Then for elastic scattering of two heavy scalar sources, define the momentum transfer
\begin{equation}
  q^\mu=p_f^\mu-p_i^\mu,
\end{equation}
which in the non-relativistic regime gives
\begin{equation}
  n_\mu q^\mu\simeq0,
  \qquad
  q_\perp^\mu\equiv\mathcal P^\mu{}_\nu q^\nu,
  \qquad
  q^2\simeq -Q^2,
\end{equation}
where
\begin{equation}
  Q^2
  \equiv
  \gamma_{\mu\nu}q_\perp^\mu q_\perp^\nu
  =
  -g_{\mu\nu}q_\perp^\mu q_\perp^\nu
  >0.
\end{equation}
Similarly, for an observer-spatial separation $X_\perp^\mu$,
\begin{equation}
  n_\mu X_\perp^\mu=0,
  \qquad
  \ell^2
  \equiv
  \gamma_{\mu\nu}X_\perp^\mu X_\perp^\nu.
\end{equation}
The non-relativistic amplitude is defined by
\begin{equation}
  \mathcal M_{\rm NR}(q_\perp)
  \equiv
  \frac{\mathcal M(q_\perp)}{4m_1m_2}.
\end{equation}
The effective potential is obtained by matching to the first Born
approximation on the asymptotic observer-spatial cotangent space
\begin{equation}
  V_{\rm eff}(\ell)
  =
  -\int_{\Sigma_n^\ast}
  \frac{d^3q_\perp}{(2\pi)^3}\,
  e^{iq^\perp_\mu X_\perp^\mu}\,
  \mathcal M_{\rm NR}(q_\perp).
\end{equation}
At tree level, one-graviton exchange gives the long-range amplitude
\begin{equation}
  \mathcal M_{\rm NR}^{(0)}(q_\perp)
  =
  \frac{4\pi Gm_1m_2}{Q^2}.
\end{equation}
The needed observer-spatial Fourier integral is
\begin{equation}
  \int_{\Sigma_n^\ast}
  \frac{d^3q_\perp}{(2\pi)^3}\,
  \frac{e^{iq^\perp_\mu X_\perp^\mu}}{Q^2}
  =
  \frac{1}{4\pi\ell}.
\end{equation}
Therefore
\begin{equation}
  V_{\rm eff}^{(0)}(\ell)
  =
  -\frac{Gm_1m_2}{\ell}.
\end{equation}
For a heavy source $m_1=M$ and a test particle $m_2=m$,
\begin{equation}
  V_{\rm eff}^{(0)}(\ell)=m\Phi_{\rm N}(\ell),
  \qquad
  \Phi_{\rm N}(\ell)
  =
  -\frac{GM}{\ell}.
\end{equation}
The leading long-range corrections come from the non-analytic terms of the
one-loop scattering amplitude \cite{Donoghue1994EFT,BjerrumBohrDonoghueHolstein2003}.
Analytic terms in $Q^2$ Fourier transform
into contact terms supported at $X_\perp^\mu=0$, such as the spatial delta
distribution $\delta_{\Sigma_n}(X_\perp)$ and its derivatives. They do not
control the long-distance potential. After subtracting the iterated Born term,
the non-analytic low-energy amplitude has the form
\begin{equation}
  \begin{aligned}
    \mathcal M_{\rm NR}(q_\perp)
    ={}&
    \frac{4\pi Gm_1m_2}{Q^2}
    \\
    &+
    \frac{6\pi^2G^2m_1m_2(m_1+m_2)}
    {c^2Q}
    \\
    &-
    \frac{41}{5}
    \frac{G^2m_1m_2\hbar}{c^3}
    \log\left(\frac{Q^2}{\mu^2}\right)
    \\
    &+
    \mathcal M_{\rm an.}(q_\perp),
  \end{aligned}
  \label{eq:one-loop-long-range-grav-amp}
\end{equation}
where $\mathcal M_{\rm an.}$ denotes analytic short-range terms. The two
additional Fourier transforms are
\begin{equation}
  \int_{\Sigma_n^\ast}
  \frac{d^3q_\perp}{(2\pi)^3}\,
  \frac{e^{iq^\perp_\mu X_\perp^\mu}}{Q}
  =
  \frac{1}{2\pi^2\ell^2},
\end{equation}
and
\begin{equation}
  \int_{\Sigma_n^\ast}
  \frac{d^3q_\perp}{(2\pi)^3}\,
  e^{iq^\perp_\mu X_\perp^\mu}
  \log\left(\frac{Q^2}{\mu^2}\right)
  =
  -\frac{1}{2\pi\ell^3},
\end{equation}
up to contact terms. Applying the Born matching term by term gives
\begin{align}
  V_{\rm eff}(\ell)
  ={}&
  -
  \frac{Gm_1m_2}{\ell}
  \nonumber\\
  &
  -
  \frac{3G^2m_1m_2(m_1+m_2)}{c^2\ell^2}
  \nonumber\\
  &
  -
  \frac{41}{10\pi}
  \frac{G^2m_1m_2\hbar}{c^3\ell^3}
  \nonumber\\
  &
  +
  V_{\rm contact}(\ell)
  +
  \cdots
  \nonumber\\
  ={}&
  -\frac{Gm_1m_2}{\ell}
  \left[
    1
    +
    3\frac{G(m_1+m_2)}{\ell c^2}
  \right.
  \nonumber\\
  &\left.\qquad\qquad
    +
    \frac{41}{10\pi}
    \frac{G\hbar}{\ell^2c^3}
  \right]
  \nonumber\\
  &
  +
  V_{\rm contact}(\ell)
  +
  \cdots .
  \label{eq:donoghue-effective-potential}
\end{align}
The first correction is classical and post-Newtonian. The second correction is
quantum and proportional to $\hbar$. Therefore, for a heavy source $M$ and a test
mass $m\ll M$,
\begin{equation}
  \Phi_{\rm eff}(\ell)
  =
  -\frac{GM}{\ell}
  \left[
    1
    +
    3\frac{GM}{\ell c^2}
    +
    \frac{41}{10\pi}
    \frac{G\hbar}{\ell^2c^3}
  \right]
  +
  \cdots .
\end{equation}
Equivalently, using the Planck length $\ell_{\rm P}$ the quantum correction is
\begin{equation}
  \Phi_{\rm q}(\ell)
  =
  -\frac{GM}{\ell}
  \left[
    \frac{41}{10\pi}
    \frac{\ell_{\rm P}^2}{\ell^2}
  \right].
\end{equation}
The theory would clearly diverge for $\ell \rightarrow 0$, but if we invoke the minimum length scale for gravitational masses $\ell_\text{min} = 2\sqrt[4]{3}\ell_\text{P}$ then the $h_{00}$ component becomes finite and we can thus regularize $\mathcal M_{\rm NR}$. Note that this minimum length scale regularization of the potential implies a maximum UV cut-off; i.e., a maximum momentum $p_\text{max} = \hbar \Lambda_\text{UV} = cM_\text{P} / 2\sqrt[4]{3}$ for gravitons because a minimum $\ell$ implies a maximum $Q_\text{max} = \Lambda_\text{UV}$ at Eq. (\ref{eq:one-loop-long-range-grav-amp}).
 \section{Conclusion}

We have constructed a covariant Hamiltonian formulation in which the role of time evolution
is played by the Lie derivative along an observer congruence. The preferred external time of
ordinary Hamiltonian mechanics is thereby replaced by the proper-time flow of a family of
physical observers, while the configuration variables are projected onto the local rest
spaces orthogonal to $n^\mu$. The result is a generally relativistic Hamiltonian formalism
that reduces to the classical one in the appropriate limit and, crucially, assigns a
well-defined evolution to the metric tensor itself.

The formalism also avoids most of the structural difficulties of the ADM construction
\cite{Arnowitt1962Dynamics}. There is no Hamiltonian or momentum constraint; the only constraints are
those demanding that the canonical pair be observer-spatial, $n^\mu\hat\gamma_{\mu\nu}
\ket{\Psi}=0$ and $n_\mu\hat\pi^{\mu\nu}\ket{\Psi}=0$. These are second class with an
invertible constraint matrix, so the Dirac bracket is trivial to construct and quantization
proceeds without secondary constraints.
Because the proper-time Schr\"odinger equation is
first order in $\tau$ and its generator is symmetric with respect to the invariant
configuration-space measure the theory carries the conserved positive inner product of
Eq. (\ref{eq:qg-inner-product}), so neither the frozen dynamics
nor the inner-product problem of the Wheeler--DeWitt kernel equation reappears
\cite{Kuchar2011Time,Isham1993Canonical}.
Gauge invariance under observer-spatial
diffeomorphisms is then carried by the metric density $\sqrt{\gamma}$ and the superspace density
$\sqrt{\mathscr{M}}$, which relate different coordinatizations both of spacetime and of the
reduced configuration space---the first on the lines of passing from $(ct,x)$ to $(ct,r)$,
the second on the lines of passing from $(a,p_a)$ to $(u,p_u)$ with $u=f(a)$.

Upon quantization, the ordering ambiguity of the kinetic term $\pi_{\mu\nu}\pi^{\mu\nu}
-\pi^2/2$ is resolved by Theorem~\ref{thm:ordering}.
This solves a classic and arguably the greatest difficulty in the
Wheeler--DeWitt approach \cite{DeWitt1967Quantum}.
Covariance under field redefinitions on configuration space in the sense
that quantizing $(a,p_a)$ and then passing to $(u,p_u)$ agrees with passing
first and quantizing after, combined with Hermiticity with respect to
the invariant measure (required by unitary proper-time evolution)
singles out the Laplace--Beltrami ordering up to scalar potentials of
the curvature type $\xi R$; the minimality condition $\hat T 1 = 0$ then
removes these as non-minimal $\hbar^2$-order interactions
\cite{DeWitt1957DynamicalTheory}. These hypotheses are available here
because the theory evolves unitarily in the proper time of the
congruence. In the ADM-based Wheeler--DeWitt quantization the analogous
operator can be written down but cannot be singled out in this way,
since no physical inner product is available before the problem of time
is resolved and the ordering is further entangled with the closure of
the constraint algebra \cite{Kuchar2011Time,Isham1993Canonical,TsamisWoodard1987Regulated}.
Indeed, Tsamis and Woodard showed that the ADM ordering problem cannot
even be stated before the constraints are regulated because verifying
Dirac consistency of the Hamiltonian and momentum constraints requires
multiplying coincident operator-valued distributions
\cite{TsamisWoodard1987Regulated}. But since the present theory possesses
neither constraint then that
necessity never arises and the theorem is stated and proved without any
ultraviolet regulator. The same theorem renders the symmetry-reduced
models we have derived unambiguous, because the reduced kinetic metrics
fix the reduced operators without further choices.

The applications of the theory yield the predictions one would hope for from a quantum
theory of gravity. In the cosmological sector the quantum-corrected Friedmann equations
acquire a repulsive $\propto I_q/a_{\rm eff}^6$ term controlled by the conserved quantum
invariant $I_q \geq \hbar^2/4$, so the effective volume of the universe is bounded below by
Eq. (\ref{eq:flrw-min-volume-bound}) and the Big Bang singularity is resolved for any value of the curvature
parameter $k$; the classical singularity is recovered only in the limit $\hbar\to 0$ where
the invariant is allowed to vanish. Note that if we followed the DFF trick we used with black holes we could improve the bound to $I_q \geq 9\hbar^2/4$ as well. In the Schwarzschild--Lemaître sector the boundary
dynamics reduce to a free particle on the half-line in $v\propto r^{3/2}$, and the analogous
invariant $I_r$ prevents the effective areal radius from reaching zero, keeping the effective Kretschmann scalar that a particles sees always finite. The DeWitt condition at $v=0$ places the boundary theory in the
representation $D^+(k=3/4)$ of the $\mathfrak{so}(2,1)$ conformal algebra of de Alfaro,
Fubini and Furlan (DFF), on whose discrete basis the invariant can be calculated to be
$I_r = \hbar^2(4n+3)^2/4$. Yet, all observables of the theory are continuous and the role of these moment invariants are just to give lower bounds to them.

The angle--area picture obtained by the canonical transformation to the horizon pair
$(\Theta, P_\Theta)$ of Carlip--Teitelboim type then organizes these results into black-hole
thermodynamics.
The entropy operator is conjugate to the boost angle, the temperature
acquires an effective-surface correction that keeps it below
$T_{\rm max} = 2E_P/15^{5/4}\pi k_B$ and sends it to zero on the terminal family, and the smallest horizon supported by the theory is
the terminal configuration with $r_{\rm min}=2\sqrt[4]{3}\ell_P$ and
$M_{\rm min}=\sqrt[4]{3} M_P$.
These however are kinematical statements about the terminal configuration. Whether the dynamical process of
evaporation actually terminates on the terminal family---leaving a Planck-scale remnant, with
the implications this would carry for the information problem---or whether the emission of the
last quanta disperses the configuration entirely, cannot be decided within the reduced
formalism and requires coupling to the radiation degrees of freedom. What the present
kinematics does establish is that, if a horizon exists, its size, mass and temperature are
bounded by the terminal configuration, and that these same bounds provide the microscopic continuous counting $\Omega \sim e^{2\pi\nu}$ that Carlip and Teitelboim identified as missing from the Euclidean treatment \cite{Carlip_1995}. If further studies confirm no exterior dynamics could remove the $2\sqrt[4]{3}\ell_\text{P}$ limit for the Schwarzschild radius it would automatically solve the information paradox.

Read in reverse, the terminal bound also states that no horizon forms below
$M_{\rm min}=\sqrt[4]{3}M_P$. Elementary particles therefore do not form black holes on their own and ordinary interactions have no channel by which to concentrate a super-Planckian energy in a sufficiently small region. One may picture this as a threshold on curvature excitations of the Schwarzschild minisuperspace, where a single electron cannot radiate arbitrarily small, smooth curvature wavepackets of this sector because the corresponding waves demand a minimum of energy that it does not possess.

Finally, this minimum radius supplies a physical ultraviolet cutoff for graviton dynamics. In the scattering sector the momentum transfer is cut off at
$Q_{\max} = \Lambda_{\rm UV} = (2\sqrt[4]{3}\ell_P)^{-1}$, i.e.\ $p_{\max} = \hbar\Lambda_{\rm UV} = cM_P/2\sqrt[4]{3}$, so the effective potential Eq. (\ref{eq:donoghue-effective-potential}) is regularized and the theory
acquires a UV-complete description of the graviton sector within its domain of validity.

\begin{appendices}
\section{Unique ordering}\label{app:ordering}
\begin{theorem}[Uniqueness of the minimal field-redefinition-covariant ordering]
\label{thm:ordering}

Let $\Phi^A(x)$ be a general tensor field on the observer-spatial
leaf $\Sigma_\lambda$, with $\lambda=c\tau$ and
$\dot{\Phi}^A \equiv \mathcal L_n\Phi^A$. The condensed index
$A$ contains tensor indices, projected internal labels, and
field-species labels.
Let the configuration space $\mathcal C$ be a
smooth manifold of finite dimension $N$ with coordinates $q^A$.
Sectors of this kind arise as the symmetry reductions of the main
text and as the single-point blocks of the ultralocal kinetic metric
of the full theory, through which Remark
\ref{rem:field-space-extension} carries the conclusion to cylinder
wavefunctionals on the full field space; the theorem itself is a
statement of finite-dimensional geometry, and no ultraviolet
regularization enters its hypotheses. We require every invertible
pointwise redefinition $\Phi\mapsto\chi(\Phi)$ to act as a
diffeomorphism of $\mathcal C$, which holds automatically in both
cases just named because such a redefinition acts separately on each
configuration variable; a sharp truncation in Fourier modes would in
general lack this property, since a nonlinear $\chi$ mixes retained
and discarded modes, and this is why the extension of Remark
\ref{rem:field-space-extension} proceeds through point blocks rather
than through modes.
Consider a
natural Lagrangian of the form
\begin{equation}
  L(q,\dot q)
  =
  \frac{1}{2}\mathscr G_{AB}(q)\dot q^A\dot q^B
  -
  U(q),
  \label{eq:natural-lagrangian}
\end{equation}
where $\mathscr G_{AB}$ is a non-degenerate, possibly indefinite, metric
on $\mathcal C$. The canonical momenta and Hamiltonian are
\begin{equation}
  \begin{aligned}
    p_A
    &=
    \frac{\partial L}{\partial \dot q^A}
    =
    \mathscr G_{AB}\dot q^B,
    \\
    H(q,p)
    &=
    \frac{1}{2}\mathscr G^{AB}(q)p_Ap_B
    +
    U(q).
  \end{aligned}
  \label{eq:natural-hamiltonian}
\end{equation}
Now let
\begin{equation}
  \sqrt{\mathscr M(q)}
  :=
  \sqrt{\left|\det \mathscr G_{AB}(q)\right|}
  \label{eq:field-space-measure-density}
\end{equation}
be the invariant density induced by the kinetic metric. Suppose that
the quantum kinetic operator is a second-order differential operator on
scalar wavefunctions $\Psi(q)$ with principal symbol $\mathscr G^{AB}$
\begin{equation}
  \hat T
  =
  -\frac{\hbar^2}{2}
  \left(
    \mathscr G^{AB}\partial_A\partial_B
    +
    B^A(q)\partial_A
    +
    C(q)
  \right).
  \label{eq:general-second-order-ordering}
\end{equation}
Assume the following four conditions.

First, $\hat T$ is covariant under arbitrary invertible field
redefinitions
\begin{equation}
  Q^I=Q^I(q).
  \label{eq:field-redefinition}
\end{equation}
That is, if $\widetilde\Psi(Q)=\Psi(q(Q))$, then quantizing after the
field redefinition gives the same operator as redefining after
quantization
\begin{equation}
  \left(\hat T_Q\widetilde\Psi\right)(Q)
  =
  \left(\hat T_q\Psi\right)(q(Q)).
  \label{eq:quantization-commutes-with-field-redefinition}
\end{equation}
Second, $\hat T$ is constructed only from the kinetic metric
$\mathscr G_{AB}$ and its derivatives, with no additional background
structure on $\mathcal C$. In particular, the inner product on
wavefunctions is the one defined by the invariant measure of the
kinetic metric,
\begin{equation}
  \langle\Psi,\Phi\rangle
  =
  \int_{\mathcal C} d^Nq\,\sqrt{\mathscr M}\,\Psi^*\Phi,
  \label{eq:induced-inner-product}
\end{equation}
since this is the only volume element that can be built without
introducing further structure. Third, $\hat T$ is symmetric with
respect to this inner product and commutes with the antiunitary
conjugation $\Psi\mapsto\Psi^*$, which implements time reversal in the
configuration representation. The first property is required for the
Hamiltonian to generate a unitary proper-time evolution, and the second
holds for any operator ordering of the classical kinetic term because
$\mathscr G^{AB}p_Ap_B$ is even in the momenta. Fourth, the quantization
is minimal, in the sense that a pure kinetic Hamiltonian contains no
additional scalar quantum potential
\begin{equation}
  \hat T\,1=0.
  \label{eq:minimality-condition}
\end{equation}
Then the unique admissible kinetic operator is the Laplace--Beltrami
operator on configuration space
\begin{equation}
  \hat T_{\rm min}
  =
  -\frac{\hbar^2}{2}\Delta_{\mathscr G}
  =
  -\frac{\hbar^2}{2}
  \frac{1}{\sqrt{\mathscr M}}
  \partial_A
  \left(
    \sqrt{\mathscr M}\,
    \mathscr G^{AB}\partial_B
  \right).
  \label{eq:minimal-laplace-beltrami-operator}
\end{equation}
\end{theorem}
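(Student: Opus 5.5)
Start from the general form of $\hat T$ given in the statement and pin down the lower-order coefficients $B^A$ and $C$ one hypothesis at a time.

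\emph{Covariance.} Under a change of chart $Q^I=Q^I(q)$, the second-derivative term $\mathscr G^{AB}\partial_A\partial_B$ produces first-order pieces $\mathscr G^{AB}\,\partial_A\partial_B Q^I\,\partial_I$. Since the principal symbol is a tensor, the first hypothesis forces $B^A$ to transform inhomogeneously so as to cancel these pieces, while $C$ must be a scalar function on $\mathcal C$. Write
\[
B^A=-\mathscr G^{BC}\Gamma^A{}_{BC}+V^A ,
\]
where $\Gamma$ is the Levi-Civita connection of $\mathscr G_{AB}$. The inhomogeneous part of $B^A$ is exactly that of $-\mathscr G^{BC}\Gamma^A{}_{BC}$, so $V^A$ is a genuine vector field on $\mathcal C$. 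With this split,
\[
\hat T=-\frac{\hbar^2}{2}\left(\Delta_{\mathscr G}+V^A\nabla_A+C\right),
\]
using the identity $\Delta_{\mathscr G}\Psi=\mathscr G^{AB}\nabla_A\nabla_B\Psi=\mathscr G^{AB}\partial_A\partial_B\Psi-\mathscr G^{BC}\Gamma^A{}_{BC}\partial_A\Psi$, which equals the divergence form of Eq.~\eqref{eq:minimal-laplace-beltrami-operator}.

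\emph{Symmetry and reality.} Compute the formal adjoint of $\hat T$ with respect to the measure $\sqrt{\mathscr M}\,d^Nq$ of Eq.~\eqref{eq:induced-inner-product}, integrating by parts and assuming compactly supported test functions, or the boundary conditions fixing a self-adjoint domain. The Laplace--Beltrami operator is symmetric for this measure. The adjoint of $V^A\nabla_A$ is $-\bar V^A\nabla_A-\nabla_A\bar V^A$. Symmetry therefore requires $V^A=-\bar V^A$, and commuting with complex conjugation requires $V^A$ and $C$ to be real, so $V^A=0$. What remains is $\hat T=-\tfrac{\hbar^2}{2}(\Delta_{\mathscr G}+C)$ with $C$ a real scalar. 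The second hypothesis says $C$ is built only from $\mathscr G_{AB}$ and its derivatives; this is exactly where the curvature family of Eq.~\eqref{eq:xi-curvature-family} enters, e.g.\ $C=-\xi R[\mathscr G]$. I will not try to classify all such scalars, since it is unnecessary.

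\emph{Minimality.} Since $\Delta_{\mathscr G}1=0$, the condition $\hat T\,1=0$ gives $C\equiv0$ directly, which yields Eq.~\eqref{eq:minimal-laplace-beltrami-operator}. This holds for indefinite $\mathscr G$ as well, because only $|\det\mathscr G|$ enters the measure and none of the steps used definiteness.

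The main obstacle is the covariance step. One has to show cleanly that the transformation law of $B^A$ is fixed only up to a tensor, and that this tensor ambiguity is exactly the vector $V^A$. I would do this by comparing $\hat T$ with $\Delta_{\mathscr G}$: both operators are covariant and share the same principal symbol, so their difference $\hat T+\tfrac{\hbar^2}{2}\Delta_{\mathscr G}$ is a covariant first-order operator. Every such operator has the form $V^A\partial_A+C$ with $V^A$ a vector and $C$ a scalar, which avoids tracking Christoffel terms by hand. A further point is that symmetry must be checked on a dense domain, so that the boundary terms from the integration by parts vanish.
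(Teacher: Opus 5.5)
Your proposal is correct and matches the paper's proof step by step. You subtract $\Delta_{\mathscr G}$ to leave a covariant first-order remainder $V^A\partial_A+C$, with $V^A$ a vector and $C$ a scalar. Symmetry forces $V^A$ to be imaginary and conjugation invariance forces it to be real, so $V^A=0$; minimality then gives $C=0$ because $\Delta_{\mathscr G}1=0$. Your Christoffel split $B^A=-\mathscr G^{BC}\Gamma^A{}_{BC}+V^A$ is the same subtraction the paper performs with $B^A_{\rm LB}$, since $B^A_{\rm LB}=-\mathscr G^{BC}\Gamma^A{}_{BC}$.
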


\begin{proof}

Since $\mathcal C$ is an ordinary finite-dimensional manifold, every
measure, determinant, and integration by parts below is well defined,
and no regularization of any kind is invoked in the argument. The
functional expressions used in the main text stand for this
conclusion applied block by block to the ultralocal kinetic metric of
the full theory, in the sense made precise by Remark
\ref{rem:field-space-extension}.
Let $Q^I=Q^I(q)$ be an invertible field redefinition. The
canonical one-form must be invariant
\begin{equation}
  p_A\,dq^A
  =
  P_I\,dQ^I.
  \label{eq:canonical-one-form-invariance}
\end{equation}
Therefore the momenta transform as covectors on configuration space
\begin{equation}
  P_I
  =
  p_A\frac{\partial q^A}{\partial Q^I}.
  \label{eq:momentum-covector-law}
\end{equation}
On the other side, the kinetic metric transforms as
\begin{equation}
  \begin{aligned}
    \widetilde{\mathscr G}_{IJ}(Q)
    ={}&
    \frac{\partial q^A}{\partial Q^I}
    \frac{\partial q^B}{\partial Q^J}
    \mathscr G_{AB}(q),
    \\
    \widetilde{\mathscr G}^{IJ}(Q)
    ={}&
    \frac{\partial Q^I}{\partial q^A}
    \frac{\partial Q^J}{\partial q^B}
    \mathscr G^{AB}(q),
  \end{aligned}
  \label{eq:kinetic-metric-transformation}
\end{equation}
so it follows immediately that
\begin{equation}
  \frac{1}{2}\mathscr G^{AB}p_Ap_B
  =
  \frac{1}{2}\widetilde{\mathscr G}^{IJ}P_IP_J,
  \label{eq:classical-kinetic-invariance}
\end{equation}
and thus the classical kinetic Hamiltonian is invariant under field
redefinitions.

At the quantum level, the wavefunction is taken to be a scalar on
configuration space
\begin{equation}
  \widetilde\Psi(Q)
  =
  \Psi(q(Q)).
  \label{eq:wavefunction-is-scalar}
\end{equation}
The derivative transforms as
\begin{equation}
  \partial_A
  =
  \frac{\partial Q^I}{\partial q^A}\partial_I,
  \label{eq:first-derivative-chain-rule}
\end{equation}
and therefore
\begin{equation}
  \partial_A\partial_B
  =
  \frac{\partial Q^I}{\partial q^A}
  \frac{\partial Q^J}{\partial q^B}
  \partial_I\partial_J
  +
  \frac{\partial^2 Q^I}{\partial q^A\partial q^B}
  \partial_I.
  \label{eq:second-derivative-chain-rule}
\end{equation}
Hence
\begin{equation}
  \mathscr G^{AB}\partial_A\partial_B
  =
  \widetilde{\mathscr G}^{IJ}\partial_I\partial_J
  +
  \mathscr G^{AB}
  \frac{\partial^2 Q^I}{\partial q^A\partial q^B}
  \partial_I.
  \label{eq:bare-second-derivative-noncovariance}
\end{equation}
The second term in Eq. \eqref{eq:bare-second-derivative-noncovariance}
is a non-covariant term produced by a nonlinear field redefinition.
Therefore the bare operator $\mathscr G^{AB}\partial_A\partial_B$ is not
a scalar differential operator on configuration space.

To make it covariant let us consider the density induced by
$\mathscr G_{AB}$
\begin{equation}
  d\mu_{\mathscr G}
  =
  \sqrt{\mathscr M(q)}\,d^Nq.
  \label{eq:invariant-configuration-volume}
\end{equation}
Since
\begin{equation}
  \sqrt{\widetilde{\mathscr M}(Q)}\,d^NQ
  =
  \sqrt{\mathscr M(q)}\,d^Nq,
  \label{eq:measure-invariance}
\end{equation}
the differential operator
\begin{equation}
  \Delta_{\mathscr G}
  =
  \frac{1}{\sqrt{\mathscr M}}
  \partial_A
  \left(
    \sqrt{\mathscr M}\,
    \mathscr G^{AB}\partial_B
  \right)
  \label{eq:configuration-laplacian-definition}
\end{equation}
is a scalar operator
\begin{equation}
  \left(\Delta_{\widetilde{\mathscr G}}\widetilde\Psi\right)(Q)
  =
  \left(\Delta_{\mathscr G}\Psi\right)(q(Q)).
  \label{eq:laplace-beltrami-natural}
\end{equation}
Expanding Eq. \eqref{eq:configuration-laplacian-definition}, one obtains
\begin{equation}
  \Delta_{\mathscr G}
  =
  \mathscr G^{AB}\partial_A\partial_B
  +
  B^A_{\rm LB}\partial_A,
  \label{eq:laplace-beltrami-expanded}
\end{equation}
where
\begin{equation}
  B^A_{\rm LB}
  =
  \frac{1}{\sqrt{\mathscr M}}
  \partial_B
  \left(
    \sqrt{\mathscr M}\,
    \mathscr G^{BA}
  \right).
  \label{eq:lb-first-order-term}
\end{equation}
Thus the Laplace--Beltrami first-order term is sufficient for
field-redefinition covariance.

It remains to prove uniqueness. Suppose another operator $\hat T$
of the form \eqref{eq:general-second-order-ordering} satisfies the
same covariance condition and has the same principal symbol. Subtracting
the Laplace--Beltrami operator gives
\begin{equation}
  \hat T-\hat T_{\rm LB}
  =
  -\frac{\hbar^2}{2}
  \left(
    V^A\partial_A+C
  \right),
  \qquad
  V^A:=B^A-B^A_{\rm LB}.
  \label{eq:difference-from-lb}
\end{equation}
Since both $\hat T$ and $\hat T_{\rm LB}$ are scalar
differential operators, $V^A\partial_A$ must itself be a scalar
first-order differential operator. Therefore $V^A$ must be a vector
field on configuration space, and by the second condition it must be a
natural vector field constructed only from $\mathscr G_{AB}$.

Naturality alone, however, does not force $V^A$ to vanish. No nonzero
vector field can be built algebraically from the metric, but once
derivatives of $\mathscr G_{AB}$ are admitted the curvature supplies
natural candidates, the simplest being
\begin{equation}
  V^A
  =
  \xi\,\mathscr G^{AB}\partial_B R[\mathscr G],
  \label{eq:curvature-vector-candidate}
\end{equation}
with $R[\mathscr G]$ the scalar curvature of the kinetic metric and
$\xi$ a real constant. Candidates of this kind are the
configuration-space form of the curvature ambiguities long known in the
quantization of systems on curved spaces
\cite{DeWitt1957DynamicalTheory}. It is the Hermiticity condition that
removes them. For wavefunctions of compact support, integrating by
parts against the invariant measure of Eq.
\eqref{eq:induced-inner-product} gives
\begin{equation}
  \begin{aligned}
    &\left\langle
      \Psi,
      \left(V^A\partial_A+C\right)\Phi
    \right\rangle
    \\
    &\quad=
    \left\langle
      \left(
        -V^{A*}\partial_A
        -
        (\mathrm{div}_{\mathscr G}V)^*
        +
        C^*
      \right)\Psi,
      \Phi
    \right\rangle,
    \\
    &\mathrm{div}_{\mathscr G}V
    \equiv
    \frac{1}{\sqrt{\mathscr M}}
    \partial_A
    \left(\sqrt{\mathscr M}\,V^A\right).
  \end{aligned}
  \label{eq:first-order-adjoint}
\end{equation}
Symmetry of $\hat T$, and hence of the difference operator,
requires the first-order coefficients on the two sides to agree, so
$V^A=-V^{A*}$ and $V^A$ must be purely imaginary. This already
eliminates every real candidate such as Eq.
\eqref{eq:curvature-vector-candidate}. Invariance under the conjugation
$\Psi\mapsto\Psi^*$ requires $V^A$ to be real. The two conditions
together give $V^A=0$ on $\mathcal C$, and therefore
$B^A=B^A_{\rm LB}$, which proves uniqueness of the first-order part.
Comparing the zeroth-order coefficients in Eq.
\eqref{eq:first-order-adjoint} with $V^A=0$ then shows that $C$ must be
real.

However, the scalar term $C(q)$ is different. Field-redefinition
covariance allows it because a scalar transforms as
\begin{equation}
  \widetilde C(Q)=C(q(Q)).
  \label{eq:scalar-term-transformation}
\end{equation}
Hermiticity and conjugation invariance only require it to be real. In
particular, the curvature potential
\begin{equation}
  C(q)
  =
  \xi R[\mathscr G],
  \label{eq:curvature-scalar-candidate}
\end{equation}
satisfies the first three conditions for any real $\xi$, and adding it
to $\hat T_{\rm min}$ produces precisely the one-parameter family of
covariant Hermitian operators familiar from quantization on curved
configuration spaces
\cite{HawkingPage1986Ordering,DeWitt1957DynamicalTheory}. Thus
covariance, naturality, and Hermiticity fix the first-order part of the
operator but not an additional scalar quantum potential. It is the
minimality condition given by Eq. \eqref{eq:minimality-condition} that
removes this freedom.
Since
\begin{equation}
  \Delta_{\mathscr G}1=0,
  \label{eq:laplacian-annihilates-constants}
\end{equation}
we have
\begin{equation}
  \hat T1
  =
  -\frac{\hbar^2}{2}C(q).
  \label{eq:scalar-term-on-constant}
\end{equation}
Therefore $\hat T1=0$ implies
\begin{equation}
  C(q)=0.
  \label{eq:C-vanishes}
\end{equation}
Hence the unique minimal field-redefinition-covariant ordering of the kinetic Hamiltonian is
\begin{equation}
  \hat T_{\rm min}
  =
  -\frac{\hbar^2}{2}\Delta_{\mathscr G}.
  \label{eq:unique-minimal-lb-final}
\end{equation}
Therefore, for a general tensor field, first performing an invertible
field redefinition $\Phi\mapsto\chi(\Phi)$ and then quantizing gives
the same quantum theory as first quantizing and then rewriting the
result in the $\chi$ variables if and only if the kinetic term is
ordered by the Laplace--Beltrami operator associated with the kinetic
metric, up to non-minimal scalar quantum potentials. Imposing minimality
removes those potentials and gives the unique ordering of Eq.
\eqref{eq:minimal-laplace-beltrami-operator}.

\end{proof}

\begin{corollary}[Minimality removes the curvature-potential ambiguity]
\label{cor:xi-ambiguity}
Assume only the first three conditions of Theorem \ref{thm:ordering},
namely covariance under field redefinitions, naturality, and symmetry
together with conjugation invariance. Then the admissible kinetic
operators form the family
\begin{equation}
  \hat T_C
  =
  -\frac{\hbar^2}{2}
  \left(
    \Delta_{\mathscr G}
    +
    C[\mathscr G]
  \right),
  \label{eq:admissible-kinetic-family}
\end{equation}
where $C[\mathscr G]$ is an arbitrary real scalar built from the
kinetic metric and its derivatives, of which the curvature potential
$C=\xi R[\mathscr G]$ of Eq. \eqref{eq:curvature-scalar-candidate} is
the lowest-order representative, higher curvature invariants entering
with dimensionful couplings
\cite{DeWitt1957DynamicalTheory,HawkingPage1986Ordering}. Every member
of this family is separately covariant, so for each fixed $C$ first
quantizing and then performing an invertible field redefinition agrees
with redefining first and quantizing afterwards. Commutativity of
quantization with field redefinitions therefore holds for every value
of $\xi$ and cannot by itself select one. Each nonzero choice of $C$
also does more than reorder the classical kinetic term, since it adds
an interaction of order $\hbar^2$ that is absent from the classical
Hamiltonian and thereby defines \textit{a different quantum theory}
rather than an alternative ordering of the same one. The selection is
made by the minimality condition of Eq. \eqref{eq:minimality-condition}, which is
also what makes the assignment of a quantum kinetic operator to the
classical kinetic term single valued. Since $\Delta_{\mathscr G}1=0$,
\begin{equation}
  \hat T_C\,1
  =
  -\frac{\hbar^2}{2}\,C[\mathscr G],
  \label{eq:family-on-constants}
\end{equation}
so $\hat T_C\,1=0$ holds precisely when $C$ vanishes identically.
The condition already needed so that quantizing before or after a
relabeling of the configuration variables yields one and the same
quantum theory, with no residual family left over, therefore also sets
$\xi=0$, and it removes every higher curvature scalar in the same
stroke.
\end{corollary}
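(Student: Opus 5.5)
The plan is to start from the general operator of Eq.~\eqref{eq:general-second-order-ordering}. I would first determine which first-order and zeroth-order coefficients survive the first three conditions, and then apply minimality. The first step is to show that the Laplace--Beltrami operator of Eq.~\eqref{eq:minimal-laplace-beltrami-operator} is itself admissible.
\begin{itemize}
\item It is covariant, because $\sqrt{\mathscr M}\,d^Nq$ is invariant and the divergence $\frac{1}{\sqrt{\mathscr M}}\partial_A(\sqrt{\mathscr M}\,\cdot)$ of the vector field $\mathscr G^{AB}\partial_B\Psi$ is coordinate independent.
\item It is built only from $\mathscr G_{AB}$.
\item It is symmetric with respect to Eq.~\eqref{eq:induced-inner-product}, by Green's identity on compactly supported wavefunctions.
\item It is real, so it commutes with $\Psi\mapsto\Psi^*$.
\item It annihilates constants.
\end{itemize}
This settles existence. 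The rest of the argument is uniqueness.

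For uniqueness I would subtract: $\hat T-\hat T_{\rm min}=-\tfrac{\hbar^2}{2}(V^A\partial_A+C)$ with $V^A=B^A-B^A_{\rm LB}$. I would use the chain-rule identity Eq.~\eqref{eq:bare-second-derivative-noncovariance} to argue that the non-tensorial second-derivative terms have already been absorbed by $B^A_{\rm LB}$. Covariance of the difference then forces $V^A$ to transform as a genuine vector field and $C$ as a scalar.

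Next I would impose the symmetry condition. Integrating $\langle\Psi,V^A\partial_A\Phi\rangle$ by parts against $\sqrt{\mathscr M}$ produces $-\langle V^{A*}\partial_A\Psi+(\mathrm{div}_{\mathscr G}V^*)\Psi,\Phi\rangle$. Matching the first-order parts of $(V^A\partial_A+C)^\dagger$ and $V^A\partial_A+C$ gives $V^{A}=-V^{A*}$. Conjugation invariance requires real coefficients, so $V^A=0$. With $V^A=0$, the zeroth-order comparison leaves only the requirement that $C$ be real.

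Finally, minimality combined with $\Delta_{\mathscr G}1=0$ gives $\hat T1=-\tfrac{\hbar^2}{2}C=0$, so $C\equiv0$. This eliminates the $\xi R[\mathscr G]$ family and proves the claim.

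The step I expect to be delicate is the symmetry argument. A naive symmetry computation only gives $\mathrm{Re}\,V^A=0$ together with a relation $C-C^*=-\mathrm{div}\,V^*$. It is the conjugation-invariance hypothesis that really kills $V^A$, and it must be applied before one concludes anything about $C$. I would also be careful that symmetry is tested only on a dense domain of compact support, so that no boundary terms arise. Because of this, the half-line boundary conditions used in the main text, such as Eq.~\eqref{eq:schw-dewitt-boundary-condition}, play no role in the theorem itself.
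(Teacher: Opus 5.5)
Your proposal is correct and follows essentially the same route as the paper, which obtains the corollary from the uniqueness part of the proof of Theorem~\ref{thm:ordering}: you subtract the Laplace--Beltrami operator, symmetry forces $V^A$ to be purely imaginary, conjugation invariance forces it to be real, and the surviving real scalar $C$ is then removed by minimality through $\hat T_C\,1=-\tfrac{\hbar^2}{2}C$. The one point you leave implicit is the converse inclusion behind the corollary's claim that the admissible operators \emph{form} the family, namely that every $\Delta_{\mathscr G}+C$ with $C$ a real scalar built from $\mathscr G_{AB}$ (which is where naturality enters) is covariant, symmetric and conjugation invariant; this is immediate, because multiplication by such a $C$ commutes with pull-back and is a real symmetric operator.
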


\begin{remark}[Indefinite kinetic metrics]
\label{rem:indefinite-kinetic-metric}
Nothing in the argument requires $\mathscr G_{AB}$ to be positive
definite. The invariant measure is built from
$\left|\det\mathscr G_{AB}\right|$, the inner product of Eq.
\eqref{eq:induced-inner-product} remains positive definite on
wavefunctions regardless of the signature of $\mathscr G_{AB}$, and the
integrations by parts leading to $V^A=0$ never invert the metric on a
null direction. The operator $\Delta_{\mathscr G}$ is then of
hyperbolic rather than elliptic type but the uniqueness statement is
unaffected. The theorem therefore applies verbatim to indefinite
kinetic metrics and in particular to the DeWitt-type supermetric of
the gravitational configuration space used in Section
\ref{sec:quantization-of-gravity}.
\end{remark}

\begin{remark}[Extension to the full field space]
\label{rem:field-space-extension}
The conclusion extends to the full field theory without the
introduction of any ultraviolet regulator, because the gravitational
kinetic metric of Section \ref{sec:quantization-of-gravity} is
ultralocal, coupling no two distinct points of $\Sigma_\lambda$. For
such a metric the invariant density and the Laplace--Beltrami
operator decompose as
\begin{equation}
  \sqrt{\mathscr M}
  =
  \prod_{x}\sqrt{\mathscr M_x\!\left(q(x)\right)},
  \qquad
  \Delta_{\mathscr G}
  =
  \sum_{x}\Delta_x,
  \label{eq:ultralocal-decomposition}
\end{equation}
where $\Delta_x$ is the Laplace--Beltrami operator of the
single-point block of $\mathscr G_{AB}$, the factors of the density
at all other points cancelling between the $1/\sqrt{\mathscr M}$ and
the $\sqrt{\mathscr M}$ of Eq.
\eqref{eq:configuration-laplacian-definition}. Consider now the
cylinder wavefunctionals, those depending on the field through its
values at finitely many points $x_1,\dots,x_k$. Every block
$\Delta_x$ with $x$ outside this set annihilates such a functional
so the action of $\Delta_{\mathscr G}$ on it reduces to the finite
sum $\Delta_{x_1}+\dots+\Delta_{x_k}$ of finite-dimensional
operators, each falling under the theorem applied to its own block.
Enlarging the set of points leaves this action unchanged and thus the
blocks form a family consistent under inclusion and define
$\Delta_{\mathscr G}$ on the linear span of all cylinder
wavefunctionals with no further choices. The ordering is therefore
fixed on the full field space by the finite-dimensional theorem
alone and at no stage does a divergent coefficient appear that a
regulator would be needed to control.
The inner product of Eq.
\eqref{eq:induced-inner-product} is evaluated in the same blockwise
manner; for two cylinder wavefunctionals depending on the same
points it factorizes into the finite integral over their common
block times a formal product of single-point volumes over the
remaining points, a factor that is the same for every state and
therefore cancels in each normalized matrix element. Fixing this
common factor amounts to choosing the overall normalization of the
functional measure, a choice logically independent of the ordering,
which is fixed block by block whatever normalization is adopted.
\end{remark}

\begin{remark}[Absence of the Tsamis--Woodard obstruction]
\label{rem:tsamis-woodard}
Tsamis and Woodard have argued that in canonical quantum gravity the
factor-ordering problem cannot even be posed before the theory is
regulated \cite{TsamisWoodard1987Regulated}. Since Theorem
\ref{thm:ordering} fixes an ordering with no regulator in sight, we
should explain why their argument, which is correct in its own
setting, does not apply here. In the ADM formulation the objects to
be ordered are the momentum and Hamiltonian constraints, of the
schematic form
\begin{equation}
  \begin{aligned}
    \mathcal H_i(x)
    &=
    -2\,g_{ij}(x)\,p^{jk}{}_{;k}(x),
    \\
    \mathcal H(x)
    &\sim
    \frac{1}{\sqrt{g}}\,
    \mathbb G_{ijkl}[g]\,p^{ij}p^{kl}
    \\
    &\quad+
    \sqrt{g}\,R[g],
  \end{aligned}
  \label{eq:tw-adm-constraints}
\end{equation}
with the precise coefficients of Eqs. (2.8) of Ref.
\cite{TsamisWoodard1987Regulated}, and physical states are defined as
their common kernel, $\hat{\mathcal H}_i\Psi=0$ and $\hat{\mathcal
H}\Psi=0$. For these conditions to admit nonzero solutions the
quantum constraint algebra must close in the Dirac sense,
\begin{equation}
  [\hat\phi_A,\hat\phi_B]
  =
  i\hbar\,\hat C_{AB}{}^{C}\,\hat\phi_C,
  \label{eq:tw-dirac-consistency}
\end{equation}
with the structure operators $\hat C_{AB}{}^{C}$ standing to the left
of the constraints, and the factor-ordering problem of that
formulation is precisely the search for an ordering of Eq.
\eqref{eq:tw-adm-constraints} realizing Eq.
\eqref{eq:tw-dirac-consistency}. Verifying the closure requires
commuting constraints that are nonlinear products of operator-valued
distributions at coincident points and reassembling the result into
constraints through distributional identities such as
$f(x)g(y)\,\delta^3(x-y)=f(x)g(x)\,\delta^3(x-y)$, identities that
hold for smooth test functions and are undefined when $f$ and $g$ are
themselves local operators. Tsamis and Woodard show that applying
such identities in different orders yields mutually inconsistent
results, that the ordering ambiguities of the constraints surface as
undetermined terms proportional to $\delta^3(0)$ and
$[\delta^3(0)]^2$, and that the ordering proposed by Komar dissolves
under this test; regularization must therefore precede ordering
wherever Dirac consistency is the criterion of success
\cite{TsamisWoodard1987Regulated,Komar1979Ordering,FriedmanJack1988Constraints}.
Every link of this chain is forged from the first-class constraints.
In the present theory there is no Hamiltonian constraint and no
momentum constraint, as Section \ref{sec:quantization-of-gravity}
establishes; the only constraints are the projection pair of Eq.
\eqref{eq:qgConstraints}, which are linear in the canonical
variables, carry the invertible bracket matrix of Eq.
\eqref{eq:second-class-matrix} built from the observer structure
alone, and are imposed strongly after passage to the Dirac bracket,
so that they restrict the argument of the wavefunctional instead of
acting as operator equations whose mutual consistency must be
verified. There is consequently no quantum algebra whose closure
could fail, no structure operators whose position matters, and no
product of coincident constraints to define, and the criterion that
makes regularization a logical prerequisite of ordering in the ADM
scheme never arises. The ordering question that does remain, the one
for the kinetic term, is settled by the theorem on each
finite-dimensional block, and Remark
\ref{rem:field-space-extension} supplies the only care that
coincident operator products still require by defining the
functional Laplacian blockwise on cylinder wavefunctionals.
\end{remark}

\noindent For example, let us work with a field $\phi(t)$ whose
Lagrangian is

\begin{equation}
  L_\phi = \frac{1}{2}e^{2\phi}\dot{\phi}^2 - \frac{1}{2}\Omega^2e^{2\phi},
\end{equation}

\noindent whose non-linearities can be removed by the field-redefinition
$\chi \equiv e^{\phi}$, which gives

\begin{equation}
  L_\chi = \frac{1}{2}\dot{\chi}^2 - \frac{1}{2}\Omega^2\chi^2.
\end{equation}

\noindent Their Hamiltonians are

\begin{equation}
  H_\phi = \frac{1}{2}e^{-2\phi}p^2_\phi + \frac{1}{2}\Omega^2e^{2\phi},
  \qquad
  H_\chi = \frac{1}{2}p^2_\chi + \frac{1}{2}\Omega^2\chi^2.
\end{equation}

\noindent Classically, these two are the same Hamiltonians and thus
describe the same dynamics. However, if we try to quantize both we may
find an ordering ambiguity for $H_\phi$, yet no ambiguity for $H_\chi$.
More notably, we would have

\begin{equation}
  \hat{H}_\chi = \frac{-\hbar^2}{2}\partial^2_\chi
  + \frac{1}{2}\Omega^2\hat{\chi}^2,
\end{equation}

\noindent for $\hat{H}_\chi$ but

\begin{equation}
  \hat{H}_\phi = \frac{-\hbar^2}{2}e^{-2\hat{\phi}}
  \left(
    \partial^2_\phi + s\partial_\phi + b
  \right)
  + \frac{1}{2}\Omega^2e^{2\hat{\phi}},
\end{equation}

\noindent and transforming from $\hat{\phi}$ to $\hat{\chi}$ gives

\begin{equation}
  \hat{H}_{\phi\mapsto\chi} = \frac{-\hbar^2}{2}
  \left[
    \partial^2_\chi + \frac{1 + s}{\hat{\chi}}\partial_\chi + \frac{b}{\hat{\chi}^2}
  \right]
  + \frac{1}{2}\Omega^2\hat{\chi}^2,
\end{equation}

\noindent but $\hat{H}_\chi$ and $\hat{H}_{\phi\mapsto\chi}$ should
coincide, which only occurs when $s = -1$ and $b=0$; i.e., by demanding
covariance and the minimality condition $\hat{T}1 = 0$ in
Theorem \ref{thm:ordering}. If we do not impose it the underlying
physics shall change depending on whether we work with $\phi$ or a
redefinition $\chi$, which is difficult to justify physically.
The parameter $b$ survives in one dimension of the scalar freedom
$C(q)$ of Theorem \ref{thm:ordering}. With a single configuration
variable the scalar curvature of the kinetic metric vanishes
identically, so the curvature potential of Eq.
\eqref{eq:curvature-scalar-candidate} cannot be seen in reduced
models and the entire content of minimality is the vanishing of $b$.
On the full gravitational configuration space of Section
\ref{sec:quantization-of-gravity} the curvature is nonzero and
Corollary \ref{cor:xi-ambiguity} shows that the same condition
removes the $\xi R$ freedom there.
 \end{appendices}

\bibliography{references}

\end{document}